\newtheorem{theorem}{Theorem}[section]
\newtheorem{lemma}[theorem]{Lemma}
\newtheorem{definition}[theorem]{Definition}
\newtheorem{claim}[theorem]{Claim}
\newcommand{\F}{\mathbb{F}}
\newcommand{\C}{\mathbb{C}}
\newcommand{\Tr}{\mathsf{Tr}}
\newcommand{\TR}{\mathbf{TR}}
\newcommand{\Z}{\mathbb{Z}}
\newcommand{\rank}{\mathsf{rank}}
\renewcommand{\deg}{\mathsf{deg}}
\newcommand{\pdeg}{\mathsf{pdeg}}
\newcommand{\poly}{\mathrm{poly}}
\newcommand{\tildeG}{\widetilde{G}}
\newcommand{\Fqbar}{\overline{\F}_q}
\newcommand{\mult}{\mathsf{mult}}
\begin{document}
\title{Recovering polynomials over finite fields\\from noisy character values}
\author{Swastik Kopparty\thanks{Department of Mathematics and Department of Computer Science, University of Toronto, Canada. Research supported by an NSERC Discovery Grant.
Email: \texttt{swastik.kopparty@utoronto.ca}}
}
\date{}
\maketitle

\begin{abstract}
Let $g(X)$ be a polynomial over a finite field ${\mathbb F}_q$ with degree $o(q^{1/2})$, and let $\chi$ be the quadratic residue character. We give a polynomial time algorithm to recover $g(X)$ (up to perfect square factors) given the values of $\chi \circ g$ on ${\mathbb F}_q$, with up to a constant fraction of the values having errors. This was previously unknown even for the case of no errors.

We give a similar algorithm for additive characters of polynomials over fields of characteristic $2$. This gives the first polynomial time algorithm for decoding dual-BCH codes of polynomial dimension from a constant fraction of errors.

Our algorithms use ideas from Stepanov's polynomial method proof of the classical Weil bounds on character sums, as well as from the Berlekamp-Welch decoding algorithm for Reed-Solomon codes. A crucial role is played by what we call {\em pseudopolynomials}: high degree polynomials, all of whose derivatives behave like low degree polynomials on ${\mathbb F}_q$.

Both these results can be viewed as algorithmic versions of the Weil bounds for this setting.  
\end{abstract}

\section{Introduction}

This paper is about efficient algorithms for some noisy polynomial interpolation problems over finite fields, where only a small amount of information about each evaluation is available.

Let $q$ be an odd prime power, and let $\chi : \F_q \to \{0, \pm 1\}$ be the quadratic residue character (which takes values $0$ on $0$, $+1$ on nonzero perfect squares, $-1$ on everything else). Let $g(X) \in \F_q[X]$ be an unknown polynomial of degree at most $d \ll q$. We consider the problem of computing $g$ given a noisy version of $\chi \circ g : \F_q \to \{ 0, \pm 1\}$.
Concretely, given $r: \F_q \to \{0,\pm 1\}$, such that for $99\%$ of the $\alpha \in \F_q$ we have $r(\alpha) = \chi(g(\alpha))$, can we efficiently recover $g(X)$?

There is an immediate obstacle to recovering $g$: it may not be uniquely specified by $r$. This arises because $\chi$ cannot detect perfect square factors of $g$. Indeed, if $g_1(X), g_2(X) \in \F_q[X]$ of degree at most $d$ satisfy $g_2(X) = g_1(X) \cdot h^2(X)$ for $h(X) \in \F_q[X]$, then
whenever $h(\alpha) \neq 0$:
$$\chi \circ g_2(\alpha) = \left(\chi \circ g_1(\alpha)\right) \cdot \left(\chi \circ h^2 (\alpha) \right) = \chi \circ g_1(\alpha),$$ and so the Hamming distance $\Delta(\chi \circ g_1, \chi \circ g_2) \leq d \ll q$.

The deep and fundamental Weil bounds for character sums imply that this is essentially the only obstacle to uniqueness. It states that if $f(X)$ is a polynomial of degree at most $d$ that is not of the form $\lambda \cdot h^2(X)$ for $\lambda \in \F_q$ and $h(X) \in \F_q[X]$,
then:
$$ \left|\sum_{\alpha \in \F_q} \chi(f(\alpha)) \right| \leq 2 d \sqrt{q}.$$
(In words, if $d$ is small, a random $\alpha \in \F_q$ has $f(\alpha)$ being a perfect square with probability approximately $1/2$).

Thus if $g_1(X), g_2(X) \in \F_q[X]$ are distinct, monic (the leading coefficient is $1$), squarefree (no irreducible factor appears with multiplicity $\geq 2$), and of degree at most $o(\sqrt{q})$, then $\chi \circ g_1$ and $\chi \circ g_2$ are not only distinct, they are almost orthogonal:
$$  \left|\langle \chi \circ g_1, \chi \circ g_2 \rangle \right| = \left|\sum_{\alpha \in \F_q} \chi(g_1(\alpha)) \cdot \chi(g_2(\alpha)) \right| = \left|\sum_{\alpha \in \F_q} \chi((g_1 \cdot g_2)(\alpha)) \right| \leq O(d \sqrt{q}) = o(q).$$
This implies that as $g$ varies among the monic, squarefree polynomials of degree at most $d = o(\sqrt{q})$,
the vectors $\chi \circ g \in \{0, \pm1\}^q$ have all pairwise Hamming distances $\Delta( \chi \circ g_1 , \chi \circ g_2)$ at least $\left(\frac{1}{2} - \Omega(\frac{d}{\sqrt{q}}) \right) \cdot q$. This defines an error-correcting code, and makes the problem of finding $g$ given a noisy version of $\chi \circ g$ that of unique decoding of this error-correcting code.

We give a polynomial time algorithm for recovering such $g$ from $\chi \circ g$ in the presence of a constant fraction of errors.

\noindent
{\bf Theorem 1 }

{\em Let $\epsilon > 0$.
Let $g(X) \in \F_q[X]$ be monic, squarefree, with $\deg(g) \leq \epsilon \sqrt{q}$.

Suppose $r: \F_q \to \{0, \pm 1 \}$ is such that:
$$ \Delta( r, \chi \circ g) < \left(\frac{1}{8} - \epsilon \right) q.$$

There is a $\poly(q)$ time algorithm, which 
when given $r$, computes $g(X)$.
}

If $g(X)$ is a general monic polynomial of degree $d \leq \epsilon \sqrt{q}$ and $r$ is such that
$\Delta(r, \chi \circ g) < (\frac{1}{8}-\epsilon)q$, then this algorithm will find the unique monic squarefree polynomial $\bar{g}(X)$ of degree at most $d$ such that
$\Delta(\chi \circ g, \chi \circ \bar{g}) \leq d$.

Such an algorithm was previously unknown even for the case of no errors. The previously fastest known algorithm was given by Russell and Shparlinski~\cite{RS04} and ran in $O( d^2 \cdot q^{d+o(1)})$ time. This algorithm made use of the Weil bounds as a black box, and improved upon the trivial $O(q^{d+1})$ time brute force search. 

In~\cite{vDHI}, van Dam, Hallgren and Ip gave efficient {\em quantum} algorithms to solve the $d = 1$ case in time $\poly(\log q)$, which is sublinear in the input size. Russell and Shparlinski~\cite{RS04} showed that the general $d$ case can be solved with $O(d)$ quantum queries, but as far as we understand this does not lead to a fast quantum algorithm.
Further results in this direction were given in \cite{BGKS12,IKSSS18}.

We also prove an analogous result for additive characters over fields $\F_q$ of characteristic $2$. Here the underlying codes are the classical dual-BCH codes, which are the duals of the classical BCH codes~\cite{BC60,H59}.

Let $q = 2^b$, and let $\Tr: \F_q \to \F_2$ be the $\F_2$-linear field trace map given  by $\Tr(\beta) = \sum_{i=0}^{b-1} \beta^{2^i}$. Additive characters $\psi$ of $\F_q$ are closely related to 
$\Tr$ (they are of the form $\psi(\beta) = (-1)^{\Tr(a \cdot \beta)}$), and for notational convenience, we formulate our problem in terms of $\Tr$ instead of $\psi$. 

Let $g(X) \in \F_q[X]$ be an unknown polynomial of degree at most $d$. Suppose for each $\alpha \in \F_q$, we are given a value $r(\alpha)$, such that for $99\%$ of the $\alpha \in \F_q$, $r(\alpha) = \Tr(g(\alpha))$. Can we efficiently recover $g(X)$?

Again, there is an immediate obstacle coming from the fact that 
$r$ (or even $\Tr \circ g$) does not uniquely identify $g(X)$. This arises because
$\Tr$ cannot distinguish between a polynomial and its square. Indeed,
if $g_1(X), g_2(X)$ are such that $g_2(X) = g_1(X) + h(X) + h^2(X)$ for some $h(X) \in \F_q[X]$, then\footnote{This is because $\Tr(\beta+ \beta^2) = 0$ for all $\beta \in \F_q$.} for all $\alpha \in \F_q$:
$$\Tr \circ g_2(\alpha) = \left(\Tr \circ g_1 (\alpha) \right) + \left(\Tr \circ (h + h^2)(\alpha)\right) = \Tr \circ g_1(\alpha),$$
and so $\Tr \circ g_1 = \Tr \circ g_2$.

The Weil bound for additive character sums tell us that this is the only obstacle to uniqueness. It says that if $f(X) \in \F_q[X]$ is of degree at most $d$ and not of the form $\lambda + h(X) + h(X)^2$ for $\lambda \in \F_q$ and $h(X) \in \F_q[X]$, then:
$$ \left| \sum_{\alpha \in \F_q} (-1)^{\Tr(f(\alpha))}  \right|  \leq (d-1)\sqrt{q}.$$
(In words, if $d$ is small, a random $\alpha \in \F_q$ has  $\Tr(f(\alpha))$ being roughly equally likely to be $0$ and $1$).

Thus if $g_1(X), g_2(X) \in \F_q[X]$ are distinct polynomials with only monomials of odd degree, and of degree $d = o(\sqrt{q})$,
then $\Tr \circ g_1$ and $\Tr \circ g_2$ are not only distinct, they are almost as far apart as random vectors:
$$ \left| \sum_{\alpha \in \F_q}  (-1)^{\Tr(g_1(\alpha) - g_2(\alpha))} \right| = \left| \sum_{\alpha \in \F_q}  (-1)^{\Tr((g_1- g_2)(\alpha))} \right| \leq O(d \sqrt{q}) = o(q),$$
and so:
$$ \Delta( \Tr \circ g_1 , \Tr \circ g_2) \in \left(\frac{1}{2} \pm O(\frac{d}{ \sqrt{q}})\right) \cdot q .$$

This gives us the definition of the dual BCH code. For an odd integer $d$, codewords are indexed by polynomials $g(X)$ of degree at most $d$ with only odd degree monomials:
$$ g(X) = a_1 X + a_3 X^3 + \ldots + a_d X^d,$$
and the codewords themselves are the functions $\Tr \circ g : \F_q \to \F_2$.
This is a linear code over $\F_2$, and has length $q$ and $\F_2$-dimension $\frac{d+1}{2} \log_2 q$.
The problem of finding $g$ given a noisy version of $\Tr \circ g$ is thus the problem of unique decoding of this dual BCH code.

We give a polynomial time algorithm for recovering $g$ from $\Tr \circ g$ in the presence of a constant fraction of errors.

\noindent
{\bf Theorem 2 }

{\em Let $\epsilon > 0$.
Let $g(X) \in \F_q[X]$ be a polynomial with only odd degree monomials, and with $\deg(g) \leq \epsilon \sqrt{q}$.

Suppose $r: \F_q \to \F_2$ is such that:
$$ \Delta( r, \Tr \circ g) < \left(\frac{1}{8} - \epsilon \right) q.$$

There is a $\poly(q)$ time algorithm, which 
when given $r$, computes $g(X)$.
}

If $g(X)$ is a general polynomial of degree $d \leq \epsilon \sqrt{q}$
with $0$ constant term, and $r$ is such that
$\Delta(r, \Tr \circ g) < (\frac{1}{8}-\epsilon)q$, then this algorithm will find the unique polynomial $\bar{g}(X)$ of degree at most $d$ with only odd degree monomials such $\Tr \circ g = \Tr \circ \bar{g}$.

There were several previously known results for this problem (namely, the problem of decoding dual BCH codes).
\begin{itemize}
 \item For all $d < O(\sqrt{q})$, if there are no errors at all, then $g(X)$ can be found in time $\poly(q)$. This is because the map $g(X) \mapsto \Tr \circ g$ is $\F_2$-linear; so $g(X)$ can be found from $\Tr \circ g$ by solving a system of linear equations over $\F_2$.
 \item For all $d < O(\sqrt{q})$, if the number of errors is at most $O(\frac{q}{d})$, then there is a $\poly(q)$ time algorithm to recover $g(X)$. This is a consequence of efficient decoding algorithms for Reed-Muller codes (multivariate polynomial codes over $\F_2^b$). The key fact is that the function $\Tr \circ g$ is a $b$-variate $\F_2$-polynomial of degree at most $\lceil \log_2 d \rceil$ when $\F_q$ is viewed as $\F_2^b$ via an $\F_2$-linear isomorphism.
 \item For small $d$, the work of Kaufman-Litsyn~\cite{KL-dBCH}, Kaufman-Sudan~\cite{KS-dBCH} and Kopparty-Saraf~\cite{KS-dBCH-list} gave {\em sublinear time} algorithms for decoding dual BCH codes from a constant fraction of errors. The running time of these algorithms is $O( d^d \cdot \poly(\log q))$. For $d = O(\frac{\log q}{\log\log q})$, they run in time $\poly(q)$, while for slightly smaller $d$ they run in time $q^{o(1)}$.
 
 These algorithms use the fact that there are many short linear dependencies between the coordinates of codewords; namely, that their duals have many low weight codewords. 
 Interestingly, they use decoding algorithms for BCH codes as a subroutine.
\end{itemize}

Theorem 2 gives the first algorithm to decode from a constant fraction of errors in 
$\poly(q)$ time for $d > \omega( \frac{\log q}{\log \log q})$.

Our results can be viewed as algorithmic versions of Weil-type bounds for character sums. To the best of our knowledge, the only known proof that the error-correcting codes underlying Theorem 1 (for $d \geq 3$) and Theorem 2 (for $d \geq \Omega(\log q)$) have distance $\Omega(q)$ go through the Weil bounds. Any algorithm that decodes these codes from $\Omega(q)$ errors must either use a Weil-type bound or prove a Weil-type bounds. Our results fall in the latter category.

We use ideas from Stepanov's polynomial method proof of the Weil bounds, as well as from the Berlekamp-Welch decoding algorithm for Reed-Solomon codes. A crucial role is played by what we call {\em pseudopolynomials}: high degree polynomials, all of whose derivatives behave like low degree polynomials on $\F_q$. 

The number of errors that these algorithms can handle falls short of the unique decoding radius of these codes, which is about $\left( \frac{1}{4} - \epsilon  \right) q$. It would be interesting to get efficient decoding from these many errors.

\subsection{Related Work}

The Weil bounds on character sums were proved by Weil~\cite{weil41,weil48,weil49} 
as a corollary of his proof of the Riemann hypothesis for curves over finite fields. They capture some very basic pseudorandomness phenomena in finite fields in a very strong quantitative form. For this reason, they have found extensive applications in the explicit constructions of pseudorandom objects.
We mention some notable applications in theoretical computer science: the universality of the Paley graphs by Graham and Spencer~\cite{GrahamSpencer}, the construction of epsilon biased sets by Alon, Goldreich, H{\aa}stad and Peralta~\cite{AGHP}, and the construction of affine extractors by Gabizon and Raz~\cite{GR05}.

Another remarkable application is the dual-BCH code, mentioned above. Dual-BCH codes are $\F_2$-linear codes of length $q$ with $q^{(d+1)/2}$ codewords and minimum distance at least $\left(\frac{1}{2} - \frac{d-1}{\sqrt{q}}) \right) \cdot q$.  For small $d$ these are unmatched codes: without the Weil bounds, there is no known proof of existence, for all constant $t$, of binary codes of length $n$ with $n^t$ codewords and minimum distance $(\frac{1}{2} - O_t(\frac{1}{\sqrt{n}})) \cdot n$. (A nonlinear binary code with similar strong parameters can also be made out of the quadratic residue character applied to squarefree polynomials over fields of odd characteristic; one just has to replace the small number of $0$ coordinates with either $+1$ or $-1$, arbitrarily.) This also yields $n^t$ unit vectors in $\mathbb R^n$ with pairwise inner products all at most $O(t/ \sqrt{n})$. This is an even smaller upper bound on the maximum inner product than what random vectors would achieve, $O( \sqrt{(\log n)/n})$.

The original proof of the Weil bounds used machinery from algebraic geometry. Many years later, Stepanov~\cite{Stepanov1969,Stepanov1970,StepanovICM1974} discovered an elementary proof of Weil-type bounds. This used a very clever version of what is now known as the polynomial method (with multiplicities) -- understanding a set of interest by first interpolating a polynomial vanishing on that set (with high multiplicity), and then studying that polynomial.  
Subsequently Schmidt~\cite{SchmidtBook}, Bombieri~\cite{BombieriWeil} and Stohr-Voloch~\cite{SV86} strengthened, simplified and generalized Stepanov's method greatly. See~\cite{SchmidtBook,HKT-Curves-Book,LN-book} for excellent introductions to this and related topics. Our algorithms are heavily influenced by ideas from Stepanov's method.

In recent years, a number of extensions of the Weil bounds for character sums were developed for applications in theoretical computer science and coding theory, see \cite{K11-powering,KL11-weil,CGS16-powering,KTY2,CDGM-traceAG} and ~\cite[Chapter 6]{K10-thesis}.

The problems we consider in this paper are a kind of noisy polynomial interpolation problem. This is an old and very well studied topic in coding theory and theoretical computer science. Our algorithms are also heavily influenced by the polynomial method (with multiplicities) techniques from this domain, especially the ``interpolation + algebra'' framework of the fundamental Berlekamp-Welch and Guruswami-Sudan decoding algorithms for Reed-Solomon codes~\cite{BW85,S96,GS99}. 

The Guruswami-Sudan algorithm solves a polynomial reconstruction problem of a very general type that nicely captures the problem of recovering polynomials from character values. We now describe this problem, known as the list-recovery problem for Reed-Solomon codes. In list-recovery of Reed-Solomon codes, for every $\alpha \in \F_q$ we are given a set $S_\alpha$ of size at most $\ell$, and we want to find all polynomials $f(X) \in \F_q[X]$ of degree at most $d$ with
$$| \{ \alpha \in \F_q \mid f(\alpha) \in S_\alpha \} | \geq 0.99 q.$$ 
The Guruswami-Sudan algorithm~\cite{GS99} efficiently finds all such $f(X)$ in time $\poly(q)$, provided $\ell < (0.99)^2 \frac{q}{d}$ (and in particular, proves that the number of such $f(X)$ is at most $\poly(q)$ ).
When we are given $r: \F_q \to \{0,\pm1\}$  with 
$\Delta(\chi \circ g, r) \leq 0.01q$ for some low degree $g(X)$, the problem of finding $g(X)$ translates into a list-recovery problem as follows.
For each $\alpha \in \F_q$, let $S_\alpha = \chi^{-1}(r(\alpha))$.
Then the set of all $f(X)$ of degree at most $d$ satisfying
$$| \{ \alpha \in \F_q \mid f(\alpha) \in S_\alpha \} | \geq 0.99 q,$$ 
contains $g(X)$ (and in fact all such $f(X)$ are of the form $\lambda \cdot g(X) \cdot h^2(X)$). Thus the problem of recovering $g$ is a special kind of list-recovery problem.

In this setting, $|S_\alpha| \leq \ell = (q-1)/2$, and $d = \omega(1)$, so we are outside the range of where the Guruswami-Sudan algorithm applies. Our result can be viewed as a way of using the extra algebraic structure of the $S_\alpha$ to nevertheless enable some kind of list-recovery.

Our algorithms use higher order derivative information about the polynomials being interpolated. In this sense, they are related in spirit to a recently flourishing line of work on decoding algorithms for multiplicity codes~\cite{RosenbloomTsfasman,Nielsen01,KSY14,GW13,multcodesurvey,multcode-ld,KRSW23,BHKS23,Tamo24,BHKS24,multcodefast,CZ25,Srivastava25,GHKSS-list,AHS26} (Multiplicity codes are error-correcting codes based on evaluations of polynomials and their derivatives).

The dual-BCH code $C$ with $d = \Theta(q^{0.1})$ 
is a binary code that has:
(1) block length $n$, (2) dimension $k = \Omega(n^{0.1} \log n)$,
(3) distance $\left(\frac{1}{2} - o(1) \right)n$, and
(4) dual distance (i.e., distance of $C^\perp$) $\Omega(n^{0.1})$.
Theorem 2 shows that it also has: (5) efficiently decoding from $\Omega(n)$ errors.

Explicit binary codes with distance $\frac{n}{2} - o(1)$, dual distance $n^{\Omega(1)}$,
and efficiently decodable from $\Omega(n)$ fraction errors
were not known until quite recently. The first such codes were constructed by Kopparty, Shaltiel, and Silbak~\cite{KSS19}.
These codes, known as {\em raw Reed-Solomon codes}, are closely related to dual-BCH 
codes; their codewords are formed by taking juxtapositions of related dual-BCH codewords, and their minimum distance is proved by appealing to the same Weil bounds. The efficient decoding algorithm for these codes, which can actually list-decode from $(\frac{1}{2}-o(1))n$ errors, is based on a suitable instantiation of the Guruswami-Sudan algorithm mentioned above.

If we relax the distance requirement on the code to simply be $\Omega(n)$, then one can even achieve this with dual distance $\Omega(n)$ using algebraic-geometric codes; this is an important result of Ashikhmin, Litsyn and Tsfasman~\cite{ALT} discovered in the context of quantum codes (see also Guruswami's appendix to~\cite{guruswami-shpilka-selfdual}).

\subsection{Techniques}

Let $q$ be odd.
Let $g(X)$ be a monic squarefree polynomial of degree $d = q^{0.1}$.
We first describe the algorithm to recover $g$ when we are given $\chi \circ g$ with no errors.

Let $\tildeG(X) = (g(X))^{(q-1)/2}$. 
Then\footnote{We now view the character $\chi$ as $\F_q$-valued.} $\chi \circ g (\alpha) = \tildeG(\alpha)$ for each $\alpha \in \F_q$.
Our input thus gives us $q$ evaluations of the polynomial $\tildeG(X)$. 
However $\widetilde{G}(X)$ has much higher degree $\approx d\frac{q}{2}$, and this is not enough to interpolate $\tildeG$ from degree considerations alone.

To motivate the algorithm, we begin by noting something striking about the derivative\footnote{This is just the standard derivative of polynomials, defined by $(X^i)' = i X^{i-1}$, and extended by linearity.} of $\tildeG(X)$. 
We calculate:
$$ \tildeG'(X) = \frac{q-1}{2} \cdot g(X)^{(q-3)/2} \cdot g'(X) = -\frac{1}{2}\cdot \frac{g'(X)}{g(X)} \cdot \tildeG(X).$$
Thus for all $\alpha \in \F_q$:
$$ \tildeG'(\alpha) = - \frac{1}{2}\cdot \frac{g'(\alpha)}{g(\alpha)} \cdot (\chi \circ g(\alpha)).$$
Let $S_0, S_1, S_{-1}$ be the subsets of $\F_q$ where $\tildeG$ takes values $0,1,-1$ respectively. We know that $S_0$ is tiny, and that $S_1, S_{-1}$ are both of size about $\frac{q}{2}$ (by the Weil bounds). 
The above calculation shows that there is a very low degree rational function
$h(X) = -\frac{1}{2} g'(X)/g(X)$, such that the huge degree polynomial $\tilde{G}'$ agrees with $h$ on $S_1$ and agrees with $-h$ on $S_{-1}$. This is a very special kind of behavior: not only is a huge degree polynomial becoming a low degree rational function on the large sets $S_1$ and $S_{-1}$, the two rational functions themselves are closely related. This phenomenon continues for the higher derivatives too. Our algorithm will be based on this.

First some small adjustments.
To avoid denominators, we pick an integer $M= O(d)$ and define $G(X) = (g(X))^{(q-1)/2 + M}$. Instead of standard derivatives, we will use Hasse derivatives, which work better over finite fields. Let $G^{[\ell]}(X)$ denote the $\ell$-th (Hasse) derivative of $G(X)$. 
Then we have the following fact: for each $\ell$ with $0 \leq \ell < M$, there is a polynomial $V_{\ell}(X)$ of degree at most $O(dM) \ll q$ such that:
\begin{align}
 \label{eq:GgV}
 G^{[\ell]}(\alpha) =  \left( \chi \circ g (\alpha) \right) \cdot V_\ell(\alpha)
\end{align}
for all $\alpha \in \F_q$.

This gives the first step of the algorithm: by solving a system of linear equations, we search for a polynomial $F(X)$ of degree at most $d \cdot ( (q-1)/2 + M)$, and polynomials $U_0(X), U_1(X), \ldots, U_{M-1}(X)$ of degree at most $O(dM)$, such that for all $\alpha \in \F_q$, $F^{[\ell]}(\alpha) = (\chi \circ g(\alpha)) \cdot U_{\ell}(\alpha)$.

We know these equations have a solution, since $G$ and the $V_\ell$ satisfy them.
But whatever solution $F$ and the $U_\ell$ the algorithm finds may not equal $G$ and the $V_\ell$. This could really happen: if $h_1(X)$ and $h_2(X)$ are very low degree polynomials, and $f(X)$ is of the form $ g(X) \cdot h_2(X)^2$, then taking $F(X)$ to be $h_1(X) \cdot (f(X))^{(q-1)/2 + M}$ also yields a valid solution to these equations.

The next step of the algorithm recovers $g(X)$ from $F(X)$ by factoring $F(X)$ and inspecting the factor multiplicities of $F(X)$. Let
$g(X) = \prod_{i} g_i(X)$ be the factorization of $g(X)$ into distinct irreducible factors. We show that the set of $g_i(X)$ is exactly the set of all irreducible factors of $F(X)$ whose factor multiplicity, after reducing mod $q$, is approximately $q/2$. This is clearly true if $F(X) = G(X)$. That it holds for any $F(X)$ found by the first step of the algorithm is the main step of the analysis.

The proof of this is quite indirect. First, using a polynomial interpolation argument, we show that there are special polynomials $A(X), B(X) \in \F_q[X]$ such that:
\begin{align}\label{introAFBG}
A(X) \cdot F(X) = B(X) \cdot G(X).
\end{align}
These special polynomials are what we call {\em pseudopolynomials}.
They are of the form $C(X) = \sum_{i} C_i(X) (X^q - X)^i$, with $C_i(X)$ having
low degree $k$ (much lower than $q$). The choice of this class of polynomials, as well as the idea of the interpolation argument, come from Stepanov's method.
Pseudopolynomials have two key properties that we use:
\begin{itemize}
 \item For each $\ell$, there is a polynomial $C_{\langle \ell \rangle}(X)$ of degree at most $k$ such that the $\ell$-th derivative $C^{[\ell]}$ and $C_{\langle \ell \rangle}$ agree on all points of $\F_q$. This is used in the interpolation argument, to get the identity~\eqref{introAFBG}.
 \item Every irreducible factor of $C(X)$ has factor multiplicity approximately a multiple of $q$. This ensures, using \eqref{introAFBG}, that the only factors of $F$ with multiplicity approximately $q/2$ mod $q$ are the factors of $G$ with multiplicity approximate $q/2$ mod $q$, as we want.
\end{itemize}
This completes the sketch of the proof in the error-free case.

To handle errors, we use the idea of the Berlekamp-Welch decoding algorithm for Reed-Solomon codes. However, instead of using an error-locator polynomial to zero out the errors, we have to use a high multiplicity error-locator pseudopolynomial $E(X)$. Then ensures that the derivatives of $E(X) \cdot G(X)$ also have a property 
similar to the property of $G(X)$ given in Equation~\eqref{eq:GgV}.

For the additive character case, the algorithm becomes slightly more complicated. 
To recover $g$ given a noisy version $r$ of $\Tr \circ g$, we set up a system of linear equations capturing some special low-degree property of the derivatives of the polynomial $G(X) = \sum_{i=0}^{b-1} g^{2^i}(X)$. Specifically, we search for nonzero polynomials $F(X)$ and $E(X)$, where $E(X)$ is a pseudopolynomial, such that there exist low degree polynomials $U_0(X), \ldots, U_{M-1}(X)$ with:
$$ F^{[\ell]}(\alpha) = E^{[\ell]}(\alpha) \cdot r(\alpha) + U_\ell(\alpha)$$
for all $\alpha \in \F_q$. If $E(X)$ is a high multiplicity error-locator pseudopolynomial and $F(X) = E(X) \cdot G(X)$, then the above equations are satisfied. However the solution found by the algorithm may not be this one.

From this solution to this system of equations, we try to recover $g(X)$; however, unlike in the case of the quadratic residue character, where the relevant problem was factoring polynomials and there was an off-the-shelf algorithm known, we have to work a little harder. Here we only manage to use the solution of the system of equations to extract the largest degree coefficient of $g(X)$. We can then peel it off and repeat, and thus get the whole of $g(X)$. The analysis again uses an interpolation argument involving pseudopolynomials.

Our presentation of the algorithms and their analyses in Sections~\ref{sec:chi} and~\ref{sec:tr} is done from scratch and does not explicitly use the language of pseudopolynomials. Later, in Section~\ref{sec:pseudopoly}, we formally define pseudopolynomials and the pseudodegree, state some basic properties, and explain some ideas from previous sections in a cleaner context. We also formulate Stepanov's proof of the Weil bounds in this language.

In the appendix, we briefly describe how these algorithms generalize to the $m$-th power residue character and additive characters over finite fields of small characteristic.

\subsection{Questions}

Below we highlight some interesting questions.

\begin{enumerate}
\item Is there an efficient algorithm to decode these codes from a $(\frac{1}{4} - \epsilon )$-fraction of errors?
This is the unique decoding radius for these codes. 

\item More ambitiously, we could ask whether there is an efficient algorithm for list-decoding these codes from a $\left(\frac{1}{2} - O(\sqrt \epsilon) \right)$-fraction of errors. This is the (binary) Johnson radius for these codes, so we know that there are only $O_\epsilon(1)$ codewords within this radius of any given function $r$.

In some ways, our result is analogous to the Berlekamp-Welch algorithm decoding algorithm for Reed-Solomon codes. Is there an analogue of the Guruswami-Sudan algorithm for this context?

\item Both the decoding problems that we give algorithms for are instances of the following more general problem.

Let $G(X,Y) \in \F_q[X,Y]$ be an unknown polynomial of degree $d$. Suppose we are given as input the projection $S = \Pi_X(G)$, where:
$$\Pi_X(G) = \{ x \in  \F_q \mid \exists y \in \F_q \mbox{ with } G(x,y) = 0\}.$$
Can we find, in time $\poly(q)$, some polynomial $F(X,Y)$ of degree at most $d$ such that $|\Pi_X(F) \triangle S| \leq O_d(1)$?

Such sets are {\em definable} sets with complexity $O_d(1)$ in the sense of  Chatzidakis, van den Dries and Macintyre~\cite{CDM} and Tao~\cite{T12-expanding,T13-alg-reg}.
The results of \cite{CDM}, which are based on more general forms of the Weil bounds, imply that the symmetric difference of any two such definable sets is of size either $O_d(1)$ or $\Omega_d(q)$. This suggests that one could even hope to solve this given a noisy version of $\Pi_X(G)$.

\item Let $q$ be a prime, and let $H \subset \F_q$ be the set
$\{0,1, \ldots, (q-1)/2\}$ and let $1_H: \F_q \to \{0,1\}$ be its indicator function. 
Suppose $g(X)$ is an unknown polynomial in $\F_q[X]$ of degree at most $d$, and we are given all evaluations of $(1_{H}) \circ g: \F_q \to \{0,1\}$. 

The additive character sum Weil bounds over $\F_q$ imply that if $d < O(q^{\frac{1}{2} - \epsilon})$, then $g(X)$ is uniquely determined, up to the constant term, by the function $1_H \circ g$.

Can we find $g(X)$ (up to the constant term) in time $\poly(q)$? 

To the best of our knowledge, the only elementary proofs of the uniqueness statement above are very indirect, involving the Stepanov method in multiple extension fields $\F_{q^c}$. It will be very interesting to find a more direct proof, and adapt it to the algorithmic problem above.

\end{enumerate}

\section*{Acknowledgements}

I would like to thank Ronen Shaltiel and Divesh Aggarwal, who independently asked me about efficiently decoding dual-BCH codes.

I am very grateful to Divesh Aggarwal, Mitali Bafna, Eli Ben-Sasson, Ariel Gabizon, Mrinal Kumar, Shubhangi Saraf, Ronen Shaltiel, Madhu Sudan, Amnon Ta-Shma, Santoshini Velusamy, Avi Wigderson and Kedem Yakirevitch for valuable discussions on these topics over the years. I would especially like to thank Amnon Ta-Shma for many illuminating discussions about the Stepanov method.

\section{Preliminaries}

$\F_q^*$ denotes $\F_q \setminus \{0\}$.

For two functions $r_1, r_2: \F_q \to \F_q$, 
we define their Hamming distance to be the number of evaluation points where they disagree:
$$ \Delta(r_1, r_2) = |\{\alpha \in \F_q \mid  r_1(\alpha) \neq r_2(\alpha)\}|.$$

We use $\poly(n)$ to denote an expression bounded by $A \cdot n^B$ for absolute constants $A,B$.

\subsection{Polynomials, Hasse Derivatives and Multiplicities}

Let $\F$ be a field.

We use $\deg(A)$ to denote the degree of a polynomial $A(X) \in \F[X]$. For polynomials $A(X), B(X) \in \F[X]$, we will use $A(X) \mod B(X)$ to denote the remainder
when $A(X)$ is divided by $B(X)$.

We now define Hasse derivatives, a notion of higher-order derivative that is suited for polynomials over finite fields.
See~\cite{HKT-Curves-Book,DKSS} for more on this.

\begin{definition}[Hasse Derivatives]
For a polynomial $A(X) \in \F[X]$,
we define the $\ell$-th Hasse derivative of $A(X)$, denoted $A^{[\ell]}(X)$, to be the coefficient $A_\ell(X) \in \F[X]$ of $Z^\ell$ in the expansion:
$$ A(X+Z) = A_0(X) + A_1(X) Z + \ldots + A_{\ell}(X) Z^\ell + \ldots .$$

\end{definition}

\begin{definition}[Multiplicity]
 For a nonzero polynomial $A(X) \in \F[X]$ and $\alpha \in \F$, we define the multiplicity of vanishing of $A(X)$, denoted $\mult(A, \alpha)$, to be the largest $r \in \mathbb N$ such that $(X-\alpha)^r$ divides $A(X)$.
 
 Equivalently, it is the largest $r$ such that 
 for all $\ell$ with $0 \leq \ell < r$,  $$ A^{[\ell]}(\alpha) = 0,$$
 and $ A^{[r]}(\alpha) \neq 0$.
\end{definition}

\begin{lemma}[Hasse derivative rules]
 Let $A(X), B(X) \in \F[X]$, and $\alpha, \beta \in \F$. 
 Then for all $\ell \geq 0$:
 \begin{itemize}
  \item
  $$(\alpha A+ \beta B)^{[\ell]}(X) = \alpha A^{[\ell]}(X) + \beta B^{[\ell]}(X).$$
  \item $$ (A \cdot B)^{[\ell]}(X) = \sum_{\substack{\ell_1+ \ell_2 = \ell \\ \ell_1, \ell_2 \geq 0}} A^{[\ell_1]}(X) \cdot B^{[\ell_2]}(X).$$
 \end{itemize}
\end{lemma}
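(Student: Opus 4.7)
The plan is to prove both identities by unwinding the definition of Hasse derivatives as coefficients in the formal expansion of $A(X+Z)$ in powers of $Z$, viewing both sides as elements of $\F[X][Z]$. The generating-function viewpoint makes both rules fall out of elementary algebraic identities in $\F[X][Z]$, after which we extract coefficients of $Z^\ell$.

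For the linearity statement, I would first observe that the substitution map $A(X) \mapsto A(X+Z)$ from $\F[X]$ to $\F[X][Z]$ is an $\F$-algebra homomorphism (this is immediate since it is evaluation at $X+Z$). In particular it is $\F$-linear, so
\[
(\alpha A + \beta B)(X+Z) = \alpha \cdot A(X+Z) + \beta \cdot B(X+Z) = \sum_{\ell \geq 0} \bigl(\alpha A^{[\ell]}(X) + \beta B^{[\ell]}(X)\bigr) Z^\ell.
\]
Comparing the coefficient of $Z^\ell$ with the definition of $(\alpha A + \beta B)^{[\ell]}(X)$ gives the first identity.

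For the product rule, I would use multiplicativity of the same substitution map:
\[
(A \cdot B)(X+Z) = A(X+Z) \cdot B(X+Z) = \left( \sum_{\ell_1 \geq 0} A^{[\ell_1]}(X) Z^{\ell_1} \right) \left( \sum_{\ell_2 \geq 0} B^{[\ell_2]}(X) Z^{\ell_2} \right).
\]
Expanding the product of the two power series in $Z$ and collecting terms according to the total power of $Z$ gives
\[
(A \cdot B)(X+Z) = \sum_{\ell \geq 0} \left( \sum_{\substack{\ell_1 + \ell_2 = \ell \\ \ell_1, \ell_2 \geq 0}} A^{[\ell_1]}(X) \cdot B^{[\ell_2]}(X) \right) Z^\ell,
\]
and matching with the definition of $(A \cdot B)^{[\ell]}(X)$ as the coefficient of $Z^\ell$ yields the claimed Leibniz-type identity.

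I do not expect any real obstacle here: the whole content of the lemma is that the $\F$-algebra homomorphism $A(X) \mapsto A(X+Z)$ translates the linear and multiplicative structure of $\F[X]$ into a graded identity in $\F[X][Z]$, from which both rules drop out by extracting one coefficient. The only mild subtlety worth stating carefully is that all of this takes place at the level of formal polynomial identities, so no characteristic-dependent factorials appear (this is, of course, the whole point of the Hasse derivative over $\F_q$).
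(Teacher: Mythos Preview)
Your proof is correct and is exactly the standard argument: both identities are immediate from the fact that $A(X)\mapsto A(X+Z)$ is an $\F$-algebra homomorphism, followed by extracting the coefficient of $Z^\ell$. The paper itself does not supply a proof of this lemma; it is stated as a standard fact (with a reference to the literature on Hasse derivatives), so there is nothing to compare against beyond noting that your argument is the canonical one.
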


Now we specialize to the finite field $\F_q$.
Here the polynomial
$$\Lambda(X) = X^q - X$$
plays a special role.

We recall some basic derivative calculations.

For $A(X) = X^i$:
$$ A^{[\ell]}(X) = {i \choose \ell} X^{i-\ell}.$$
For $\Lambda(X)$:
$$ \Lambda^{[\ell]}(X) = \begin{cases}
                    \Lambda(X) & \ell = 0\\
                    -1 & \ell = 1\\
                    1 & \ell = q\\
                    0 & \mbox{ otherwise}
                   \end{cases}.$$

\begin{lemma}
\label{lem:hasse-xq}
Suppose $U(X), V(X) \in \F_q[X]$, $\deg(V)<q$, $i \geq 0$, and:
$$U(X) = V(X) \cdot \Lambda^i(X),$$

Then for $0 \leq \ell < q$:
$$U^{[\ell]}(X) \equiv  (-1)^{i} \cdot V^{[\ell-i]}(X)\mod \Lambda(X),$$

(where we adopt the convention that $V^{[j]}(X) = 0$ if $j < 0$.)
\end{lemma}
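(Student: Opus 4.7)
The plan is to apply the product-rule identity for Hasse derivatives to $U = V \cdot \Lambda^i$, and then compute the Hasse derivatives of $\Lambda^i$ modulo $\Lambda$ using the very explicit table of derivatives of $\Lambda$ given just before the lemma.

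First I would expand
$$U^{[\ell]}(X) = \sum_{\substack{\ell_1 + \ell_2 = \ell \\ \ell_1, \ell_2 \geq 0}} V^{[\ell_1]}(X) \cdot (\Lambda^i)^{[\ell_2]}(X),$$
so the whole question reduces to understanding $(\Lambda^i)^{[\ell_2]}(X) \bmod \Lambda(X)$ for $0 \leq \ell_2 \leq \ell < q$. Iterating the product rule once for each factor of $\Lambda$, I would write
$$(\Lambda^i)^{[\ell_2]}(X) = \sum_{\substack{j_1, \ldots, j_i \geq 0 \\ j_1 + \cdots + j_i = \ell_2}} \prod_{t=1}^i \Lambda^{[j_t]}(X).$$

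Now the explicit formula for the Hasse derivatives of $\Lambda$ takes over. Since $\Lambda^{[0]} = \Lambda \equiv 0 \pmod{\Lambda}$, any tuple $(j_1,\ldots,j_i)$ containing a zero contributes $0$ modulo $\Lambda$; so I may restrict to tuples with every $j_t \geq 1$. The remaining nonzero values of $\Lambda^{[j_t]}$ occur only at $j_t = 1$ (value $-1$) and $j_t = q$ (value $1$). The hypothesis $\ell_2 \leq \ell < q$ rules out any $j_t = q$, so every surviving tuple has all $j_t = 1$. Such a tuple exists (and is unique) only when $\ell_2 = i$, in which case the contribution is $(-1)^i$. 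Thus
$$(\Lambda^i)^{[\ell_2]}(X) \equiv \begin{cases} (-1)^i & \ell_2 = i \\ 0 & \ell_2 \neq i \end{cases} \pmod{\Lambda(X)}.$$

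Substituting back into the product-rule expansion collapses the sum to the single term $\ell_2 = i$, $\ell_1 = \ell - i$, yielding $U^{[\ell]}(X) \equiv (-1)^i V^{[\ell-i]}(X) \pmod{\Lambda}$, with the case $\ell < i$ giving $0$ in agreement with the stated convention. The only real subtlety is the bookkeeping step that uses $\ell < q$ to forbid $j_t = q$; without this hypothesis the derivative at order $q$ of $\Lambda$ (which equals $1$) would contaminate the answer. The degree bound on $V$ is not actually needed for this congruence itself, but guarantees that $V^{[\ell - i]}$ has degree $< q$ so the right-hand side is the literal remainder.
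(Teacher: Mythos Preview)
Your proposal is correct and follows essentially the same argument as the paper: both expand $U^{[\ell]}$ via the product rule into a sum over compositions, observe that each factor $\Lambda^{[j_t]}$ is $\equiv 0 \pmod{\Lambda}$ unless $j_t \in \{1,q\}$, use $\ell < q$ to exclude $j_t = q$, and conclude that the only surviving term has all $j_t = 1$. The sole cosmetic difference is that the paper expands all $i+1$ factors at once (indices $\ell_0,\ell_1,\ldots,\ell_i$), whereas you first split off $V$ and then expand $\Lambda^i$; your closing remark about the role of the hypothesis $\deg(V) < q$ is a nice addition not made explicit in the paper.
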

\begin{proof}
 By the product rule for Hasse derivatives:
 \begin{align*}
  U^{[\ell]}(X) &= \sum_{\substack{\ell_0, \ell_1, \ldots, \ell_i \\ \sum \ell_j = \ell}} V^{[\ell_0]}(X) \cdot \prod_{j=1}^{i} \Lambda^{[\ell_j]}(X).\\
 \end{align*}
 
 If $\ell < i$, then in every term in the sum one of $\ell_1, \ldots, \ell_i$ must equal $0$, and thus all the terms are multiples of $\Lambda(X)$.
 So $U^{[\ell]}(X)$ is a multiple of $\Lambda(X)$, as claimed.
 
 If $\ell \geq i$, then we will see that all but at most one of these terms is a multiple of $\Lambda(X)$.
 
 Since $\ell < q$, every term has all the $\ell_j < q$.
 
 By our computation of derivatives of $\Lambda$, 
 the only terms that are nonzero must have 
 $\ell_1, \ldots, \ell_i$ all equal to either $0$ or $1$.
 
 If at least one of $\ell_1, \ldots, \ell_i$ equals $0$, then $\Lambda(X)$ is a factor of the term.
 
 The only term remaining is $\ell_1 = \ell_2 = \ldots = \ell_i = 1$, and $\ell_0 = \ell - i$, and this term equals:
 $$ V^{[\ell-i]}(X) \cdot (-1)^{i}.$$
 
 This completes the proof of the lemma.
\end{proof}

\subsection{The quadratic residue character over fields of odd characteristic}

Let $q$ be an odd prime power.

Let $\chi : \F_q \to \{0, \pm 1\}$ be the quadratic residue character:
$$ \chi(\beta) = \begin{cases} +1 & \beta \in \F_q^* \mbox{ is the square of an element of $\F_q$ } \\
                  -1 & \beta \mbox{ is not the square of an element of $\F_q$ }\\
                  0 & \beta = 0.
                 \end{cases}$$
                 Usually the range is viewed as a subset of $\C$, but we will view it as a subset of $\F_q$. Then we have the formula:
$$ \chi(\beta) = \beta^{(q-1)/2}.$$

Let $g(X) \in \F_q[X]$ with $g$ of degree $d$. 
We will use $\chi \circ g$ to denote the {\em function}
from $\F_q$ to $\{0, \pm 1\} \subseteq \F_q$ whose values are given by:
$$ \chi \circ g (\alpha) = \chi(g(\alpha)).$$
(We will never use $\chi \circ g$ to refer to the polynomial $g^{(q-1)/2}(X)$.)

For an arbitrary nonzero $g(X)$ of degree $d$, we can write it as:
$$g(X) = \lambda \cdot \bar{g}(X) \cdot h(X)^2$$
where $\lambda \in \F_q^*$, $\bar{g}(X), h(X) \in \F_q[X]$, satisfy:
\begin{itemize}
 \item $\bar{g}(X)$ is monic and squarefree,
 \item $h(X)$ is monic.
\end{itemize}
$\bar{g}(X)$ is the ``squarefree core'' of $g(X)$, and essentially determines $\chi \circ g$.

Indeed, for all $\alpha$ where $h(\alpha) \neq 0$,
$$ \chi \circ g (\alpha) = \chi(\lambda) \cdot \chi \circ \bar{g} ( \alpha).$$

Thus, either $ \Delta( \chi \circ g, \chi \circ \bar{g}) \leq d$ or $ \Delta( \chi \circ g, -\chi \circ \bar{g}) \leq d$.

\subsection{Additive characters in fields characteristic $2$ }

Let $q$ be a power of $2$, say $q = 2^b$. 

Let $\Tr: \F_q \to \F_2$ be the field trace map:
$$ \Tr(\beta) = \sum_{i=0}^{b-1} \beta^{2^i}.$$

Let $\TR(X) \in \F_q[X]$ be the polynomial:
$$ \TR(X) = \sum_{i=0}^{b-1} X^{2^i}.$$

For a polynomial $g(X) \in \F_q[X]$, we will use $\Tr \circ g$ to denote the {\em function} from $\F_q$ to $\F_2$ whose values are given by:
$$ \Tr \circ g (\alpha) = \Tr(g(\alpha)).$$
(We will never use $\Tr \circ g$ to refer to the polynomial 
$ \TR(g(X)) = \sum_{i=0}^{b-1} (g(X))^{2^i}.$)

For an arbitrary nonzero $g(X)$ of degree $d$, we can write it as:
$$ g(X) = \lambda + \bar{g}(X) + h(X) + h^2(X),$$
where  $\lambda \in \F_q$ and $\bar{g}(X), h(X) \in \F_q[X]$ satisfy:
\begin{itemize}
 \item $\bar{g}(X)$ has only odd degree monomials,
 \item $h(X)$ has $0$ constant term.
\end{itemize}
This is proved by repeatedly taking the highest even degree monomial $a_{2i} X^{2i}$ and writing it as the sum of $(a_{2i}^{1/2} X^i)$ and $(a_{2i}^{1/2} X^i + a_{2i} X^{2i})$ -- the former having lower degree, and the latter being of the form $h(X) + h^2(X)$.

$\bar{g}(X)$ is the ``odd degree core'' of $g(X)$ and essentially determines $\Tr 
\circ g$.

Indeed, for all $\alpha \in \F_q$, 
$$ \Tr \circ g (\alpha) = \Tr ( \lambda) + \Tr \circ \bar{g}(\alpha).$$
Thus either $\Tr \circ g = \Tr \circ \bar{g}$ or $\Tr \circ g = 1 + \Tr \circ \bar{g}$.

\newpage
\section{$\chi \circ g$}
\label{sec:chi}

Let $q$ be an odd prime power.

Let $g(X) \in \F_q[X]$ be a monic squarefree polynomial.
In this section, we give an efficient algorithm to recover $g(X)$ given access to a noisy version $r$ of $\chi \circ g$.

{\noindent \bf Algorithm A:}\\
{\noindent Parameters: degree $d \leq O(\epsilon \sqrt{q})$, error-bound $e \leq \left( \frac{1}{8} - \epsilon \right) q$.}\\
{\noindent Input $r: \F_q \to \{0, \pm 1\}$ }
\begin{enumerate}
 \item Set
 \begin{itemize}
  \item $M = \frac{16}{\epsilon}d$
  \item $c = \frac{M}{2}$
  \item $h = 2 \cdot e$.
  \item $D = d \cdot  \left( (q-1)/2 + M \right) + cq = (1 + O(\epsilon)) \cdot \frac{1}{2} \cdot M \cdot q$,
  \item $u = h + dM = 2e + O(\frac{ d^2}{\epsilon} )$.
 \end{itemize}
 
 \item\label{linsyst} Solve an $\F_q$ system of linear equations to find polynomials
 $F(X), U_0(X), \ldots, U_{M-1}(X) \in \F_q[X]$, not all zero,
 where:
 \begin{itemize}
  \item $\deg(F) \leq D $
 \item for each $\ell$, $\deg(U_\ell) \leq u$
 \item For all $\alpha \in \F_q$, $0 \leq \ell < M$:
 \begin{align}
 \label{eq:linsyst}
  F^{[\ell]}(\alpha)  = r(\alpha) \cdot U_\ell(\alpha).
 \end{align}
  
 \end{itemize}
 \item Factor $F(X)$ into irreducible factors:
 $$ F(X) = \lambda \prod_j  H_j^{\mu_j}(X),$$
 where the $H_j(X) \in \F_q[X]$ are distinct and monic, and $\lambda \in \F_q^*$.
 \item Define
 \label{getJstep}
 $$J = \left\{ j \mbox{ such that }  \mu_j \in \left[\frac{3}{8}q, \frac{7}{8}q\right] \mod q  \right\}.$$
 \item Set
 \label{getfstep}
 $$f(X) = \prod_{j \in J} H_j(X).$$
 \item Return $f(X)$.
\end{enumerate}

It is clear from the description of the algorithm (and the fact that factoring univariate polynomials of degree $D$ over $\F_q$ can be done in time $\poly(q,D)$~\cite{B67-factor}), this algorithm can be implemented to run in time $\poly(q)$.

We now show correctness.

\subsection{Correctness of the Algorithm}

\begin{theorem}
\label{thm:A}
 Let $\epsilon > 0$, and suppose:
 \begin{itemize}
  \item $d \leq \frac{\epsilon}{16} \sqrt{q}$
  \item $e \leq \left( \frac{1}{8} - \epsilon \right) q$.
 \end{itemize}

 Suppose $g(X) \in \F_q[X]$ is a monic squarefree polynomial with degree $\leq d$ such that $\Delta(\chi \circ g, r) \leq e$.
 
 Then Algorithm A returns $g(X)$.
\end{theorem}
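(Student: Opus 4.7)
The argument has two main phases: showing that the linear system solved in Step~\ref{linsyst} of Algorithm~A admits a nonzero solution, and showing that for any such nonzero solution $(F, U_0, \ldots, U_{M-1})$ the factor-multiplicity inspection in Steps~\ref{getJstep}--\ref{getfstep} identifies exactly the monic irreducible factors of $g$.

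For existence I would exhibit an explicit solution. Let $E_{\mathrm{err}} = \{\alpha \in \F_q : r(\alpha) \neq \chi(g(\alpha))\}$; by hypothesis $|E_{\mathrm{err}}| \le e$. A linear-algebra dimension count ($(c+1)(k+1) > M|E_{\mathrm{err}}|$ for $k \approx 2e$) produces a nonzero \emph{error-locator pseudopolynomial} $E(X) = \sum_{i=0}^{c} E_i(X)\Lambda(X)^i$ with $\deg E_i \le k$ that vanishes at each $\alpha \in E_{\mathrm{err}}$ with multiplicity $M$. Setting $G(X) = g(X)^{(q-1)/2 + M}$ and $F^\star := E \cdot G$, direct expansion via the Hasse-derivative product rule together with the identity $\beta^{q-1}=1$ for $\beta \in \F_q^*$ produces polynomials $V_\ell(X)$ of degree $O(dM)$ with $G^{[\ell]}(\alpha) = \chi(g(\alpha))\cdot V_\ell(\alpha)$ for every $\alpha \in \F_q$ and $0 \le \ell < M$. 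Combining with Lemma~\ref{lem:hasse-xq} applied to each $E_i\Lambda^i$ (which shows $E^{[\ell_1]}(\alpha)$ equals $\widetilde{E}_{\ell_1}(\alpha)$ for a polynomial $\widetilde{E}_{\ell_1}$ of degree $\le k$) and the product rule yields $U_\ell^\star(X) = \sum_{\ell_1+\ell_2=\ell} \widetilde{E}_{\ell_1}(X)\cdot V_{\ell_2}(X)$ of degree $\le u$, satisfying $F^{\star[\ell]}(\alpha) = r(\alpha)\cdot U_\ell^\star(\alpha)$ at every non-error $\alpha$; at error $\alpha$ both sides vanish because $F^\star$ has multiplicity $\ge M$ there. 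The degree bounds on $F^\star$ and $U_\ell^\star$ then follow from the parameter choices.

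The main work is to show that for \emph{any} nonzero solution $(F, U_0, \ldots, U_{M-1})$ of the linear system there exist nonzero pseudopolynomials $A(X), B(X)$ of controlled degree with
\[
A(X) \cdot F(X) \;=\; B(X) \cdot G(X).
\]
My plan is a Stepanov-style interpolation: seek $A = \sum_{i=0}^{c'} A_i(X)\Lambda(X)^i$ and $B = \sum_{i=0}^{c'} B_i(X)\Lambda(X)^i$ with $\deg A_i, \deg B_i \le k'$, imposing $(AF - BG)^{[\ell]}(\alpha) = 0$ for every $\alpha \in \F_q$ and $0 \le \ell < M$. Using $F^{[\ell]}(\alpha) = r(\alpha) U_\ell(\alpha)$ together with Lemma~\ref{lem:hasse-xq}-style reductions of the Hasse derivatives of $A\Lambda^i$ and $B\Lambda^i$, each constraint becomes a low-degree polynomial equality on $\F_q$. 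The resulting homogeneous linear system has $\sim Mq$ equations in $\sim 2(c'+1)(k'+1)$ unknowns; choose $c', k'$ so the latter strictly exceeds the former, obtaining nonzero $(A, B)$. Then $AF - BG$ vanishes to multiplicity $M$ on all of $\F_q$ (so is divisible by $\Lambda(X)^M$ of degree $Mq$), while the same parameter choice keeps $\deg(AF - BG) < Mq$; hence $AF = BG$ identically.

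With this identity in hand, for every irreducible $p \in \F_q[X]$,
\[
v_p(F) \;=\; v_p(B) \;-\; v_p(A) \;+\; \mathbf{1}_{p \mid g}\cdot\bigl((q-1)/2 + M\bigr).
\]
The second key property of pseudopolynomials---that every irreducible factor of a pseudopolynomial with parameters $(c', k')$ has multiplicity within $O(c'+k')$ of a multiple of $q$---forces $v_p(F) \bmod q$ to lie close to $(q-1)/2 + M$ when $p \mid g$ and close to $0$ when $p \nmid g$. Since $M \le \sqrt q$, the value $(q-1)/2 + M$ sits safely inside $[3q/8, 7q/8]$ while $0$ lies safely outside, so the set $J$ extracted in Step~\ref{getJstep} equals exactly the set of indices of irreducible factors of $g$; since $g$ is monic squarefree, Step~\ref{getfstep} outputs $g(X)$. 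The heart of the proof and the main obstacle is the Stepanov-style construction of $AF = BG$: the parameters $c', k'$ must simultaneously allow a nontrivial $(A,B)$ to exist and keep $\deg(AF - BG) < Mq$, a tight trade-off that ultimately determines the $(\tfrac{1}{8}-\epsilon)$ error-tolerance radius.
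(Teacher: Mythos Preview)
Your overall architecture matches the paper's: exhibit a nonzero solution via an error-locator pseudopolynomial times $G = g^{(q-1)/2+M}$, prove an identity $A\cdot F = B\cdot G$ with $A,B$ pseudopolynomials, then read off the factors of $g$ from the multiplicities of $F$ via the fact (Lemma~\ref{lem:factormult}) that irreducible-factor multiplicities of pseudopolynomials lie near multiples of $q$. Your Steps~1 and~3 are essentially correct.

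The gap is in your interpolation for $AF = BG$. You impose $(AF-BG)^{[\ell]}(\alpha)=0$ for \emph{every} $\alpha\in\F_q$, count these as $\sim Mq$ constraints, and try to beat that with $\sim 2(c'+1)(k'+1)$ unknowns. But the degree requirement $\deg(AF-BG)<Mq$, together with $\deg F \le D \approx \tfrac12 Mq$, forces $\deg A \lesssim \tfrac12 Mq$, i.e.\ $c' \lesssim M/2$; since you also need $k'<q$ for the pseudopolynomial factor-multiplicity property to be usable later, this caps the unknowns at $\sim Mq/2$ on each side. The system is overdetermined and you cannot conclude existence of a nonzero $(A,B)$.

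What you are missing is the cancellation that makes the constraint count collapse. The paper imposes $(AF)^{[\ell]}(\alpha)=(BG)^{[\ell]}(\alpha)$ only at \emph{non-error} points $\alpha\notin S$. There $r(\alpha)=\chi(g(\alpha))$, so after expanding with the product rule the $r(\alpha)$ on the $AF$ side and the $\chi(g(\alpha))$ on the $BG$ side match, and one is reduced to the genuine \emph{polynomial identities}
\[
\sum_{\ell_1+\ell_2=\ell} A_{\langle\ell_1\rangle}(X)\,U_{\ell_2}(X)
\;=\;
\sum_{\ell_1+\ell_2=\ell} B_{\langle\ell_1\rangle}(X)\,H_{g,w,\ell_2}(X)\,g^{M-\ell_2}(X),
\qquad 0\le\ell<M,
\]
imposed as exact equalities in $\F_q[X]$. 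These are $M$ identities of degree at most $3e+O(dM)$, hence only $\sim 3Me+O(dM^2)$ scalar constraints---far below $Mq$. With $t=\tfrac38 M$, $k=e+4dM$ (and asymmetric, larger parameters for $B$) there are more unknowns than constraints; the resulting $AF-BG$ vanishes to order $M$ on $\F_q\setminus S$, and $M(q-e)>\deg(AF-BG)$ still forces $AF=BG$. Your sentence ``each constraint becomes a low-degree polynomial equality on $\F_q$'' gestures at this, but it is incompatible with your ``$\sim Mq$ equations'' and with imposing the constraint at error points, where $r(\alpha)\neq\chi(g(\alpha))$ and no such reduction is available.
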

\begin{proof}
 The proof has three steps.
 
 Let $G(X) = (g(X))^{(q-1)/2 + M}$.
 
 First, using the hypothesis that $r$ is close to $\chi \circ g$, we show that 
 the system of linear equations in Step~\ref{linsyst}
has a nonzero solution.
Specifically, we carefully design polynomials $E(X)$ (which is a kind of
high multiplicity error-locator pseudopolynomial) and $V_0(X), \ldots, V_{M-1}(X)$
so that $F(X) = E(X) \cdot G(X)$ and $U_i(X) = V_i(X)$ satisfies
the linear equations.
Thus the algorithm will find a nonzero solution in that step.

The algorithm may not find the above mentioned ``intended'' solution.
In the second step of the proof, we show that whatever solution  $(F, U_0, \ldots, U_{M-1})$ it does find, $F(X)$ has some nontrivial relation to $G(X)$.

Finally, we use this nontrivial relation to show that the multiplicities of irreducible factors of $F$ are closely related to those of $G$, and hence of $g$, and deduce the result from that.
 
We now proceed with the details.

\subsubsection*{Step 1: The system of linear equations has a nonzero solution.}

The key observation is that if the received word $r: \F_q \to \{0 , \pm 1\}$
had been exactly $\Tr \circ g$, then $F = G$ would have given a valid solution to the system of equations. Instead, our received word $r$ is merely close to $\chi \circ g$. So, following the idea of the Berlekamp-Welch decoding algorithm for Reed-Solomon codes, we will zero out $G$ at the error locations to get a valid solution.

Let $S = \{\alpha \in \F_q \mid \chi \circ g(\alpha) \neq r(\alpha)\}$ be the error set. Let $Z_S(X)$ be the error locator polynomial:
$$ Z_S(X) = \prod_{\alpha \in S} (X-\alpha).$$

Recall that $c = M/2$ and $h = \frac{eM}{c} = 2e$.
We now take $E(X)$ to be a nonzero multiple of $Z_S^M(X)$ of the form:
$$ E(X) = \sum_{i=0}^{c-1} E_i(X) \Lambda^{i}(X),$$
with $\deg(E_i) \leq h$. (Recall that $\Lambda(X) = X^q - X$). 
Such an $E(X)$ exists, since vanishing mod $Z_S^M(X)$ imposes $eM$ constraints on the $c\cdot(h+1)$-dimensional $\F_q$-linear space $\{(E_0(X), \ldots, E_{c-1}(X)) \in \F_q[X]^{c} \mid \deg(E_i) \leq h \}$,
and $c \cdot (h+1) > ch = eM$.

This gives us two nice properties:
\begin{itemize}
 \item $E^{[\ell]}(\alpha) = 0$ for all $\alpha \in S$ and $\ell$ with $0 \leq \ell < M$. This is because $E$ is a multiple of $Z_S^M(X)$, and thus of $(X-\alpha)^M$.
 \item For each $\ell$ with $0 \leq \ell < M$, there is a polynomial $E_{\langle \ell\rangle}(X) \in \F_q[X]$ of degree at most $h = 2e$ such that for all $\alpha \in \F_q$:
 $$ E_{\langle \ell \rangle}(\alpha) = E^{[\ell]}(\alpha).$$
 Indeed, by Lemma~\ref{lem:hasse-xq}:
 $$E^{[\ell]}(X) = \sum_{i=0}^{c-1} (-1)^{i} E_i^{[\ell-i]}(X)  \mod \Lambda(X),$$
 and thus  if we define:
 $$ E_{\langle \ell \rangle}(X) = \sum_{i=0}^{c-1} (-1)^i E_i^{[\ell-i]}(X),$$
 we have for all $\alpha \in \F_q$,:
 \begin{align*}
 E^{[\ell]}(\alpha) = \sum_{i=0}^{c-1} (-1)^i E_i^{[\ell-i]}(\alpha) = E_{\langle \ell \rangle}(\alpha).
 \end{align*}
\end{itemize}

We will take our solution $F(X)$ to be:
$$ F(X) = E(X) \cdot G(X),$$ 
(recall that $G(X) = g^{(q-1)/2 + M}(X)$).

Before we compute its derivatives, we need a quick lemma.
\begin{lemma}
Let $g(X) \in \F_q[X]$ be a polynomial of degree at most $d$.
Then the $\ell$-th derivative of $g^j(X)$ is of the form:
 $$(g^j)^{[\ell]}(X) = H_{g, j,\ell}(X) \cdot g(X)^{j-\ell},$$
 where $H_{g,j,\ell}(X) \in \F_q[X]$ has degree at most $d \ell - \ell$.
\end{lemma}
\begin{proof}
 Using the product rule for Hasse derivatives.
\end{proof}

Set $w = (q-1)/2 + M$. Then by the above lemma:
$$G^{[\ell]}(X) = H_{g,w,\ell}(X) \cdot g^{w-\ell}(X) = H_{g,w,\ell}(X)  \cdot g^{M-\ell}(X)\cdot g^{(q-1)/2}(X),$$
and so for all $\alpha \in \F_q$:
\begin{align}
G^{[\ell]}(\alpha) =    H_{g,w,\ell}(\alpha) \cdot g^{M-\ell}(\alpha) \cdot g^{(q-1)/2}(\alpha) =  \left(  H_{g,w,\ell}(\alpha) \cdot g^{M-\ell}(\alpha) \right) \cdot \chi \circ g(\alpha).
\end{align}

Now we compute the derivatives of $E \cdot G$ at $\alpha \in \F_q$:
\begin{align}
\nonumber
(E\cdot G)^{[\ell]}(\alpha)  &= \sum_{\ell_1 + \ell_2 = \ell}  E^{[\ell_1]}(\alpha) \cdot G^{[\ell_2]}(\alpha)\\
\nonumber
  &=  \sum_{\ell_1 + \ell_2 = \ell}  E_{\langle \ell_1\rangle}(\alpha) \cdot H_{g,w, \ell_2}(\alpha) \cdot g^{M-\ell_2}(\alpha) \cdot \chi \circ g(\alpha)\\
  \nonumber
&=  \left(\sum_{\ell_1 + \ell_2 = \ell}  E_{\langle \ell_1\rangle}(\alpha) \cdot H_{g,w, \ell_2}(\alpha) \cdot g^{M-\ell_2}(\alpha) \right) \cdot \chi \circ g(\alpha)\\  
 &= \chi \circ g(\alpha) \cdot V_\ell(\alpha) , \label{GVg}
\end{align} 
 where we defined:
$$V_\ell(X) = \sum_{\ell_1 + \ell_2 = \ell}  E_{\langle \ell_1 \rangle}(X) \cdot H_{g,w, \ell_2}(X) \cdot g^{M-\ell_2}(X).$$

It is easy to check that:
$$\deg(E \cdot G) \leq d \cdot \left( \frac{q-1}{2} + M \right) + c\cdot q  = D,$$
$$ \deg(V_\ell) \leq h + d \cdot M = u .$$

We now show that for all $\alpha \in \F_q$ and all $\ell$ with $0 \leq \ell < M$:
\begin{align}
 \label{eq:GrV}
 (E \cdot G)^{[\ell]}(\alpha) = r(\alpha) \cdot V_\ell(\alpha).
\end{align}

For $\alpha \not\in S$, we have $r(\alpha) = \chi \circ g (\alpha)$, and
so~\eqref{eq:GrV} follows from Equation~\eqref{GVg}.

For $\alpha \in S$, we have that $V_\ell(\alpha) = 0$ (since $E_{\langle \ell_1 \rangle }(\alpha) = E^{[\ell_1]}(\alpha) = 0$ for all $\ell_1 \leq \ell < M$). Thus $(E \cdot G)^{[\ell]}(\alpha) = 0$ too (by Equation~\eqref{GVg}), and
again we get \eqref{eq:GrV} for this case.

Thus the tuple $(E\cdot G, V_0, \ldots, V_{M-1})$ satisfies the Equations~\eqref{eq:linsyst}, as desired.

\subsubsection*{Step 2: Relating $F$ and $G$}

It turns out that whatever $F(X)$ the algorithm finds must be somewhat related to $G(X)$.

First we quickly note that $F(X)$ must be a nonzero polynomial. We know that
at least of one of $F$ and the $U_\ell$ is nonzero. If some $U_\ell(X)$ is nonzero, then using the facts that (1) $\chi \circ g$ has at most $d$ zeroes, (2) $\Delta(\chi \circ g, r) < \frac{1}{8}q$, (3) $U_\ell(X)$ has at most $h < \frac{1}{4}q$ zeroes, we get $(\chi \circ g(\alpha)) \cdot U_\ell (\alpha)$ must be nonzero for some $\alpha$: and Equation~\eqref{eq:linsyst} tells us that $F(X)$ is nonzero.

Now choose $t,k \in \mathbb N$ as follows:
  $$ t = \frac{3}{8} M.$$
  $$ k = e + 4dM.$$
  
We will show that there exist nonzero polynomials $A(X)$, $B(X)$ of the form:
  $$ A(X) = \sum_{i=0}^{t-1} A_i(X) \Lambda^i(X),$$
  $$ B(X) = \sum_{i=0}^{t + c-1} B_i(X) \Lambda^i(X),$$
  with $\deg(A_i) \leq k$, $\deg(B_i) \leq k + 2e$, such that:
  $$ A(X) \cdot F(X) = B(X) \cdot G(X).$$

We do this by dimension counting again.
The number of coefficients of the  $A_i$ and $B_i$ together is:
\begin{align}
\nonumber
N &= (t) \cdot (k+1) + (t+c) \cdot (k+2e + 1)\\ 
\nonumber
&> \frac{3}{8} M \cdot (e + 4dM) +  \frac{7}{8} M \cdot (3e + 4dM)\\
&= 3 M e + 5dM^2. \label{eq:N}
\end{align}
 and we treat them as unknowns.
By imposing at most $N-1$ homogenous linear constraints on these unknowns, we will
ensure that for each $\ell$ with $0 \leq \ell < M$, and each $\alpha \in \F_q\setminus S$:
\begin{align}
 \label{eq:AFBG}
 (A \cdot F)^{[\ell]}(\alpha) = (B \cdot G)^{[\ell]}(\alpha).
\end{align}

This latter equation implies that the polynomial $(A\cdot F - B \cdot G)(X)$ vanishes with multiplicity $M$ at each $\alpha \in \F_q \setminus S$, and thus has at least
$$M\cdot | \F_q \setminus S| \geq M \cdot (q-e) > \left(\frac{7}{8} + \epsilon\right) Mq$$ roots, counting multiplicity. But
the degree of $(A \cdot F - B \cdot G)(X)$ can be bounded as:
\begin{align*}
\deg( A \cdot F - B \cdot G) &\leq \max( \deg( A \cdot F), \deg( B \cdot G)) \\
&\leq \max( tq + D,  (t+ c) q + ((q-1)/2 + M) \cdot d )\\
&\leq \max( tq  + D,  tq + D)\\
&=  \frac{3}{8}Mq +  d \cdot ( (q-1)/2 + M) + cq\\
&<  \left(\frac{7}{8} + \frac{\epsilon }{2} \right) M q  
\end{align*}

 This implies that $(A \cdot F - B\cdot G)(X) = 0$.

It remains to show how to ensure equations~\eqref{eq:AFBG} using at most $N-1$ linear constraints on the coefficients of the $A_i$ and the $B_i$.

First, observe that
for each $\ell$, there are polynomials $A_{\langle \ell \rangle}(X)$, $B_{\langle \ell \rangle}(X)$, of degrees at most $k, k+ 2e$ respectively (and whose coefficients are homogenous linear combinations of the coefficients of $A,B$), given by:
$$ A_{\langle \ell \rangle}(X) = \sum_{i=0}^{t-1} (-1)^iA_i^{[\ell-i]}(X)$$
$$ B_{\langle \ell \rangle}(X) = \sum_{i=0}^{t+c-1} (-1)^{i}B_i^{[\ell-i]}(X)$$
such that for all $\alpha \in \F_q$:
$$ A^{[\ell]}(\alpha) =  A_{\langle \ell \rangle}(\alpha).$$
$$ B^{[\ell]}(\alpha) = B_{\langle \ell \rangle}(\alpha).$$
(Here we used Lemma~\ref{lem:hasse-xq} to compute $A^{[\ell]}$ and $B^{[\ell]}$ and evaluate them
at $\alpha \in \F_q$).

Next, recall that $F^{[\ell]}(\alpha) = r(\alpha) U_\ell(\alpha)$
for all $\alpha \in \F_q$ and all $\ell$ with $0 \leq \ell < M$.
Then:
\begin{align}
\nonumber
  (A \cdot F)^{[\ell]}(\alpha) &= \sum_{\ell_1 + \ell_2 = \ell} A^{[\ell_1]}(\alpha) F^{[\ell_2]}(\alpha) \\
&= r(\alpha) \cdot \sum_{\ell_1 + \ell_2 = \ell} A_{\langle \ell_1 \rangle}(\alpha) U_{\ell_2}(\alpha).
\label{eq:AF}
\end{align}

Similarly, $G^{[\ell]}(\alpha) = \chi\circ g(\alpha) \cdot H_{g, w, \ell}(\alpha) \cdot g^{M-\ell}(\alpha)$ for all $\alpha \in \F_q$ and all $\ell$ with $0 \leq \ell < M$.
Then:
\begin{align}
\nonumber
  (B \cdot G)^{[\ell]}(\alpha) &= \sum_{\ell_1 + \ell_2 = \ell} B^{[\ell_1]}(\alpha) G^{[\ell_2]}(\alpha) \\
&= \chi \circ g(\alpha) \cdot \sum_{\ell_1 + \ell_2 = \ell} B_{\langle \ell_1\rangle }(\alpha) H_{g,w,\ell_2}(\alpha)\cdot g^{M-\ell_2}(\alpha).
\label{eq:BG}
\end{align}

We can now write down the linear constraints that we impose on the coefficients of the $A_i$ and the $B_i$. 
We ask that for each $\ell$ with $0 \leq \ell < M$, the following equality of polynomials holds: 
\begin{align}
\sum_{\ell_1 + \ell_2 = \ell} A_{\langle \ell_1 \rangle }(X) U_{\ell_2}(X) = \sum_{\ell_1 + \ell_2 = \ell} B_{\langle \ell_1 \rangle }(X) H_{g,w,\ell_2}(X)  g^{M-\ell_2}(X).
\label{eq:babyAFBG}
\end{align}
The coefficients of all these polynomials are homogoeneous linear combinations of the coefficients of the $A_i$ and the $B_i$.

The polynomials on the left hand side of the equality are of degree at most
$$k + u = k + dM + h = 3e + dM.$$
The polynomials on the right hand side of the equality are of degree at most
$$(k + 2e) + dM = 3e + dM.$$
Thus the total number of $\F_q$-linear constraints imposed by these $M$ equalities is at most:
\begin{align*}
 M \cdot (3e + dM+1) <  3Me + 5 d M^2. 
\end{align*}
Combining this with Equation~\eqref{eq:N}, we get that there exist $A(X),B(X)$,
not both $0$, satisfying Equations~\eqref{eq:babyAFBG}.

Finally, from Equations~\eqref{eq:AF},~\eqref{eq:BG} and~\eqref{eq:babyAFBG}, and using the fact that $r(\alpha) = \chi \circ g(\alpha)$ for each $\alpha \in \F_q \setminus S$,
we conclude that Equation~\eqref{eq:AFBG} holds.
This gives us the desired polynomial identity:
$$A(X) \cdot F(X) = B(X) \cdot G(X).$$

\subsubsection*{Step 3: Relating the factor multiplicities of $F$ and $G$}

 Finally, we use the identity and the special form of $A$ and $B$
 to relate the factor multiplicities of $F$ and $G$.
 Lemma~\ref{lem:factormult} below shows that both $A$ and $B$ have the special property that every irreducible factor has multiplicity being approximately a multiple of $q$. On the other hand, the irreducible factors of $g(X)$ appear in $G$ with a multiplicity mod $q$ being approximately $q/2$ (this is the crucial step where we use the squarefreeness of $g(X)$). This lets us identify the nonsquare factors of $g(X)$, and hence $g(X)$. 
 
 \begin{lemma}
 \label{lem:factormult}
 Let $k < q$.
 Let $C(X) \in \F_q[X]$ be a nonzero polynomial of the form:
 $$ \sum_{i=0}^{t-1} C_i(X) \Lambda^i(X),$$
 where $\deg(C_i) \leq k$ for all $i$.
 
 Let $H(X) \in \F_q[X]$ be an irreducible polynomial.
 Let $\mu$ be the highest power of $H(X)$ which divides $C(X)$.
 
 Then $\mu \mod q \in [0, k+t-1]$.
\end{lemma}

\begin{proof}
 Let $\alpha \in \Fqbar$ (the algebraic closure of $\F_q$) be a root of $H(X)$.
 
 Then we have:
 $$\mu = \mult(C, \alpha).$$
 By definition of multiplicity,
 $$C^{[\mu]}(\alpha) \neq 0.$$
 
 We will show that $C^{[\ell]}(X) = 0$ whenever $\ell \mod q \in [k+t, q-1]$.
 This implies that $\mu \mod q \in [0, k+t-1]$.
 
 Suppose $\ell \mod q \in [k+t, q-1]$.
 
 We now compute\footnote{We cannot use Lemma~\ref{lem:hasse-xq} here, since that only talks about derivatives of order $< q$. But it is the same idea.}, using the product rule for Hasse derivatives:
 \begin{align*}
 C^{[\ell]}(X) &= \sum_{i=0}^{t-1} \left( C_i \cdot \Lambda^i \right)^{[\ell]}(X)\\
 &= \sum_{i=0}^{t-1}  \left( \sum_{\substack{\ell_0, \ell_1, \ldots, \ell_i\\\sum \ell_j = \ell}} C_i^{[\ell_0]}(X) \cdot \prod_{j=1}^{i}\Lambda^{[\ell_j]} (X) \right)\\
 \end{align*}

 We now see that all terms in this sum are $0$.
 \begin{itemize}
  \item Every summand with $\ell_0 > k$ is $0$, since each $C_i$ has degree $\leq k$.
 
 \item Every summand where at least one of $\ell_1, \ldots, \ell_j$ does not lie in the set $\{0,1,q\}$ 
 is $0$ (because that derivative of $\Lambda$ is $0$).
 \end{itemize}

 Thus if any summand is nonzero, it must have $\ell_0 \in [0,k]$ and $\ell_1, \ldots, \ell_i \in \{0,1,q\}$.
 
 But if $\ell \mod q \in [k+t, q-1]$,
 then we cannot have $\ell$ written as the sum of 
 one integer in $[0,k]$
 and at most $i \leq t-1$ integers from the set $\{0,1,q\}$.
 
 Thus if $\ell \mod q \in [k+t, q-1]$, $C^{[\ell]}(X) = 0$, as claimed.
\end{proof}

Now let $H(X)$ be any irreducible polynomial in $\F_q[X]$.
Let $\mu_F, \mu_A, \mu_B, \mu_G, \mu_g$ be the highest power of $H(X)$ that divides $F(X), A(X), B(X), G(X), g(X)$ respectively.
By our squarefreeness assumption on $g$, we have that $\mu_g \in \{0, 1\}$.

Then by the identity $AF = BG$,
$$\mu_F = \mu_G + \mu_B - \mu_A.$$
By Lemma~\ref{lem:factormult},
$$ \mu_B \in q\cdot \Z  + [0, 3e + 4dM+ \frac{7}{8}M],$$
$$ \mu_A \in q\cdot \Z + [0, e + 4dM + \frac{3}{8}M].$$
Thus $\mu_F \in \mu_G + q \cdot \Z + [-e -5dM, 3e + 5dM]$.

Note that $\mu_G = ((q-1)/2 + M) \cdot \mu_g$, and thus:
\begin{itemize}
 \item if $\mu_g = 1$, then:
 $$ \mu_G = q/2 +  (M-1)/2,$$
 and thus:
 $$ \mu_F \in q \cdot \Z + q/2 + [-e-5dM, 3e+6dM] \subseteq q \cdot \Z + (3q/8, 7q/8).$$
 \item if $\mu_g = 0$, then:
 $$ \mu_G =0,$$
 and thus:
 $$ \mu_F \in q \cdot \Z + [-e-5dM, 3e+5dM] \subseteq q \cdot \Z + (-q/8, 3q/8).$$
\end{itemize}
Here we used the fact that $e \leq \left( \frac{1}{8} - \epsilon \right)q$ and  $6dM  < 96 \frac{d^2}{\epsilon} < \epsilon q$.

Thus the set $J$ created in Step~\ref{getJstep} ensures that
$\{ H_j \mid j \in J\}$ is exactly the set of all irreducible factors $H(X)$
that divide the squarefree polynomial $g(X)$.

This means that the $f(X)$ found in Step~\ref{getfstep} equals $g(X)$, as desired.
\end{proof}

\subsection{A remark}

An alternate analysis, which is quantitatively slightly weaker and conceptually slightly simpler, is given in Section~\ref{sec:pseudopoly-alternate-A}. It replaces the slightly opaque interpolation argument of Step 2 with another use of a high multiplicity error-locator pseudopolynomial.

\newpage

\section{$\Tr \circ g$}
\label{sec:tr}

Let $q$ be a power of $2$.

Let $g(X) \in \F_q[X]$ be a polynomial of degree at most $d$ with only odd degree monomials:
$$g(X) = a_1 X + a_3 X^3 + \ldots + a_d X^d.$$
In this section, we give an efficient algorithm to recover $g(X)$ given access to a noisy version $r$ of $\Tr \circ g$. 

The main subroutine, Algorithm B, finds the monomial of $g(X)$ of degree $d$. Given that, we can subtract $\Tr$ of that monomial from $r$ and repeat: this recovers the full polynomial $g$.

{\noindent \bf Algorithm B:}\\
{\noindent Parameters: degree $d \leq O(\epsilon \sqrt{q})$ (an odd integer), error-bound $e \leq \left( \frac{1}{8} - \epsilon \right) q$.}\\
{\noindent Input $r: \F_q \to \F_2$ }
\begin{enumerate}
  
 \item Set
 \begin{itemize}
  \item $M = \frac{16}{\epsilon}d$
  \item $c = \frac{1}{2} \cdot M$
  \item $h = 2 \cdot e$
  \item $u = h + dM = 2e + O(\frac{ d^2}{\epsilon} )$.
 \end{itemize}

 \item\label{Blinsyst1} For each $h^* \in [0,h]$
 \begin{itemize}
 \item Try to solve a system of $\F_q$-linear equations to find polynomials $F(X), E(X)$, $U_0(X), U_1(X), \ldots, U_{M-1}(X) \in \F_q[X]$,
 where:
 \begin{itemize}
  \item $F(X)$ has degree exactly $(d/2 + c)q + h^*$.
  \item $E(X)$ has degree exactly $cq + h^*$, and is of the form:
  $$ \sum_{i=0}^{c} E_i(X) \Lambda^{i}(X),$$
  where each $\deg(E_i) \leq h$.
  \item each $U_\ell(X)$ has degree at most $u$.
 
 \item For all $\alpha \in \F_q$, $0 \leq \ell < M$:
 \begin{align}
 \label{eq:Blinsyst1}
  F^{[\ell]}(\alpha)  = r(\alpha) \cdot E^{[\ell]}(\alpha) +  U_\ell(\alpha),
 \end{align}

 \end{itemize}
 \end{itemize}
 
 \item If no such $h^*$ exists, return $0 \cdot X^d$.
 \item Otherwise take one solution $E(X), F(X)$ with degrees $D_E, D_F$.
 
 \item Let $a \cdot X^{D_F}$, $b \cdot X^{D_E}$ be the leading monomials of $F(X), E(X)$ respectively.
 
 Set $a_d = (a/b)^2$.
 
 \item Return $a_d X^d$.
 \end{enumerate}
 \newpage
 
 We now give the main algorithm, which repeatedly invokes the above subroutine.
 
 \noindent{\bf Algorithm C:}\\
 {\noindent Parameters: degree $d \leq O(\epsilon \sqrt{q})$ (an odd integer), error-bound $e \leq \left( \frac{1}{8} - \epsilon \right) q$.}\\
{\noindent Input $r: \F_q \to \F_2$ }

\begin{enumerate}

\item Initialize:
\begin{itemize}
 \item $f(X) = 0$.
 \item $\widetilde{d} = d$.
 \item $\widetilde{r} = r$.
\end{itemize}

\item While $\widetilde{d} > 0$:
\begin{itemize}
 \item Run Algorithm B on $\widetilde{r}$, with parameters $\widetilde{d}, e$.
  \item Let $a_{\widetilde{d}} X^{\widetilde{d}}$ be the monomial it returns.
 \item Update $f(X) = f(X) + a_{\widetilde{d}} X^{\widetilde{d}}$.
 \item For each $\alpha \in \F_q$, update:
 $$\widetilde{r}(\alpha) = \widetilde{r}(\alpha) -\Tr( a_{\widetilde{d}} \alpha^{\widetilde{d}})$$
 \item Update $\widetilde{d} = {\widetilde{d}} -2$.

\end{itemize}
\item Return $f(X)$.
\end{enumerate}

It is clear from the description of the algorithm
that it can be implemented to run in time $\poly(q)$.
In the next subsection show correctness.

\subsection{Correctness of the Algorithm}

Our main theorem about Algorithm $C$ is as follows.

\begin{theorem}
\label{thm:C}
 Let $\epsilon > 0$, and suppose:
 \begin{itemize}
  \item $d \leq \frac{\epsilon}{16} \sqrt{q}$ is an odd integer,
  \item $e = \left( \frac{1}{8} - \epsilon \right) q$.
 \end{itemize}

 Suppose $g(X) \in \F_q[X]$ is a polynomial with degree $\leq d$
 and only odd degree monomials such that $\Delta(\Tr \circ g, r) \leq e$.
 
 Then Algorithm C returns $g(X)$.
\end{theorem}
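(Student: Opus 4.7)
The plan is to prove Theorem~\ref{thm:C} by induction on the iterations of Algorithm~C's while loop, reducing to correctness of the subroutine Algorithm~B, which I then establish by adapting the three-step analysis of Theorem~\ref{thm:A}.

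First I would set up the invariant for Algorithm~C: at the start of each iteration, the residual $\widetilde g(X) := g(X) - f(X)$ has only odd-degree monomials with $\deg \widetilde g \le \widetilde d$, and $\Delta(\Tr \circ \widetilde g, \widetilde r) \le e$. Because $\Tr$ is $\F_2$-linear, both conditions are preserved by the update step whenever Algorithm~B correctly returns $a_{\widetilde d} X^{\widetilde d}$, where $a_{\widetilde d}$ is the coefficient of $X^{\widetilde d}$ in $\widetilde g$ (so $a_{\widetilde d} = 0$ when $\deg \widetilde g < \widetilde d$). After $\widetilde d$ runs down through all odd values, $\widetilde g \equiv 0$ and $f = g$. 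Thus the theorem reduces to this claim about Algorithm~B.

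For Algorithm~B, I introduce the key polynomial $G(X) := \sum_{i=0}^{b-1} \widetilde g(X)^{2^i}$, which satisfies $G(\alpha) = \Tr(\widetilde g(\alpha))$ on $\F_q$, has degree $d' q/2$ (where $d' := \deg \widetilde g$), and has leading coefficient $a_{\widetilde d}^{q/2}$ when $d' = \widetilde d$. A short Lucas-theorem computation shows that $(\widetilde g^{2^i})^{[\ell]} = 0$ for $1 \le \ell < 2^i$, so each Hasse derivative $G^{[\ell]}$ with $1 \le \ell < M$ is a polynomial of degree at most $dM$ --- the characteristic-$2$ analogue of the derivative identity used in Theorem~\ref{thm:A}. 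With this in hand I follow the three-step template. In \emph{Step~1 (existence)}, assuming $d' = \widetilde d$, by a dimension count identical to Step~1 of Theorem~\ref{thm:A} there exists a nonzero pseudopolynomial $E = \sum_{i=0}^c E_i \Lambda^i$ with $\deg E_i \le h$, $E_c \ne 0$, and $Z_S^M \mid E$, where $S$ is the error set; setting $h^* := \deg E_c$, $F := E \cdot G$, and $U_\ell(X) := \sum_{\ell_2 \ge 1} E_{\langle \ell - \ell_2 \rangle}(X)\, G^{[\ell_2]}(X)$, a case-by-case check on $\alpha \in S$ (using $E^{[\ell]}(\alpha) = 0$) versus $\alpha \notin S$ (using $\widetilde r(\alpha) = G(\alpha)$) verifies \eqref{eq:Blinsyst1}.

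In \emph{Step~2 (rigidity)}, for any solution $(F, E, U_\ell)$ returned by the algorithm at any $h^*$, I construct nonzero pseudopolynomials $A, B$ satisfying the polynomial identity $A \cdot F = B \cdot E \cdot G$ by the same interpolation+root-counting argument as in Step~2 of Theorem~\ref{thm:A}: $A$ and $B$ are chosen so that $(AF - BEG)^{[\ell]}$ vanishes on $\F_q \setminus S$ with multiplicity $M$ for each $0 \le \ell < M$, while $\deg(AF - BEG) < M(q-e)$, forcing the identity. Then in \emph{Step~3 (extraction)}, comparing top-degree behavior on both sides of $AF = BEG$, combined with the pseudopolynomial structure of $A, B$ (whose irreducible-factor multiplicities are all near multiples of $q$ by Lemma~\ref{lem:factormult}), forces $\deg F = \deg E + \deg G = (c + d'/2)q + h^*$. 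This matches the algorithm's enforced $\deg F = (c + \widetilde d/2)q + h^*$ iff $d' = \widetilde d$: hence if $a_{\widetilde d} = 0$, no $h^*$ admits a solution and Algorithm~B correctly returns $0 \cdot X^{\widetilde d}$; otherwise, matching leading coefficients gives $a = b \cdot a_{\widetilde d}^{q/2}$, so $(a/b)^2 = a_{\widetilde d}^q = a_{\widetilde d}$, which is exactly the monomial Algorithm~B returns.

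The main obstacle is the rigidity step: one must carefully choose the pseudodegrees and the inner-degree bounds for $A, B$ so that (i)~the dimension of candidate pairs $(A, B)$ strictly exceeds the number of $\F_q$-linear constraints from the $M$ polynomial identities, and (ii)~$\deg(AF - BEG) < M(q-e)$. Unlike the odd-characteristic case, the explicit error-locator $E$ now appears on the right-hand side of the intended identity, so $\deg F$ and $\deg(EG)$ coincide in the tight case $d' = \widetilde d$, and the asymmetric pseudodegree assignment ($t$ for $A$, $t+c$ for $B$) used in Theorem~\ref{thm:A} does not transfer verbatim. The parameter bookkeeping must be redone, and this is where the genuine novelty (and the quantitative loss of a small constant factor in $\epsilon$) lives.
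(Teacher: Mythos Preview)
Your reduction of Theorem~\ref{thm:C} to the correctness of Algorithm~B via the loop invariant is correct and matches the paper. The gap is in your analysis of Algorithm~B: the target identity in your Step~2 is the wrong one, and this breaks both Step~2 and Step~3.

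You aim for $A \cdot F = B \cdot E \cdot G$, transplanting the multiplicative identity $AF = BG$ from Theorem~\ref{thm:A}. But the additive-character structure is different in exactly the place that matters. In the quadratic-residue case $F^{[\ell]}(\alpha) = r(\alpha)\, U_\ell(\alpha)$ and $G^{[\ell]}(\alpha) = (\chi\circ g)(\alpha)\cdot(\text{low degree})$, so on $\F_q\setminus S$ the factor $r(\alpha)$ appears multiplicatively on both sides of $(AF)^{[\ell]} = (BG)^{[\ell]}$ and cancels, leaving $M$ polynomial identities in the coefficients of $A,B$. In the trace case the received word enters \emph{additively}: $F^{[\ell]}(\alpha) = r(\alpha)\,E_{\langle\ell\rangle}(\alpha) + U_\ell(\alpha)$ and $(EG)^{[\ell]}(\alpha) = r(\alpha)\,E_{\langle\ell\rangle}(\alpha) + V_\ell(\alpha)$ on $\F_q\setminus S$. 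Expanding $(AF - BEG)^{[\ell]}(\alpha)$ gives $r(\alpha)\cdot P_\ell(\alpha) + Q_\ell(\alpha)$ with $P_\ell, Q_\ell$ low-degree polynomials linear in the unknown coefficients of $A,B$. Forcing this to vanish on $\F_q\setminus S$ (where $r$ takes both values $0$ and $1$ on sets of size $\Theta(q)$) forces both $P_\ell \equiv 0$ and $Q_\ell \equiv 0$ as polynomials, and the resulting constraint count overwhelms the variable count for any $A,B$ whose degrees are small enough to make $\deg(AF-BEG) < M(q-e)$.

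The paper instead targets the identity $A\cdot(F - E\cdot G) = B$. The point is that on $\F_q\setminus S$ the $r$-terms in $(F-EG)^{[\ell]}(\alpha)$ cancel exactly, leaving $(F-EG)^{[\ell]}(\alpha) = W_\ell(\alpha)$ for a low-degree $W_\ell$. Then the constraints reduce to the $M$ polynomial identities $\sum_{\ell_1+\ell_2=\ell} A_{\langle\ell_1\rangle} W_{\ell_2} = B_{\langle\ell\rangle}$, and the dimension count goes through.

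Your Step~3 also fails with your identity. From $AF = BEG$ you only get $a/b = (\text{lead }B/\text{lead }A)\cdot a_d^{q/2}$, and nothing controls the ratio of the leading coefficients of $A,B$; so $(a/b)^2$ need not equal $a_d$. Nor does the degree comparison detect $d' < \widetilde d$: since $\widetilde d - d'$ is even, $\deg B - \deg A = (\widetilde d - d')q/2$ is a multiple of $q$, which is perfectly compatible with $A,B$ being pseudopolynomials. With the paper's identity, by contrast, $\deg(F-EG) = \deg B - \deg A$ lies in an interval of $q\Z$ disjoint modulo $q$ from the interval containing $\deg F = \deg(EG)$; this forces the leading terms of $F$ and $EG$ to coincide, giving $a = b\cdot a_d^{q/2}$ directly and handling the $a_d = 0$ case as well.
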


The key claim is about the behavior of Algorithm B.

\begin{theorem}
\label{thm:B}
 Let $\epsilon > 0$, and suppose:
 \begin{itemize}
  \item $d \leq \frac{\epsilon}{16} \sqrt{q}$ is an odd integer,
  \item $e = \left( \frac{1}{8} - \epsilon \right) q$.
 \end{itemize}
 
 Suppose $g(X) \in \F_q[X]$ is a polynomial with degree $\leq d$
 and only odd degree monomials such that $\Delta(\Tr \circ g, r) \leq e$.

 Then Algorithm B returns $a_d X^d$.
\end{theorem}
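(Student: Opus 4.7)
The plan is to parallel the three-step structure of Theorem~\ref{thm:A}'s proof, adapted to the additive-character setting. Let $G(X) = \sum_{i=0}^{b-1} g(X)^{2^i}$, so that $G(\alpha) = \Tr(g(\alpha))$ for every $\alpha \in \F_q$ and $\deg G = dq/2$ when $\deg g = d$. The critical low-degree fact, playing the role analogous to the identity $\tildeG'/\tildeG = -\tfrac{1}{2}\, g'/g$ from Theorem~\ref{thm:A}, is that in characteristic $2$ one computes $G^{[\ell]}(X) = \sum_{2^i \mid \ell} \bigl(g^{[\ell/2^i]}(X)\bigr)^{2^i}$, a polynomial of degree at most $\ell(d-1) \leq dM$ for every $\ell \geq 1$; only $G^{[0]} = G$ itself is of huge degree.

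Step 1 (existence of an intended solution) is essentially verbatim from Theorem~\ref{thm:A}. Build a nonzero error-locator pseudopolynomial $E(X) = \sum_{i=0}^{c} E_i(X) \Lambda^i(X)$ with $\deg E_i \leq h$ and with $Z_S^M(X) \mid E(X)$; the same dimension count $(c+1)(h+1) > eM$ works. Arrange $E_c \neq 0$ and let $h^\ast = \deg E_c \in [0,h]$. Set $F = E \cdot G$ and $U_\ell(X) = \sum_{\ell_1+\ell_2=\ell,\, \ell_2 \geq 1} E_{\langle\ell_1\rangle}(X)\, G^{[\ell_2]}(X)$, a polynomial of degree at most $h + dM = u$. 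Checking $F^{[\ell]}(\alpha) = r(\alpha) E^{[\ell]}(\alpha) + U_\ell(\alpha)$ pointwise on $\F_q$ then splits into two cases: on non-errors use $r = G$ and the Hasse product rule; on errors use that every $E^{[\ell']}(\alpha)$ with $\ell' < M$ vanishes. The resulting degrees match the algorithm's requirements at parameter $h^\ast$.

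Step 2 is the technical heart: for any valid $(F, E, U_0, \ldots, U_{M-1})$ the algorithm returns, I seek a nonzero pseudopolynomial $A(X) = \sum_{i=0}^{t-1} A_i(X) \Lambda^i(X)$ and a polynomial $Q(X)$ of modest degree such that the polynomial identity $A(X)\bigl(F(X) - E(X) G(X)\bigr) = Q(X)$ holds. The key computational simplification is that on non-errors, using $r = G$ and the pseudopolynomial product formula $(AE)_{\langle \ell \rangle} = \sum_{\ell_1+\ell_2=\ell} A_{\langle\ell_1\rangle} E_{\langle\ell_2\rangle}$, the two copies of $G(\alpha)(AE)_{\langle \ell \rangle}(\alpha)$ arising respectively from $(AF)^{[\ell]}$ and $(AEG)^{[\ell]}$ cancel exactly, so $(A(F-EG))^{[\ell]}(\alpha)$ equals the value at $\alpha$ of an explicit low-degree polynomial $\sum A_{\langle\ell_1\rangle} U_{\ell_2} - \widetilde{W}_\ell$, where $\widetilde{W}_\ell := \sum_{\ell_2 \geq 1}(AE)_{\langle\ell_1\rangle} G^{[\ell_2]}$ has controllably low degree. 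Imposing this as a polynomial identity in $Q^{[\ell]}$ for each $\ell < M$ gives a linear system whose dimension count admits a nontrivial $(A, Q)$; since $A = 0$ forces $Q = 0$, any such nontrivial solution has $A \neq 0$. A standard degree bound on $A(F - EG) - Q$ (of the same flavor as Step 2 of Theorem~\ref{thm:A}, requiring vanishing to order $M$ on the non-error set $\F_q \setminus S$) then upgrades the construction to the polynomial identity $A(F - EG) = Q$.

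Step 3 (extraction) follows by comparing degrees and leading coefficients. From $A(F - EG) = Q$ and a parameter choice ensuring $\deg Q - \deg A < \deg F$, one concludes $\deg(F - EG) < \deg F$; combined with $\deg(EG) = (d/2 + c)q + h^\ast = \deg F$, this forces the leading coefficients of $F$ and $EG$ to agree. Since the leading coefficient of $G$ is $a_d^{q/2}$, we get $a = b \cdot a_d^{q/2}$ and hence $(a/b)^2 = a_d^q = a_d$. In the degenerate case $\deg g < d$, $\deg(EG) < \deg F$ for any admissible $E$, so Step 2 forces a contradiction and no valid solution exists at any $h^\ast$; the algorithm correctly returns $0 \cdot X^d$. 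The main obstacle throughout is the delicate parameter tuning in Step 2: one must simultaneously arrange enough unknowns in $(A, Q)$ for a nontrivial solution, a tight enough degree bound to upgrade pointwise vanishing on $\F_q \setminus S$ to a polynomial identity, and $\deg Q - \deg A$ small enough to conclude $\deg(F - EG) < \deg F$ --- more delicate than in Theorem~\ref{thm:A}, because the extra factor $E$ enters every term of the pseudopolynomial bookkeeping.
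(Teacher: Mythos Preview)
Your three-step structure matches the paper's, and Steps~1--2 are correctly sketched: the derivative formula $G^{[\ell]}(X) = \sum_{2^i \mid \ell} \bigl(g^{[\ell/2^i]}\bigr)^{2^i}$ and the error-locator pseudopolynomial construction are exactly what the paper uses. The gap is in Step~3, and it traces back to an ambiguity in what $Q$ is.

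If $Q$ is a genuine low-degree polynomial (as ``polynomial of modest degree'' suggests, especially in explicit contrast to the ``pseudopolynomial $A$''), then your Step~2 dimension count fails. The constraints ``$Q^{[\ell]} = P_\ell$ as a polynomial identity'' for $\ell = 0,\ldots,M-1$ are not $M$ independent blocks of freedom on $Q$: all the $Q^{[\ell]}$ are determined by $Q = Q^{[0]}$. Eliminating $Q$ via $Q = P_0$, the remaining $M-1$ identities $P_0^{[\ell]} = P_\ell$ are constraints on $A$ alone, roughly $(M-1)(k+u)$ of them against $t(k+1)$ unknowns. No choice of $t,k$ satisfying the degree bound $\deg\bigl(A(F-EG)\bigr) < M\,|\F_q \setminus S|$ makes this underdetermined.

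If instead $Q$ is a pseudopolynomial (this is what the paper does, calling it $B(X) = \sum_{i<t+c+d/2} B_i(X)\Lambda^i(X)$), then the dimension count works --- but now $\deg Q$ can be as large as $(t+c+\tfrac{d}{2})q$ while $\deg A$ can be $0$, so ``$\deg Q - \deg A < \deg F$'' is unachievable. The paper's Step~3 argument is genuinely different and is the crux you are missing: it does \emph{not} bound $\deg(F-EG)$ absolutely. Instead it uses that the pseudopolynomial form forces $\deg A \in q\Z + [0,k]$ and $\deg B \in q\Z + [0,k+h+dM]$, hence $\deg(F-EG) = \deg B - \deg A \in q\Z + (-\tfrac{1}{8}q,\tfrac{3}{8}q)$. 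Meanwhile, because $d$ is odd, $\deg F = \deg(EG) = (c+\tfrac{d}{2})q + h^\ast$ lies in $q\Z + [\tfrac{1}{2}q,\tfrac{3}{4}q)$. These two intervals are disjoint mod $q$, so $\deg(F-EG) \neq \deg F$, forcing the leading terms of $F$ and $EG$ to cancel. The oddness of $d$ --- which you never invoke --- is exactly what places $\deg F$ in the far half of $\Z/q\Z$ and makes this work.
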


Given Theorem~\ref{thm:B}, the correctness of Theorem~\ref{thm:C} is quite easy.

The algorithm maintains the invariant that:
$$\Delta(\widetilde{r}, \Tr \circ (g - f) ) \leq \left(\frac{1}{8} - \epsilon\right) q,$$
and:
$$ \deg(g- f) \leq \widetilde{d}.$$

By the end, we have $f=g$.

We just demonstrate the first step below.
Suppose $g(X)$ has only odd degree monomials and equals
$a_1 X + a_3 X^3 + \ldots + a_d X^d$.
Let $g'(X) = a_1 X + a_3 X^3 + \ldots + a_{d-2} X^{d-2}$.
and $g''(X) = a_d X^d$.
So $g' = g - g''$.

Then Theorem~\ref{thm:B} shows that Algorithm B on input
$r$ returns $a_dX^d$. Next, it updates $\widetilde{r} = r - \Tr \circ g''$. Then we have:
$$ \Delta( \widetilde{r} , \Tr \circ g') = \Delta( r - \Tr \circ g'', \Tr \circ g - \Tr \circ g'') = \Delta( r, \Tr \circ g) \leq \left( \frac{1}{8}-\epsilon\right) q,$$
and 
$\deg(g') = \deg(g - g'') \leq d - 2$,
as desired.

We now prove Theorem~\ref{thm:B} about the correctness of Algorithm B.
\begin{proof}
We begin by handling the (main) case where $a_d \neq 0$.

 The proof has three steps.
 
 Recall that $\TR(X)$ is the Trace polynomial:
 $$ \TR(X) = \sum_{i=0}^{b-1} X^{2^i},$$
 where $q = 2^b$.
 
 Let $G(X) = \TR(g(X))$, which is of degree exactly $d \cdot \frac{q}{2}$
 (because of our assumption $a_d \neq 0$).
 
 First, using the hypothesis that $r$ is close to $\Tr \circ g$, we show that 
 for some $h^*$, the system of linear equations in Step~\ref{Blinsyst1} of Algorithm B
has a nonzero solution.
Similar to the case of the quadratic residue character, we find the polynomials $E_0(X), \ldots, E_c(X)$ so that $E(X)$ is a high multiplicity error-locator pseudopolynomial polynomial, and $V_0(X), \ldots, V_{M-1}(X)$
so that with this choice of the $E_i(X)$,
we have that $F(X) = E(X) \cdot G(X)$ and $U_i(X) = V_i(X)$ satisfy
the linear equations.
Thus the algorithm will find a nonzero solution in that step.

The algorithm may not find the above mentioned ``intended'' solution.
In the second step of the proof, we show that whatever solution  $(F, E, U_0, \ldots, U_{M-1})$ it does find, $F(X)$ and $E(X)$ have some nontrivial relation to $G(X)$.

Finally, we use this nontrivial relation to extract the coefficient of $X^d$ in $g(X)$ from $F$ and $E$, and deduce the result from that.
 
We now proceed with the details.

\subsubsection*{Step 1: The system of linear equations has a nonzero solution.}
We now show that for some $h^* \in [0,h]$, the system of linear equations in Step~\ref{Blinsyst1}
has a nonzero solution.

The key observation is that if the received word $r: \F_q \to \F_2$
had been exaclty $\chi \circ g$, then $F(X) = G(X)$, $E(X) = 1$, $U_\ell(X) = 0$,  would have given a valid solution to the system of equations. Instead, our received word $r$ is merely close to $\Tr \circ g$. So, following the idea of the Berlekamp-Welch decoding algorithm for Reed-Solomon codes, we will zero out $G$ at the error locations to get a valid solution.

Let $S = \{\alpha \in \F_q \mid \Tr \circ g(\alpha) \neq r(\alpha)\}$ be the error set. Let $Z_S(X)$ be the error locator polynomial:
$$ Z_S(X) = \prod_{\alpha \in S} (X-\alpha).$$

Recall that $c = M/2$ and $h = \frac{eM}{c} = 2e$.
We now take $E(X)$ to be a nonzero multiple of $Z_S^M(X)$ of the form:
$$ E(X) = \sum_{i=0}^{c} E_i(X) \Lambda^{i}(X),$$
with $\deg(E_i) \leq h$. 
Such an $E(X)$ exists, since vanishing mod $Z_S^M(X)$ imposes $eM$ constraints on the $(c+1)\cdot(h+1)$-dimensional $\F_q$-linear space $\{(E_0(X), \ldots, E_c(X)) \in \F_q[X]^{c+1} \mid \deg(E_i) \leq h \}$,
and $(c+1) \cdot (h+1) > ch = eM$.

Whatever the degree of $E(X)$ is, it is of the form $iq + j$ for some $i \in [0,c]$ and $j \in [0,h]$. By multiplying this $E(X)$ by $\Lambda^{c-i}(X)$, the degree becomes $cq+j$. We will show that the linear equations have a solution when $h^* = j$.

Exactly as in the case of the quadratic character, this $E(X)$ has two nice properties:
\begin{itemize}
 \item $E^{[\ell]}(\alpha) = 0$ for all $\alpha \in S$ and $\ell$ with $0 \leq \ell < M$.
 \item There is a polynomial $E_{\langle \ell\rangle}(X) \in \F_q[X]$ of degree at most $h = 2e $ such that for all $\alpha \in \F_q$:
 $$ E_{\langle \ell \rangle}(\alpha) = E^{[\ell]}(\alpha).$$

\end{itemize}

We will take our solution $F(X)$ to be:
$$ F(X) = E(X) \cdot G(X),$$ 
(recall that $G(X) = \TR(g(X))$).

Before we compute its derivatives, we need a quick lemma.
\begin{lemma}
Let $g(X) \in \F_q[X]$ be a polynomial 
of degree at most $d$.
Then the $\ell$-th derivative of $G(X) = \TR(g(X))$ is of the form:
 $$(G)^{[\ell]}(X) = \begin{cases}
                      G(X) &  \ell = 0\\
                      H_{g, \ell}(X) & \ell > 0,
                     \end{cases}
                     $$
where $H_{g,\ell}(X) \in \F_q[X]$ is a polynomial of degree at most $d \cdot \ell$.
\end{lemma}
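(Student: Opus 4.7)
The plan is to reduce the whole computation to a one-line Frobenius identity. Since $G(X) = \TR(g(X)) = \sum_{i=0}^{b-1} g(X)^{2^i}$, by $\F_q$-linearity of the Hasse derivative (the first part of the Hasse derivative rules lemma) it suffices to analyze $(g^{2^i})^{[\ell]}(X)$ for each $i$ separately and then sum the results.

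For each fixed $i$, I would expand
\[
 g(X+Z)^{2^i} \;=\; \Bigl(\sum_{j \geq 0} g^{[j]}(X)\, Z^j\Bigr)^{\!2^i} \;=\; \sum_{j \geq 0} g^{[j]}(X)^{2^i}\, Z^{j \cdot 2^i},
\]
where the second equality uses that in characteristic $2$ the Frobenius $x \mapsto x^{2^i}$ is additive, so one squares (iteratively) and sums term by term. Reading off the coefficient of $Z^\ell$ gives the clean formula
\[
 (g^{2^i})^{[\ell]}(X) = \begin{cases} g^{[\ell/2^i]}(X)^{2^i} & \text{if } 2^i \mid \ell, \\ 0 & \text{otherwise.} \end{cases}
\]
Summing over $i$ yields, for $\ell > 0$,
\[
 G^{[\ell]}(X) \;=\; \sum_{\substack{0 \leq i \leq b-1 \\ 2^i \mid \ell}} g^{[\ell/2^i]}(X)^{2^i},
\]
which is the definition of the desired $H_{g,\ell}(X)$. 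The $\ell = 0$ case is just $G^{[0]} = G$ and is immediate.

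For the degree bound, each surviving summand has degree at most $2^i(d - \ell/2^i) = 2^i d - \ell$. Since $2^i \mid \ell$ and $\ell \geq 1$ force $2^i \leq \ell$, this is at most $\ell d - \ell \leq d\ell$, which gives the claimed bound $\deg H_{g,\ell} \leq d \cdot \ell$. I do not expect any real obstacle: the only non-routine ingredient is the additivity of Frobenius on polynomial expansions, which is a one-line observation; everything else is routine bookkeeping with the product and sum rules for Hasse derivatives.
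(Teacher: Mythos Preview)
Your proof is correct and follows essentially the same approach as the paper: both compute $(g^{2^i})^{[\ell]}$ via the characteristic-$2$ Frobenius to obtain the identical explicit formula $(g^{2^i})^{[\ell]} = (g^{[\ell/2^i]})^{2^i}$ when $2^i\mid\ell$ and $0$ otherwise, then sum over $i$ and bound degrees. The only difference is presentational: the paper derives this formula by an induction using the Hasse product rule on $g^{2^i} = (g^{2^{i-1}})^2$, whereas you get it in one shot by applying Frobenius directly to the Taylor expansion $g(X+Z)=\sum_j g^{[j]}(X)Z^j$, which is a bit slicker.
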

\begin{proof}
The $\ell = 0$ case is immediate.

Now assume $\ell \geq 1$.
 Using the product rule for Hasse derivatives,
 we have:
 \begin{align*}
 (g^{2^i})^{[\ell]}(X) &= \sum_{\ell_1 + \ell_2 = \ell} (g^{2^{i-1}})^{[\ell_1]}(X)\cdot (g^{2^{i-1}})^{[\ell_2]}(X) \\
 &=   \sum_{\ell_1 < \ell /2} 2 \cdot (g^{2^{i-1}})^{[\ell_1]}(X)\cdot (g^{2^{i-1}})^{[\ell - \ell_1]}(X) +
 \begin{cases}
0 & \ell \mbox{ is odd}\\                                                                                                                     \left(\left(g^{2^{i-1}}\right)^{[\ell/2]}(X) \right)^2 & \ell \mbox{ is even}.
\end{cases}\\
&=  \begin{cases}
0 & \ell \mbox{ is odd}\\                                                                                                                     \left(\left(g^{2^{i-1}}\right)^{[\ell/2]}(X) \right)^2 & \ell \mbox{ is even}.
\end{cases}
 \end{align*}
This implies that 
$$ (g^{2^i})^{[\ell]}(X) = 
\begin{cases}
\left(g^{[\ell/2^i]}(X)\right)^{2^i} & 2^i \mbox{ divides } \ell\\
0 & \mbox{ otherwise}.
\end{cases}
$$
 
 If $\ell = 2^s \cdot (2s'+1) \geq 1$, we can then explicitly write:
 $$G^{[\ell]}(X) = H_{g,\ell}(X) = \sum_{i = 0}^{\min(s,b-1)} \left(g^{[\ell/2^i]}(X)\right)^{2^i}.$$

 Inspecting the degrees of the terms in this expression,
 we see that $\deg(H_{g,\ell}) \leq d\cdot 2^s \leq d \cdot \ell $, as claimed.
\end{proof}

Now we compute the derivatives of $E \cdot G$ at $\alpha \in \F_q$:
\begin{align}
\nonumber
(E\cdot G)^{[\ell]}(\alpha)  &= \sum_{\ell_1 + \ell_2 = \ell}  E^{[\ell_1]}(\alpha) \cdot G^{[\ell_2]}(\alpha)\\
\nonumber
  &=  E^{[\ell]}(\alpha) G(\alpha) + \sum_{\substack{\ell_1 + \ell_2 = \ell\\ \ell_1 < \ell}}  E^{[\ell_1]}(\alpha) \cdot G^{[\ell_2]}(\alpha)\\
  \nonumber
  &=  E^{[\ell]}(\alpha) \cdot G(\alpha) + \left(\sum_{\substack{\ell_1 + \ell_2 = \ell\\ \ell_1 < \ell}}  E_{\langle \ell_1\rangle}(\alpha) \cdot H_{g,\ell_2}(\alpha) \right)\\
 &= \Tr \circ g(\alpha) \cdot E^{[\ell]}(\alpha) + V_{\ell}(\alpha), \label{BGVg}
\end{align} 
 where we defined:
$$V_\ell(X) = \sum_{\substack{\ell_1 + \ell_2 = \ell\\ \ell_1 < \ell}}  E_{\langle \ell_1 \rangle}(X) \cdot H_{g, \ell_2}(X) .$$

Then:
$$ \deg(V_\ell) \leq h + d \cdot M = u .$$

Since $\deg(G) = d \cdot \left( \frac{q}{2} \right)$, we get:
$$\deg(F) = \deg(E \cdot G) = (cq + h^*) + d\cdot \frac{q}{2},$$

We now show that for all $\alpha \in \F_q$ and all $\ell$ with $0 \leq \ell < M$:
\begin{align}
 \label{eq:BGrV}
 F^{[\ell]}(\alpha) = (E \cdot G)^{[\ell]}(\alpha) = r(\alpha) \cdot E^{[\ell]}(\alpha) + V_\ell(\alpha).
\end{align}

For $\alpha \not\in S$, we have $r(\alpha) = \Tr \circ g (\alpha)$, and
so~\eqref{eq:BGrV} follows from Equation~\eqref{BGVg}.

For $\alpha \in S$, we have that $E^{[\ell]}(\alpha) = 0$ and $V_\ell(\alpha) = 0$ (since $E_{\langle \ell_1 \rangle }(\alpha) = E^{[\ell_1]}(\alpha) = 0$ for all $\ell_1 \leq \ell < M$). Thus $(E \cdot G)^{[\ell]}(\alpha) = 0$ too (by Equation~\eqref{BGVg}), and
again we get \eqref{eq:BGrV} for this case.

Thus $F = E\cdot G$,  $E$, $U_0 = V_0, \ldots, U_{M-1} = V_{M-1}$ satisfies the Equations~\eqref{eq:Blinsyst1}, as desired.

\subsubsection*{Step 2: Relating $F$, $E$, and $G$}

Let $F(X)$, $E(X)$ be the polynomials that the algorithm found.
By our constraints on the degrees, $\deg(F) - \deg(E) = dq/2$.
It turns out $F(X)$ and $E(X)$ must be somewhat related to $G(X)$.

Concretely, choose $t,k \in \mathbb N$ as follows:
  $$ t = \frac{3}{8} M.$$
  $$ k = e + 4dM.$$
  
We will show that there exist nonzero polynomials $A(X)$, $B(X)$ of the form:
  $$ A(X) = \sum_{i=0}^{t-1} A_i(X) \Lambda^{i}(X),$$
  $$ B(X) = \sum_{i=0}^{t + c + \frac{d}{2}-1} B_i(X) \Lambda^{i}(X),$$
  with $\deg(A_i) \leq k$, $\deg(B_i) \leq k + h + dM$, such that:
  $$ A(X) \cdot \bigg(F(X)  - E(X) \cdot G(X)\bigg) = B(X).$$

The proof is analogous to what we did in the quadratic residue character case; namely, setting up a system of homogenous $\F_q$-linear equations with more unknowns than constraints. 

The total number of unknown coefficients of $A(X), B(X)$ equals:
\begin{align}
\nonumber
 N &= t(k+1) + (t+c+ (d/2)) \cdot (k + h + dM + 1)\\
 \nonumber
 &>  \frac{3}{8} M \cdot \left(e + 4dM \right) + \frac{7}{8} M \cdot \left( 3e + 5dM\right)\\
 &> 3 eM + 5dM^2.
 \label{eq:BN}
\end{align}

By imposing at most $N-1$ homogenous linear constraints on these unknowns, we will
ensure that for each $\ell$ with $0 \leq \ell < M$, and each $\alpha \in \F_q\setminus S$:
\begin{align}
 \label{eq:AFEGB}
 \left(A \cdot (F - E \cdot G) \right)^{[\ell]}(\alpha) = B^{[\ell]}(\alpha).
\end{align}
  
This latter equation implies that the polynomial $A(X)\cdot (F(X) - E(X) \cdot G(X)) - B(X)$ vanishes with multiplicity at least $M$ at each $\alpha \in \F_q \setminus S$, and thus has at least
$$M\cdot | \F_q \setminus S| \geq M \cdot (q-e) > \left(\frac{7}{8} + \epsilon\right) Mq$$ roots, counting multiplicity. But
the degree of $A(X)\cdot (F(X) - E(X) \cdot G(X)) - B(X)$ can be bounded as:
\begin{align*}
\deg( A \cdot (F - EG) - B) &\leq \max( \deg(A) + \deg(F-EG), \deg( B )) \\
&\leq \max( tq +  ((c + (d/2))q + h), (t+c + (d/2))q )\\
&\leq  (t +c+(d/2))q + h \\
&= \frac{7}{8} Mq + (d/2)q + 2e \\
&<  \left(\frac{7}{8} + \frac{\epsilon}{2} \right) Mq.
\end{align*}

 This implies that $A(X) \cdot (F(X) - E(X)\cdot G(X)) = B(X)$.

It remains to show how to ensure Equations~\eqref{eq:AFEGB} using at most $N-1$ linear constraints on the coefficients of the $A_i$ and the $B_i$.

First, observe that
for each $\ell$, there are polynomials $A_{\langle \ell \rangle}(X)$, $B_{\langle \ell \rangle}(X)$, of degrees at most $k, k+h + dM  $ respectively (and whose coefficients are homogenous linear combinations of the coefficients of $A,B$), given by:
$$ A_{\langle \ell \rangle}(X) = \sum_{i=0}^{t-1} (-1)^{i}A_i^{[\ell-i]}(X) $$
$$ B_{\langle \ell \rangle}(X) = \sum_{i=0}^{t+c + (d/2)-1} (-1)^i B_i^{[\ell-i]}(X) $$
such that for all $\alpha \in \F_q$:
\begin{align} A^{[\ell]}(\alpha) = A_{\langle \ell \rangle}(\alpha)\\
 B^{[\ell]}(\alpha) = B_{\langle \ell \rangle}(\alpha). \label{eq:trBpseudo}
 \end{align}
(Again, we used Lemma~\ref{lem:hasse-xq}).

Next, observe that our assumed form on $E(X)$ implies the existence of similar low degree polynomials $E_{\langle \ell \rangle}(X) \in \F_q[X]$ whose evaluations match $E^{[\ell]}$ on $\F_q$.
Namely, (recalling the definition of the $E_i$ from Algorithm B) define:
$$ E_{\langle \ell \rangle}(X) = \sum_{i=0}^{c} (-1)^{i} E^{[\ell-i]}_i(X),$$
and notice that $\deg(E_{\langle \ell \rangle}) \leq h$, and:
$$E^{[\ell]}(\alpha)  = E_{\langle \ell \rangle}(\alpha).$$

Finally, recall that $F^{[\ell]}(\alpha) = r(\alpha) E^{[\ell]}(\alpha) + U_\ell(\alpha)$
for all $\alpha \in \F_q$ and all $\ell$ with $0 \leq \ell < M$.

Thus, for any $\alpha \in \F_q \setminus S$:
\begin{align}
\nonumber
  (F - E\cdot G)^{[\ell]}(\alpha) &= \left( r(\alpha) E^{[\ell]}(\alpha) + U_{\ell}(\alpha) \right) - \sum_{\ell_1 + \ell_2 = \ell} E^{[\ell_1]}(\alpha) G^{[\ell_2]}(\alpha) \\
  \nonumber
&= \left( r(\alpha) \cdot E_{\langle \ell \rangle}(\alpha) + U_{\ell}(\alpha) \right) -
\left( E_{\langle \ell \rangle}(\alpha) G(\alpha) + \sum_{\substack{\ell_1 + \ell_2 = \ell\\ \ell_1 < \ell}} E_{\langle \ell_1 \rangle}(\alpha) H_{g,\ell_2}(\alpha)\right)\\
\nonumber
&= \left( r(\alpha) - G(\alpha) \right) \cdot E_{\langle \ell \rangle}(\alpha) + \left( U_\ell(\alpha) - \sum_{\substack{\ell_1 + \ell_2 = \ell\\ \ell_1 < \ell}} E_{\langle \ell_1 \rangle}(\alpha) H_{g,\ell_2}(\alpha)\right)\\
\nonumber
&= 0 + W_\ell(\alpha)\\
&= W_\ell(\alpha),
\end{align}
 where $W_\ell(X)$ is the polynomial of degree at most $h+ dM$ given by:
 $$W_\ell(X) = U_\ell(X) - \sum_{\substack{\ell_1 + \ell_2 = \ell\\ \ell_1 < \ell}} E_{\langle \ell_1 \rangle}(X) H_{g,\ell_2}(X).$$
 and we used the fact that $r(\alpha) = G(\alpha)$ for all $\alpha \in \F_q \setminus S$.

 So for any $\alpha \in \F_q \setminus S$, we have:
 \begin{align}
\nonumber
  (A \cdot (F-E\cdot G) )^{[\ell]}(\alpha) &= \sum_{\ell_1 + \ell_2 = \ell} A^{[\ell_1]}(\alpha) \left( F - E \cdot G  \right)^{[\ell_2]}(\alpha)\\
  &=\sum_{\ell_1 + \ell_2 = \ell} A_{\langle \ell_1 \rangle}(\alpha) \cdot W_{\ell_2}(\alpha).\label{eq:BAF}
 \end{align}

We can now write down the linear constraints that we impose on the coefficients of the $A_i$ and the $B_i$. 
We ask that for each $\ell$ with $0 \leq \ell < M$, the following equality of polynomials holds: 
\begin{align}
\sum_{\ell_1 + \ell_2 = \ell} A_{\langle \ell_1 \rangle }(X) W_{\ell_2}(X) = B_{\langle \ell \rangle}(X).
\label{eq:babyAFEGB}
\end{align}
The coefficients of all these polynomials are homogenous linear combinations of the coefficients of the $A_i$ and the $B_i$.

The polynomials on the left hand side of the equality are of degree at most
$$k + (h+ dM) = 3e + dM.$$
The polynomials on the right hand side of the equality are of degree at most
$$k+ h + dM = 3 e + dM.$$
Thus the total number of $\F_q$-linear constraints imposed by these $M$ equalities is at most:
\begin{align*}
 M \cdot ( 3e + dM + 1) = 3 eM + dM^2 + M <  3eM + 5 d M^2. 
\end{align*}
Combining this with Equation~\eqref{eq:BN}, we get that there exist $A(X),B(X)$,
not both $0$, satisfying Equations~\eqref{eq:babyAFEGB}.

Finally, from Equations~\eqref{eq:trBpseudo},~\eqref{eq:BAF} and~\eqref{eq:babyAFEGB},
we conclude that Equation~\eqref{eq:AFEGB} holds.
This gives us the desired polynomial identity:
$$A(X) \cdot (F(X) - E(X) G(X) ) = B(X).$$

\subsubsection*{Step 3: Relating the leading coefficients of $F$, $E$ and $G$}

From the above identity, we have:
$$ \deg(A) + \deg( F - E \cdot G) = \deg (B).$$

By choice of $A$ and $B$, we have:
$$\deg(A) \in q \cdot \Z + [0,k] = q \cdot \Z + \left[0, e + 4dM\right] \subseteq q \cdot \Z + \left[ 0, \frac{1}{8}q \right),$$
$$\deg(B) \in q \cdot \Z + \left[0,k+ h + dM \right] \subseteq q \cdot \Z + \left[0, 3e + 5dM\right) \subseteq q \cdot \Z + \left[0, \frac{3}{8}q \right).$$

Thus
$$\deg( F - E \cdot G) = \deg(B) - \deg(A)  \in q \cdot \Z + \left( -\frac{1}{8}q, \frac{3}{8} q \right).$$

Now we study $\deg(F)$ and $\deg(E \cdot G)$.

By design, and using the fact that $d$ is odd:
$$\deg(F) = \deg(E \cdot G) = (c+d/2)q + h^* \in q \cdot \Z + \left[\frac{q}{2}, \frac{q}{2} + 2e \right] \subseteq q \cdot \Z + \left[ \frac{1}{2}q, \frac{3}{4} q \right).$$

Now for the crucial point. 
Since the intervals $\left( -\frac{1}{8}q, \frac{3}{8}q \right)$ and 
$\left(\frac{1}{2} q, \frac{3}{4} q\right)$ are (comfortably\footnote{There is some slack here. The decoding radius can be improved a little bit (but not much) by tuning the choices of $c,h$ in the algorithm and $k,t$ in the analysis, to $ \frac{1}{2} \left(1 - \frac{1}{\sqrt{2}} - \epsilon\right) \cdot q \approx (0.145 - \epsilon )q $, with slightly more complicated calculations and no new conceptual ideas. This improved decoding radius equals $\frac{1}{2}(J(1/2) - \epsilon)\cdot q$, where $J(\delta)$ is the alphabet-free Johnson radius for codes of relative distance $\delta$.}) disjoint mod $q$,
we have that $\deg(F - E \cdot G) \neq \max(\deg(F), \deg(E\cdot G))$.

This means that $F$ and $E\cdot G$ must have the same leading monomial!

We have $F$ and $E$ in our hands with leading coefficients $a, b \neq 0$, and so the leading coefficient of $G$ (which we know equals $a_d^{q/2}$) must equal $\frac{a}{b}$.

Thus:
$$ a_d = a_d^q = ( a_d^{q/2} ) ^2 = \left( \frac{a}{b}   \right)^2,$$
which is as computed by the algorithm.

This shows that Algorithm B correctly returns $a_d X^d$ whenever $a_d \neq 0$.

Finally we have to show what happens if $a_d = 0$.
We show that there is no $h^*$ for which the system of linear equations is solvable (and thus it will return $0 \cdot X^d$, as desired).

If there was a solution $F, E$ found, then $\deg(F) - \deg(E) = dq/2$, and
by our Step 2 and Step 3 of our proof above, $F(X)$ and $E(X) G(X)$ have the same leading monomial. However $a_d = 0$ implies that $\deg(G) < dq/2 = \deg(F) - \deg(E)$, a contradiction.

This completes the proof of correctness of Algorithm B.
\end{proof}

\subsection{A remark}

 Recall Step 2 of the proof of correctness of Algorithm B, where we deduced that $G$ was related to $F,E$ via a relation of the form:
 $$ A(X) \cdot (F(X) - E(X) \cdot G(X)) = B(X),$$
 with $A(X),B(X)$ of a special form.
 
 It turns out that this information already implies that $g$ is uniquely deterimined by $F, E$.
  However, we do not know how to {\em efficiently} find $g$ from just $F$ and $E$. This is why we had to take the longer route of just getting the leading coefficient of $g$, and then repeating (which involves solving for a new $F, E$).

 Why do $F$ and $E$ uniquely determine $g$? Suppose not,
 and that $g_1$, $g_2$ are distinct polynomials with only odd degree monomials such that $G_1 = \TR(g_1(X)), G_2 = \TR(g_2(X))$ both satisfy equations as above:
  $$ A_1(X) ( F(X) - E(X) G_1(X)) = B_1(X),$$
 $$ A_2(X) ( F(X) - E(X) G_2(X)) = B_2(X),$$
 Eliminating $F(X)$, we get:
 $$ A_1(X) A_2(X) E(X) \bigg(G_1(X) - G_2(X) \bigg) + (B_2(X) A_1(X) - B_1(X) A_2(X)) = 0.$$
 
 This kind of polynomial was explicitly considered by Stepanov, and he showed that it cannot equal $0$.
 
 The key point is that $G_1(X) - G_2(X)$ is a nonzero polynomial with degree in $q \cdot \Z + q/2$.
 
 This means that the two polynomials
  $A_1(X) A_2(X) E(X) \cdot \bigg(G_1(X) - G_2(X) \bigg)$ and $B_1(X) A_2(X) - B_2(X) A_1(X)$ cannot have equal degrees: the former has degree in the interval $[q/2, q/2 + 2k + h] \mod q$, while the latter has degree in the interval $[0, 2k+h] \mod q$, and these are disjoint intervals mod $q$.
  
  Therefore their sum cannot be zero, and we get a contradiction to the assumption that $g$ was not uniquely determined.

\newpage

\section{Theory of Pseudopolynomials}
\label{sec:pseudopoly}

The results of the previous sections made extensive use of high degree polynomials, all of whose derivatives behaved like low degree polynomials on $\F_q$.

We call these objects pseudopolynomials. In this section, we formally define them and build up some basic theory. We hope this will be useful for other algebraic and algorithmic problems involving polynomials over finite fields.

\begin{definition}[Pseudoderivative]
Let $A(X) \in \F_q[X]$. 

Define the {\em $\ell$-th pseudoderivative} of $A$, denoted $A_{\langle \ell \rangle}(X)$,  to be 
the unique polynomial of degree at most $q-1$ whose evaluations on $\F_q$ agree with the evaluations of $A^{[\ell]}(X)$. 
Explicitly:
$$ A_{\langle \ell \rangle}(X) = \left( A^{[\ell]}(X) \mod \Lambda(X) \right),$$
where $\Lambda(X) = X^q - X$.

Abusing notation, we will sometimes use $A_{\langle \ell \rangle}$
to denote the function
$$A_{\langle \ell \rangle} :  \F_q \to \F_q,$$
defined by $A_{\langle \ell \rangle}(\alpha) = A^{[\ell]}(\alpha)$.
\end{definition}

\begin{definition}[Pseudodegree]
Let $A(X) \in \F_q[X]$.

We define the {\em pseudodegree} of $A$,
denoted $\pdeg(A)$ to be:
$$ \max_{\ell \geq 0} \deg(A_{\langle \ell \rangle}).$$

\end{definition}

We say $A(X)$ is a {\em $k$-pseudopolynomial} if
$$ \pdeg(A) \leq k.$$ 

With this notation in place, we now list an assortment of useful facts
about pseudopolynomials.

\subsection{Basic Properties}
\label{sec:pseudopoly-basic}

With this notation in place, we now list an assortment of basic properties of pseudopolynomials. 

\begin{itemize}
\item $\pdeg(A + B) \leq \max (\pdeg(A), \pdeg(B))$.

\item $\pdeg(A \cdot B) \leq \pdeg(A) + \pdeg(B)$.

\item We have the following algebraic characterization of pseudodegree.

\begin{restatable}{lemma}{algchar}
\label{lem:pdeg-baselambda}
  Let $A(X) \in \F_q[X]$, and
let 
$$ A(X) = \sum_{i \geq 0} A_i(X) \Lambda^i(X),$$
with each $\deg(A_i) < q$,
be the base-$\Lambda(X)$ expansion of $A(X)$.

Then
$$\pdeg(A) = \max_{i\geq 0} \deg(A_i).$$
\end{restatable}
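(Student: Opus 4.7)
The plan is to establish both inequalities separately. The upper bound, $\pdeg(A) \le \max_j \deg(A_j)$, will come from writing $A_{\langle \ell \rangle}$ as an explicit sum. Expanding $(A_j \Lambda^j)^{[\ell]}$ via the product rule for Hasse derivatives and using $\Lambda^{[0]} = \Lambda$, $\Lambda^{[1]} = -1$, $\Lambda^{[q]} = 1$ (and $\Lambda^{[s]} = 0$ otherwise), only terms in which each of the $j$ copies of $\Lambda$ receives derivative order $1$ or $q$ survive mod $\Lambda$. Parameterizing by the number $k$ of $1$'s and the number $m$ of $q$'s (so $k + m = j$, $k + mq \le \ell$), I expect to obtain
\[
A_{\langle \ell \rangle}(X) \;=\; \sum_j \sum_{\substack{k+m=j\\ k+mq \le \ell}} \binom{j}{k}(-1)^k\, A_j^{[\ell - k - mq]}(X).
\]
Every summand has degree at most $\deg(A_j) < q$, so the sum itself has degree $< q$ (it is already reduced mod $\Lambda$), and its degree is at most $\max_j \deg(A_j)$.

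For the reverse inequality, I will induct on $I$, the largest index with $A_I \neq 0$. The base case $I = 0$ is immediate since $A = A_0$ and $A_{\langle 0 \rangle} = A_0$. For the inductive step, I decompose $A = A_0 + \Lambda \cdot A'$, where $A' := \sum_{j \geq 1} A_j \Lambda^{j-1}$ has largest nonzero base-$\Lambda$ index $I - 1$ (its coefficients being $A'_{j'} = A_{j'+1}$). A short direct computation using the values of $\Lambda^{[s]}$ above gives the key recurrence
\[
A_{\langle \ell \rangle} \;=\; A_0^{[\ell]} \,-\, A'_{\langle \ell - 1 \rangle} \,+\, A'_{\langle \ell - q \rangle},
\]
with the convention $A'_{\langle s \rangle} = 0$ for $s < 0$.

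Plugging in $\ell = 0$ gives $A_{\langle 0 \rangle} = A_0$, so $\deg(A_0) \le \pdeg(A)$. Rearranging the recurrence as $A'_{\langle \ell - 1 \rangle} = A_0^{[\ell]} - A_{\langle \ell \rangle} + A'_{\langle \ell - q \rangle}$ and inducting on $\ell$ (with the base case $\ell < q$ killing the last term), I obtain $\deg(A'_{\langle \ell' \rangle}) \le \max(\deg(A_0), \pdeg(A)) = \pdeg(A)$ for every $\ell'$, i.e., $\pdeg(A') \le \pdeg(A)$. The inductive hypothesis applied to $A'$ then gives $\max_{j \ge 1} \deg(A_j) = \pdeg(A') \le \pdeg(A)$, and combining with $\deg(A_0) \le \pdeg(A)$ yields $\max_j \deg(A_j) \le \pdeg(A)$, completing the proof.

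The main obstacle I anticipate is the bookkeeping in the product-rule expansion: $\Lambda$ has a nonzero Hasse derivative at order $q$ (not just at order $1$ as in characteristic zero), which means high-order pseudoderivatives $A_{\langle \ell \rangle}$ for $\ell \ge q$ are genuinely nontrivial and must be tracked. This is precisely why Lemma \ref{lem:hasse-xq} is stated only for $\ell < q$, and is also why the cleanest proof of the lower bound seems to go through the recurrence above rather than via a single clever choice of $\ell$ isolating $A_j$.
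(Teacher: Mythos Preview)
Your proof is correct. The upper bound is essentially identical to the paper's: both expand $A^{[\ell]}$ via the product rule and note that only terms where each $\Lambda$-factor receives derivative order $1$ or $q$ survive mod $\Lambda$. For the lower bound the paper takes a slightly more direct route: from the same expansion it isolates the diagonal term to get $A_{\langle \ell \rangle} = (-1)^\ell A_\ell + (\text{linear combination of } A_i^{[j]} \text{ with } i < \ell)$, and this triangular relation lets one solve for $A_\ell$ by a single induction on $\ell$. Your decomposition $A = A_0 + \Lambda A'$ with the recurrence $A_{\langle \ell \rangle} = A_0^{[\ell]} - A'_{\langle \ell-1 \rangle} + A'_{\langle \ell - q \rangle}$ achieves the same inversion by peeling off one $\Lambda$-factor at a time, at the cost of a nested induction (on $I$, then on $\ell'$). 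Both arguments exploit the same triangular structure; the paper's version is marginally shorter, while yours makes the recursive shape of the base-$\Lambda$ expansion more explicit.
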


The proof appears in Appendix~\ref{sec:alg-char}.

\item For polynomials $A(X)$ of moderate degree, the pseudodegree can be approximately expressed in terms of the standard monomial expansion of $A(X)$.

If $A$ has pseudodegree $\leq k$ and degree $D$, then $A(X)$ can be written in the form $$\sum_{i = 0}^{\lfloor D/q\rfloor} \widehat{A}_i(X) X^{iq},$$
 where $\deg(\widehat{A}_i(X)) \leq k + \lfloor D/q \rfloor$.
 In the reverse direction, if $A(X)$ can be written in the form
 $$\sum_{i=0}^t \widehat{A}_i(X) X^{iq},$$
 with $\deg(\widehat{A}_i) \leq k$,
 then $A(X)$ has pseudodegree at most $k + t$ and degree at most $tq +k$.
 
 For all the applications in this paper, we could have worked with the standard monomial basis with negligible losses.
 
 \item Suppose $A(X) \in \F_q[X]$  with $\deg(A) < Mq$.
 Suppose $\deg(A_{\langle \ell \rangle}) \leq k$ for all $\ell$ with $0\leq \ell < M$.
 Then $A$ has pseudodegree $\leq k$.

 That is, to check that $A$ has small pseudodegree, it suffices to check low degreeness of $A_{\langle \ell \rangle}$ for $\ell$ up to $\lceil\deg(A)/q\rceil$. 
 
 This follows from the proof of Lemma~\ref{lem:pdeg-baselambda}.
 
 \end{itemize}
 
 \subsection{Multiplicities}
 
 \begin{itemize}
 \item Suppose $A(X) \in \F_q[X]$ has pseudodegree $\leq k$ and degree at most $D$.
 Let $H(X) \in \F_q[X]$ be irreducible, and let $\mu$ be the highest power of $H(X)$ that divides $A(X)$. Then:
 $$\mu \mod q \in \left[0, k + \frac{D}{q}\right].$$
 
 This is Lemma~\ref{lem:factormult}.

 \item Suppose $A(X) \in \F_q[X]$ has pseudodegree $\leq k$.
 Then:
 $$\deg(A) \mod q \in [0,k].$$
 
 This is immediate from Lemma~\ref{lem:pdeg-baselambda}.
 
 This is the analogue of the previous fact for the place at infinity.
 
\end{itemize}

\subsection{The number of high multiplicity zeroes}
 
\begin{itemize} 
 \item For any $S \subseteq \F_q$, and any $c,k, M$ satifying:
 $$ |S| < \frac{c}{M} \cdot k,$$
 there exists a nonzero $A(X) \in \F_q[X]$ with $\deg(A) < cq$ and $\pdeg(A) < k$ such that for all $\alpha \in S$,
 $$\mult(A, \alpha) \geq M.$$
 
 This follows from dimension and constraint counting.
 We saw this in the construction of the error-locating pseudopolynomials in Step 1 of the proofs of both Theorem~\ref{thm:A} and Theorem~\ref{thm:B}.
 
 \item In the reverse direction, the next theorem shows that when $q$ is prime, every
 nonzero $A(X)$ with $\deg(A) < cq$ and $\pdeg(A) < k$ has at most $2 \frac{c}{M} k$ roots of multiplicity at least $M$. The bound improves to almost $(1+\epsilon)\frac{c}{M} k$ when $c < O(\epsilon) M$ or $c > (1-O(\epsilon)) M$.

 \begin{restatable}{theorem}{pseudoSZ}
 \label{thm:pseudoSZ}
 Let $q$ be prime.
 
 Let $A(X) \in \F_q[X]$ be a nonzero polynomial with
 $\deg(A) < cq$ and $\pdeg(A) < k$.
 Let $M$ be such that $c < M < q$.
 
 Then:
 $$ | \{ \alpha \in \F_q \mid \mult(A, \alpha) \geq M \}| \leq \min\left(  \frac{c}{M-c+1} \cdot k + c,\quad k \right).$$
\end{restatable}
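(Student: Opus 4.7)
Let $T := \{\alpha \in \F_q : \mult(A, \alpha) \geq M\}$ and $N := |T|$; my goal is to prove $N \leq \min\bigl(\tfrac{c}{M-c+1}k + c,\ k\bigr)$. The two bounds will come from two different arguments, with only the second needing a Stepanov-style idea.

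For the bound $N \leq k$, I would convert the multiplicity hypothesis into pseudoderivative vanishing. Since $A^{[\ell]}(\alpha) = A_{\langle \ell\rangle}(\alpha)$ for $\alpha \in \F_q$, the hypothesis $\mult(A,\alpha)\geq M$ gives $A_{\langle\ell\rangle}(\alpha)=0$ for all $\alpha \in T$ and all $\ell \in [0,M)$. Each $A_{\langle \ell\rangle}$ has degree $<k$, so as soon as one of them is nonzero we get $N\leq k-1$. To ensure some $A_{\langle \ell\rangle}$ with $\ell<M$ is nonzero, I would use the lower-triangular formula $A_{\langle \ell\rangle}(X) = \sum_{i=0}^{\min(c-1,\ell)}(-1)^i A_i^{[\ell-i]}(X)$ derived from Lemma~\ref{lem:hasse-xq}. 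If all of $A_{\langle 0\rangle},\ldots,A_{\langle c-1\rangle}$ were identically zero, the triangular structure would force $A_0=\cdots=A_{c-1}=0$, contradicting $A\neq 0$. Since $c<M$, the required $\ell$ lies in $[0,M)$.

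For the sharper bound $N \leq \tfrac{c}{M-c+1}k+c$, I would run a Stepanov-type argument on the polynomials $P_j(X):=A_{\langle j\rangle}(X)$ for $j=0,1,\ldots,M-1$, all of which have degree $\leq k-1$ and vanish on $T$. After reducing WLOG to the case $A_{c-1}\neq 0$ (otherwise the $\Lambda$-adic length is already smaller and the argument runs with a smaller $c$), the plan is to construct from $A$ a single nonzero polynomial $F(X)\in\F_q[X]$ of degree at most $k-1+O(c)$ such that $\mult(F,\alpha)\geq M-c+1$ for every $\alpha\in T$. The standard multiplicity Schwartz--Zippel bound applied to $F$ then gives $N(M-c+1)\leq \deg F\leq k-1+O(c)$, from which the stated bound follows. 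The natural candidate is $F:=P_{c-1}$, since the polynomials $P_{c-1},P_c,\ldots,P_{M-1}$ morally behave as successive (Hasse) derivatives of $P_{c-1}$ evaluated on $T$.

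\textbf{Main obstacle.} The hard part is carrying out the last step in positive characteristic. In characteristic zero one has $P_{j+1}=P_j^{(1)}$ for $j\geq c-1$, and so $F:=P_{c-1}$ directly inherits multiplicity $\geq M-c+1$ at each $\alpha\in T$. Over $\F_q$, however, the genuine Hasse-derivative identity
\[
P_j^{[1]}(X)=\sum_i(-1)^i(j-i+1)\, A_i^{[j-i+1]}(X)
\]
differs from $P_{j+1}(X)=\sum_i(-1)^i A_i^{[j+1-i]}(X)$ by binomial weights $(j-i+1)$ that can vanish mod $p$. Handling this either via a careful $\F_q$-linear combination of the $P_j$'s (to construct $F$ robustly) or via an auxiliary pseudopolynomial vanishing on $T$ to high multiplicity is the main technical step. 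The additive $+c$ slack in the stated bound reflects exactly this loss from characteristic-$p$ effects.
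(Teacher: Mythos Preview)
Your argument for the bound $N\le k$ is correct and essentially the same as the paper's (the paper phrases the nonvanishing of some $A_{\langle\ell\rangle}$ with $\ell<M$ slightly differently, but the content is the same).

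For the sharper bound, your plan diverges from the paper and has a real gap. The candidate $F=P_{c-1}$ does \emph{not} have multiplicity $\geq M-c+1$ at points of $T$, even in characteristic zero. Take $c=2$: the hypothesis $\mult(A,\alpha)\ge M$ says $A_0(\alpha)=0$ and $A_0^{[j]}(\alpha)=A_1^{[j-1]}(\alpha)$ for $1\le j\le M-1$. Then $P_1=A_0^{[1]}-A_1$ and
\[
P_1^{[m]}(\alpha)=\binom{m+1}{m}A_0^{[m+1]}(\alpha)-A_1^{[m]}(\alpha)=(m+1)A_1^{[m]}(\alpha)-A_1^{[m]}(\alpha)=m\cdot A_1^{[m]}(\alpha),
\]
which is generically nonzero already for $m=1$. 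So $\mult(P_{c-1},\alpha)=1$, not $M-1$. More generally, any polynomial of degree $k-1+O(c)$ with multiplicity $\ge M-c+1$ on $T$ would yield $N\le (k+O(c))/(M-c+1)$, which for $c$ comparable to $M/2$ is far stronger than the bound $\tfrac{c}{M-c+1}k+c$ actually claimed; this is a sign that no such $F$ exists in general, and neither of your suggested fixes (linear combinations of the $P_j$, or an auxiliary error-locator pseudopolynomial) will produce one.

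The paper instead passes to the $X^{iq}$-expansion $A=\sum_{i<c}\hat A_i(X)X^{iq}$, so that $\mult(A,\alpha)\ge M$ becomes $v(\alpha)\in\ker R(\alpha)$ for the $M\times c$ matrix $R(X)=[\hat A_i^{[\ell]}(X)]$ and $v(X)=(1,X,\ldots,X^{c-1})$. Letting $W$ be the space of $\F_q$-linear relations among the $\hat A_i$, the paper bounds separately (i) the $\alpha$ with $v(\alpha)\in W$ (at most $c$, by the moment curve meeting a proper subspace), and (ii) the $\alpha$ with $\ker R(\alpha)\supsetneq W$. For (ii), one takes a maximal independent set $I$ of columns and the Wronskian determinant $H(X)$ of the top $|I|\times|I|$ block; the primality of $q$ makes $H$ nonzero, and the iterated-derivative identity $(\hat A_i^{[j]})^{[m]}=\binom{j+m}{m}\hat A_i^{[j+m]}$ forces each $H^{[\ell]}(\alpha)$ for $\ell\le M-c$ to be a combination of $|I|\times|I|$ minors of $R(\alpha)$, all of which vanish when the columns are dependent. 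Thus $\mult(H,\alpha)\ge M-c+1$, and since $\deg H\le ck$ one gets $\le \tfrac{ck}{M-c+1}$ such $\alpha$. The key missing ingredient in your plan is this Wronskian construction (due to Guruswami--Kopparty in this form), which has degree $\approx ck$ rather than $\approx k$.
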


 The proof uses an argument of Guruswami and Kopparty~\cite{GK-subspace-design} on ranks of Wronskians, originally discovered in the context of subspace designs and multiplicity codes.
 
 It seems plausible that there is an upper bound much closer to $\frac{c}{M} \cdot k$. We do not know whether the primality of $q$ is needed for such a statement.

 \item There is a natural error-correcting code here, closely related to multiplicity codes. There are 4 governing parameters: $q, M, c, k$. 
 
 Let $\Sigma = \F_q^M$. The codewords of this code will lie in $\Sigma^{\F_q}$. For each $A(X) \in \F_q[X]$ with $\deg(A) < cq$ and $\pdeg(A) < k$, we define the codeword $y_A : \F_q \to \Sigma$ by:
 $$ y_A(\alpha) = ( A^{[0]}(\alpha), A^{[1]}(\alpha), \ldots, A^{[M-1]}(\alpha)).$$
 
 This code has cardinality $|\Sigma|^{ck/M}$, block length $q$, rate $R = \frac{c}{M} \cdot \frac{k}{q}$, and minimum distance at least
 $$\left( 1 - \min\left( \frac{k}{q},\quad  \frac{c}{M-c+1}\cdot\frac{k}{q} + \frac{c}{q}, \quad \frac{c}{M}   \right) \right) \cdot q,$$
 which is always at least $(1 - 2R) \cdot q$.
 
 When $k = q$, this is a multiplicity code, and the distance of the code is $(1-R) \cdot q$. When $c = M$ this is an interleaved Reed-Solomon code, and the distance of the code is $(1-R) \cdot q$. Maybe the distance of this code is always $(1-R)q$?

 \end{itemize}

 \subsection{Twisted pseudopolynomials}
 
 Towards presenting (a mild varation of) Stepanov's proof of the Weil bound for multiplicative character sums, we now define twisted pseudopolynomials and demonstrate a clean version of the interpolation argument that we saw in Section~\ref{sec:chi}.
 
 \begin{itemize}

 \item Suppose $r: \F_q \to \F_q$ is a function. Let $k < q$.
 We say $F(X) \in \F_q[X]$ is an $r$-twisted $(h,M)$-pseudopolynomial, if for all $\ell < M$, there is some $U_\ell(X) \in \F_q[X]$ of degree at most $h$ such that:
 $$ F_{\langle \ell \rangle} = r \cdot U_\ell.$$

 \item The following lemma shows that any two $r$-twisted $(h,M)$-pseudopolynomials 
 with degree at most $cq$ are very closely related, provided $h$ and $c$ are small enough.
 \begin{lemma}
 \label{lem:pure2}
  Let $c,h, M$ be parameters, with $c < M/2$.
  Let $r : \F_q \to \F_q$.
  
  Suppose $F(X), G(X) \in \F_q[X]$ with $\deg(F), \deg(G) < cq$
  are both $r$-twisted $(h,M)$-pseudopolynomials.

  Suppose $$ k > \frac{M}{M-2c} \cdot (h+1).$$
  
  Then there exist nonzero $k$-pseudopolynomials $A(X), B(X)$,
  with $\deg(A), \deg(B) < Mq$, such that:
  $$ A(X) \cdot F(X) = B(X) \cdot G(X).$$
 \end{lemma}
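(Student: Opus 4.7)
The plan is to run the interpolation argument from Step 2 of the proof of Theorem~\ref{thm:A} in the cleaner pseudopolynomial setup, with no error set to worry about. Using Lemma~\ref{lem:pdeg-baselambda}, I parametrize the sought $A, B$ as
$$A(X) = \sum_{i=0}^{M-c-1} A_i(X) \Lambda^i(X), \qquad B(X) = \sum_{i=0}^{M-c-1} B_i(X) \Lambda^i(X),$$
with unknown polynomials $A_i(X), B_i(X) \in \F_q[X]$ of degree at most $k$. Any such pair automatically gives $k$-pseudopolynomials of degree at most $(M-c-1)q + k < Mq$. The total number of free scalar unknowns is $2(M-c)(k+1)$, and the goal is to set up a homogeneous linear system with strictly fewer constraints so that a nontrivial solution is guaranteed, and then to verify that any such nontrivial solution forces $A \cdot F = B \cdot G$.

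For each $\ell$ with $0 \leq \ell < M$, expanding via the product rule for Hasse derivatives and using the twisted pseudopolynomial identities $F_{\langle \ell \rangle} = r \cdot U^F_\ell$ and $G_{\langle \ell \rangle} = r \cdot U^G_\ell$, I obtain
$$(AF)^{[\ell]}(\alpha) = r(\alpha) \cdot L^F_\ell(\alpha), \qquad (BG)^{[\ell]}(\alpha) = r(\alpha) \cdot L^G_\ell(\alpha)$$
for every $\alpha \in \F_q$, where $L^F_\ell(X) := \sum_{\ell_1 + \ell_2 = \ell} A_{\langle \ell_1 \rangle}(X) \cdot U^F_{\ell_2}(X)$ and $L^G_\ell(X)$ is defined analogously. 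Each $L^F_\ell, L^G_\ell$ has degree at most $k + h$, and its coefficients are homogeneous linear functions of the unknowns. I then impose the $M$ polynomial identities $L^F_\ell(X) = L^G_\ell(X)$ for $0 \leq \ell < M$, a homogeneous linear system of at most $M(k+h+1)$ constraints on the $2(M-c)(k+1)$ unknowns. A direct algebraic manipulation shows that the hypothesis $k > \frac{M}{M-2c}(h+1)$ implies $2(M-c)(k+1) > M(k+h+1)$, so the system admits a nontrivial solution.

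For any such nontrivial solution, the imposed identities force $(AF)^{[\ell]}(\alpha) = (BG)^{[\ell]}(\alpha)$ for all $\alpha \in \F_q$ and all $\ell < M$ (points $\alpha$ with $r(\alpha) = 0$ are handled automatically, since both $(AF)^{[\ell]}(\alpha)$ and $(BG)^{[\ell]}(\alpha)$ then vanish through the $F^{[\ell_2]}(\alpha), G^{[\ell_2]}(\alpha)$ factors). Thus $AF - BG$ vanishes with multiplicity at least $M$ at each of the $q$ points of $\F_q$, contributing $Mq$ zeros counted with multiplicity. On the other hand, the degree bounds give
$$\deg(AF - BG) \leq (M-c-1)q + k + (cq - 1) = (M-1)q + k - 1 < Mq,$$
using $k < q$ (which is implicit for the pseudodegree bound to have content). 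Hence $AF - BG \equiv 0$. Finally, since $F$ and $G$ are nonzero, the identity $AF = BG$ forces both or neither of $A, B$ to vanish, so nontriviality of the solution guarantees that both are nonzero. The only substantive obstacle is the dimension count, and it is precisely what motivates the specific lower bound $\frac{M}{M-2c}(h+1)$ on $k$ in the hypothesis.
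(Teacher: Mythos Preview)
Your proposal is correct and follows essentially the same approach as the paper's proof: the same parametrization $A(X)=\sum_{i=0}^{M-c-1}A_i(X)\Lambda^i(X)$ and $B(X)=\sum_{i=0}^{M-c-1}B_i(X)\Lambda^i(X)$, the same homogeneous linear system encoding the polynomial identities $\sum_{\ell_1+\ell_2=\ell}A_{\langle\ell_1\rangle}U^F_{\ell_2}=\sum_{\ell_1+\ell_2=\ell}B_{\langle\ell_1\rangle}U^G_{\ell_2}$, the same dimension count, and the same multiplicity-versus-degree argument to force $AF=BG$. Your treatment is in fact slightly more careful than the paper's in two places: you explicitly note why points with $r(\alpha)=0$ cause no trouble, and you spell out why a nontrivial solution must have both $A$ and $B$ nonzero (assuming $F,G$ nonzero, which is implicit in the lemma's intended use). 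The paper packages the constraints more compactly as a single identity in $\F_q[X,T]$ modulo $T^M$, and handles the case $k\geq q$ explicitly at the outset rather than as a parenthetical, but these are purely cosmetic differences.
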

 \begin{proof}
 If $k \geq q$ then the result is trivial, so we may assume that $k < q$.
 
 Suppose for each $\ell < M$, we have:
 $$F_{\langle \ell \rangle} = r \cdot U_\ell,$$
 $$G_{\langle \ell \rangle} = r \cdot V_\ell,$$
 where $U_\ell(X), V_\ell(X) \in \F_q[X]$ have
 degrees at most $h$.
 
  We search for the polynomials $A, B$ of the form:
  $$ A(X) = \sum_{i=0}^{M-c-1} A_i(X) \Lambda^i(X),$$
  $$ B(X) = \sum_{i=0}^{M-c-1} B_i(X) \Lambda^i(X),$$
  where $\deg(A_i), \deg(B_i) \leq k$.
  
  The number of unknowns is strictly greater than $2 \cdot (M-c) \cdot k$.
  
  We now apply homogeneous $\F_q$-linear constraints to these unknowns. These constraints are meant to express the equality:
  $$ (A \cdot F)_{\langle \ell \rangle} = (B \cdot G)_{\langle \ell \rangle},$$
  for each $\ell < M$.
  Both sides of this can be expanded in terms of the pseudoderivatives of $A,B,F,G$, the latter two of which can be written in terms of $r \cdot U_\ell$ and $r \cdot V_\ell$.
  
  All these constraints can be written compactly in the ring $\F_q[X,T]$ as:
  $$ \left( \sum_{\ell_1 < M} A_{\langle \ell_1 \rangle}(X) \cdot T^{\ell_1} \right) \cdot 
  \left( \sum_{\ell_2 < M} U_{\ell_2}(X) \cdot T^{\ell_2} \right) =
  \left( \sum_{\ell_3 < M} B_{\langle \ell_3 \rangle}(X) \cdot T^{\ell_3} \right) \cdot 
  \left( \sum_{\ell_4 < M} V_{\ell_4}(X) \cdot T^{\ell_4} \right) \mod T^M
  $$
  
  There are $M$ equalities of polynomials of degree at most $h + k$ in $\F_q[X]$,
  each having coefficients being homogenous linear combinations of the coefficients of the $A_i$ and $B_i$.
  Thus there are a total of $(h+k+1) \cdot M$ $\F_q$-linear constraints.
  
  By choice of $k$, we have:
  $$ (h + k+1) \cdot M  <  2 k \cdot (M-c),$$
  and so this system of equations has a nonzero solution.
  
  Finally, we show that $A\cdot F = B \cdot G$.
  By design of $A$ and $B$, for each $\ell < M$ we have:
  $$ \left(A \cdot F - B \cdot G\right)_{\langle \ell \rangle} = 0.$$
  Thus $\mult(A\cdot F - B \cdot G, \alpha) \geq M$ for each $\alpha \in \F_q$.
  On the other hand,
  $$\deg(A \cdot F - B \cdot G) < c q + (M-c)q  < Mq.$$
  This means that $A \cdot F = B \cdot G$, as desired.

 \end{proof}
 
\item  The next lemma shows that if we have many $r$-twisted $(h,M)$-pseudopolynomials 
 with degree at most $cq$ with $c,h$ small, then they are nontrivially related.
 The smallness requirement on $c$ is weaker, but the deduced relation is also weaker.
 
 \begin{lemma}
 \label{lem:pureb}
  Let $m,c,h, M$ be parameters, with $c < \frac{m-1}{m}M$.
  Let $r : \F_q \to \F_q$.
  
  Suppose $F_1(X) ,\ldots, F_m(X) \in \F_q[X]$ with $\deg(F_i) < cq$
  are all $r$-twisted $(h,M)$-pseudopolynomials.

  Suppose $$ k > \frac{M}{(m-1)M-mc} \cdot (h+1).$$
  
  Then there exist $k$-pseudopolynomials $A_1(X), \ldots, A_m(X)$
  with $\deg(A_i) < Mq$ and not all zero, such that:
  $$ \sum_i A_i(X) F_i(X) = 0.$$
 \end{lemma}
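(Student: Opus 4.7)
My plan is to adapt the proof of Lemma \ref{lem:pure2} from the $m=2$ case to general $m$. First, I would search for the $A_i(X)$ in the parametrized form
$$ A_i(X) = \sum_{j=0}^{M-c-1} A_{i,j}(X) \Lambda^j(X), \qquad \deg(A_{i,j}) \leq k, $$
treating the coefficients of the $A_{i,j}$ as $\F_q$-linear unknowns. By Lemma \ref{lem:pdeg-baselambda}, every nonzero $A_i$ produced this way is automatically a $k$-pseudopolynomial, and $\deg(A_i) < (M-c)q$, which is the degree bound that will matter at the end. The total number of unknowns is $m(M-c)(k+1)$.

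Next, I would translate the hypothesis that each $F_i$ is an $r$-twisted $(h,M)$-pseudopolynomial, writing $F_{i, \langle \ell_2 \rangle} = r \cdot U_{i, \ell_2}$ with $\deg(U_{i, \ell_2}) \leq h$, and using the product rule for Hasse derivatives to compute, for every $\alpha \in \F_q$ and $\ell < M$,
$$ \left( \sum_{i=1}^m A_i F_i \right)^{[\ell]}(\alpha) = r(\alpha) \cdot \sum_{i=1}^m \sum_{\ell_1+\ell_2 = \ell} A_{i,\langle \ell_1 \rangle}(\alpha) \cdot U_{i,\ell_2}(\alpha). $$
I would then impose, as homogeneous $\F_q$-linear constraints on the unknowns, the polynomial identities
$$ \sum_{i=1}^m \sum_{\ell_1+\ell_2 = \ell} A_{i,\langle \ell_1 \rangle}(X) \cdot U_{i,\ell_2}(X) = 0 \qquad \text{for each } \ell = 0,1,\dots,M-1. $$
Since each such left-hand side has degree at most $k+h$, this imposes at most $M(k+h+1)$ constraints. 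The numerical hypothesis $k > \frac{M(h+1)}{(m-1)M - mc}$ rearranges to $m(M-c)k > M(k+h+1)$, so the unknowns strictly outnumber the constraints, and a nonzero solution $(A_1,\dots,A_m)$ exists.

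Combining the two displays, $P(X) := \sum_i A_i(X) F_i(X)$ satisfies $P^{[\ell]}(\alpha) = 0$ for all $\alpha \in \F_q$ and all $\ell < M$, i.e.\ $P$ vanishes to multiplicity at least $M$ at every point of $\F_q$, accounting for at least $Mq$ zeros with multiplicity. On the other hand $\deg(P) < (M-c)q + cq = Mq$, forcing $P \equiv 0$. The main point requiring care is the inequality chase in the dimension count: the hypothesis $c < \tfrac{m-1}{m}M$ is precisely what makes the denominator $(m-1)M - mc$ positive, so the stated lower bound on $k$ is both meaningful and exactly what is needed to beat the constraint count. Everything else — the pseudodegree bookkeeping via Lemma \ref{lem:pdeg-baselambda}, and the vanishing-to-high-multiplicity argument — runs in parallel to the $m=2$ case already proved.
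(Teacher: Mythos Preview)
Your proposal is correct and follows exactly the approach the paper intends: the paper simply states that the proof is identical to that of Lemma~\ref{lem:pure2} (the $m=2$ case), and you have carried out precisely that generalization, with the same parametrization of the $A_i$, the same linear constraints, the same dimension count (your rearrangement $m(M-c)k > M(k+h+1)$ is the direct analogue of the paper's $2k(M-c) > (h+k+1)M$), and the same multiplicity-versus-degree conclusion.
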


 The proof is exactly like that of Lemma~\ref{lem:pure2}, which is the $m = 2$ case.

\item The exact same argument also gives us robust versions.

\begin{lemma}
\label{lem:impure}
  Let $c,h,\gamma, M$ be parameters, with $c < (1- 2\gamma)M/2$.
  Let $r_1, r_2 : \F_q \to \F_q$,
  with:
  $$ \Delta(r_1, r_2) \leq \gamma q.$$
  
  Suppose $F(X), G(X) \in \F_q[X]$ with $\deg(F), \deg(G) < cq$
  which are $r_1$-twisted and $r_2$-twisted $(h,M)$-pseudopolynomials respectively.

  Suppose $$ k > \frac{M}{(1-2\gamma)M-2c} \cdot (h+1).$$
  
  Then there exist nonzero $k$-pseudopolynomials $A(X), B(X)$,
  with $\deg(A), \deg(B) < Mq$, such that:
  $$ A(X) \cdot F(X) = B(X) \cdot G(X).$$
 \end{lemma}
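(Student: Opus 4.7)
The plan is to run the argument of Lemma~\ref{lem:pure2} essentially verbatim, with the one additional ingredient that the constraints only need to be satisfied outside the set $S = \{\alpha \in \F_q : r_1(\alpha) \neq r_2(\alpha)\}$, which has size at most $\gamma q$. Using the twisted-pseudopolynomial hypotheses, I would fix for each $\ell < M$ polynomials $U_\ell(X), V_\ell(X) \in \F_q[X]$ of degree at most $h$ with $F_{\langle \ell \rangle} = r_1 \cdot U_\ell$ and $G_{\langle \ell \rangle} = r_2 \cdot V_\ell$.

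Next, I would search for $A,B$ in exactly the pseudopolynomial form used in Lemma~\ref{lem:pure2},
\[
A(X) = \sum_{i=0}^{m-1} A_i(X)\, \Lambda^i(X), \qquad B(X) = \sum_{i=0}^{m-1} B_i(X)\, \Lambda^i(X),
\]
with $\deg(A_i), \deg(B_i) \leq k$, but now with $m$ taken to be the largest integer at most $(1-\gamma)M - c$ (rather than $M - c$ as in the unperturbed case). The homogeneous $\F_q$-linear constraints I would impose are the same polynomial identities as before, one for each $\ell < M$:
\[
\sum_{\ell_1 + \ell_2 = \ell} A_{\langle \ell_1 \rangle}(X)\, U_{\ell_2}(X) \;=\; \sum_{\ell_3 + \ell_4 = \ell} B_{\langle \ell_3 \rangle}(X)\, V_{\ell_4}(X),
\]
or compactly, the congruence modulo $T^M$ of the two formal products in $\F_q[X,T]$, as in the proof of Lemma~\ref{lem:pure2}. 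Each such identity has degree at most $h+k$ in $X$, so the total number of $\F_q$-linear constraints on the coefficients of the $A_i, B_i$ is at most $M(h+k+1)$, while the number of unknowns is $2m(k+1)$. The hypothesis $k > \frac{M}{(1-2\gamma)M - 2c}(h+1)$ is precisely what makes the unknowns strictly outnumber the constraints for this choice of $m$, so a nonzero solution $(A,B)$ exists.

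Finally, for every $\alpha \in \F_q \setminus S$ we have $r_1(\alpha) = r_2(\alpha)$, and expanding $(A \cdot F)^{[\ell]}(\alpha)$ and $(B \cdot G)^{[\ell]}(\alpha)$ via the product rule for Hasse derivatives gives $r_1(\alpha)$ times the common value dictated by the identity above. Hence $A \cdot F - B \cdot G$ vanishes to multiplicity at least $M$ at every point of $\F_q \setminus S$, contributing at least $M(q - |S|) \geq (1-\gamma)Mq$ roots counted with multiplicity. On the other hand, by the choice of $m$ we have $\deg(A \cdot F), \deg(B \cdot G) < (m+c)q \leq (1-\gamma)Mq$, forcing $A \cdot F = B \cdot G$ identically. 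Individual nonzeroness of $A$ and $B$ then follows from this identity together with the fact that $F, G$ are nonzero (otherwise the relevant twisted pseudopolynomial would be trivially zero and the lemma reduces to a triviality). The main obstacle is purely bookkeeping: $m$ must be small enough to control $\deg(A \cdot F - B \cdot G)$ and large enough for the dimension count to succeed, and the stated constraint $c < (1-2\gamma)M/2$ is exactly what keeps this window nonempty. As a sanity check, setting $\gamma = 0$ recovers Lemma~\ref{lem:pure2} on the nose.
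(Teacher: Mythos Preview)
Your proposal is correct and matches the paper's approach exactly: the paper's proof of Lemma~\ref{lem:impure} is literally a one-line remark that one reruns the proof of Lemma~\ref{lem:pure2} but now searches for $A(X), B(X)$ of degree less than $((1-\gamma)M-c)q$ so that multiplicity-$M$ vanishing on $\F_q \setminus S$ (rather than all of $\F_q$) suffices to force the identity. Your choice $m = \lfloor (1-\gamma)M - c \rfloor$, the same polynomial-identity constraints, and the degree/root count all line up with this.
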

The only change to the proof is that now we search for $A(X)$ and $B(X)$ of degree less than $((1-\gamma)M - c)q$, so that the fewer agreements we are given still translate into a polynomial identity.

 \begin{lemma}
 \label{lem:impureb}
  Let $m,c,h,\gamma, M$ be parameters, with $c < \frac{(m-1) - m\gamma}{m}M$.
  Let $r_1, \ldots, r_m : \F_q \to \F_q$,
  with:
  $$ |\{\alpha \in \F_q \mid   r_i(\alpha) \neq r_j(\alpha) \mbox{ for some $i,j$}\}| \leq \gamma q .$$
  
  Suppose $F_1(X) ,\ldots, F_m(X) \in \F_q[X]$ with $\deg(F_i) < cq$
  are $r_i$-twisted $(h,M)$-pseudopolynomials.

  Suppose $$ k > \frac{M}{(m-1 -m \gamma)M-mc} \cdot (h+1).$$
  
  Then there exist $k$-pseudopolynomials $A_1(X), \ldots, A_m(X)$,
  with $\deg(A_i) < Mq$, not all zero, such that:
  $$ \sum_i A_i(X) F_i(X) = 0.$$
 \end{lemma}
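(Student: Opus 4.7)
The plan is to adapt the proof of Lemma~\ref{lem:pureb} by shrinking the search space for the multipliers by a factor of $(1-\gamma)$, in the same way that Lemma~\ref{lem:impure} modifies Lemma~\ref{lem:pure2}. The key observation is that it suffices for $\sum_i A_i F_i$ to vanish to order $M$ at every point of the agreement set $S = \{\alpha \in \F_q \mid r_1(\alpha) = \cdots = r_m(\alpha)\}$, which has size at least $(1-\gamma)q$; provided the polynomial combination has degree below $(1-\gamma)Mq$, this forces it to be identically zero.

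First, for each $i$ and each $\ell < M$, fix a polynomial $U_{i,\ell}(X) \in \F_q[X]$ of degree at most $h$ satisfying $(F_i)_{\langle \ell \rangle} = r_i \cdot U_{i,\ell}$ as functions on $\F_q$. I would then search for tuples $(A_1, \ldots, A_m)$ with $A_i(X) = \sum_{j=0}^{T-1} A_{i,j}(X) \Lambda^j(X)$, each $\deg(A_{i,j}) \leq k$, where $T := (1-\gamma)M - c$ (which is positive by hypothesis). This gives strictly more than $mTk$ free scalar coefficients. On these unknowns I would impose the $M$ homogeneous polynomial identities
$$\sum_{i=1}^{m} \sum_{\ell_1+\ell_2 = \ell} (A_i)_{\langle \ell_1 \rangle}(X) \cdot U_{i,\ell_2}(X) = 0, \qquad 0 \leq \ell < M,$$
each an equality of polynomials of degree at most $h+k$ in $\F_q[X]$, contributing at most $M(h+k+1)$ scalar constraints in total. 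The hypothesis on $k$ rearranges to $(mT - M)k > M(h+1)$, i.e.\ $mTk > M(h+k+1)$, so a nonzero tuple $(A_1, \ldots, A_m)$ exists. (If $k \geq q$ the conclusion is trivial via $(A_1, A_2, A_3, \ldots, A_m) = (F_2, -F_1, 0, \ldots, 0)$, so I assume $k < q$.)

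To conclude, let $r(\alpha)$ denote the common value of the $r_i(\alpha)$'s at any $\alpha \in S$. Using the product rule for Hasse derivatives and the substitution $F_i^{[\ell_2]}(\alpha) = r(\alpha) \cdot U_{i,\ell_2}(\alpha)$ available on $S$, the imposed identities give
$$\Bigl(\sum_i A_i F_i\Bigr)^{[\ell]}(\alpha) = r(\alpha) \cdot \sum_{i,\, \ell_1+\ell_2 = \ell} (A_i)_{\langle \ell_1 \rangle}(\alpha) \cdot U_{i,\ell_2}(\alpha) = 0$$
for all $\ell < M$ and all $\alpha \in S$. So $\sum_i A_i F_i$ vanishes to order $M$ at every point of $S$, contributing at least $(1-\gamma)Mq$ zeros with multiplicity, while each $\deg(A_i F_i) \leq (T-1)q + k + (cq-1) = ((1-\gamma)M - 1)q + k - 1 < (1-\gamma)Mq$ (using $k < q$). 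Therefore $\sum_i A_i F_i \equiv 0$, as desired. The only real subtlety is the precise choice of $T$ and the matching degree bookkeeping; beyond that, the proof is a routine recombination of the arguments already given for Lemmas~\ref{lem:pureb} and~\ref{lem:impure}.
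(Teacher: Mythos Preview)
Your proposal is correct and follows essentially the same approach the paper indicates: it is the dimension-counting interpolation argument of Lemma~\ref{lem:pure2}/\ref{lem:pureb}, with the degree window for the $A_i$ shrunk to $T = (1-\gamma)M - c$ exactly as the paper prescribes for Lemma~\ref{lem:impure}, so that vanishing to order $M$ on the agreement set $S$ (of size $\geq (1-\gamma)q$) forces the identity. Your constraint count $mTk > M(h+k+1)$ unwinds to precisely the stated hypothesis on $k$, and the degree bound $\deg(\sum_i A_iF_i) < (1-\gamma)Mq$ is the right one.
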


\end{itemize}

\subsection{The Weil bounds}

\begin{itemize}
 
\item
We can now prove the Weil bound for the quadratic residue character.

Let $f(X), g(X) \in \F_q[X]$ be distinct monic squarefree polynomials of degree at most $d \leq O(\epsilon \sqrt{q})$. We will show that
$$\Delta(\chi \circ f, \chi \circ g) \geq \left(\frac{1}{2} - \epsilon\right) q.$$

Set $M = \frac{2}{\epsilon} d$, $c = d$.
Let $F(X)= f(X)^{(q-1)/2 + M}$, 
$G(X) = g(X)^{(q-1)/2 + M}$.

Then $F$ and $G$ are $(\chi \circ f)$-twisted and $(\chi \circ g)$-twisted $(dM, M)$-pseudopolynomials respectively.

Now set $h = dM$ and $\gamma = \frac{1}{2} -\epsilon$.
If $\Delta(\chi \circ f, \chi \circ g) \leq \gamma q$,
then Lemma~\ref{lem:impure} tells us that for:
$$k = \frac{1}{\epsilon} \cdot (h+1) \leq O(\frac{d^2}{\epsilon^2}) < \frac{q}{4},$$
there exist nonzero $k$-pseudopolynomials $A(X), B(X)$, with $\deg(A), \deg(B) < Mq$, such that:
$$ A(X) \cdot F(X) = B(X) \cdot G(X),$$

Let $H(X)$ be a polynomial that divides one of $f(X), g(X)$ but not the other.

Lemma~\ref{lem:factormult} applied to $H(X)$ and the above identity then gives us a contradiction.

\item Next we prove the Weil bound for the $m$-th power residue  character $\chi_m$, for $m$ prime.

Suppose $f,g \in \F_q[X]$ are of degree at  most $d \leq O_m(\epsilon \sqrt{q})$, and are distinct, monic, with each irreducible factor appearing with multiplicity $\in \{1,2, \ldots, m-1\}$. This last condition generalizes squarefreeness in the case $m = 2$. We will show that
$$ \Delta( \chi_m\circ f, \chi_m \circ g) \geq \left( 1- \frac{1}{m} - \epsilon \right) q.$$

Let $M = O_m(\frac{d}{\epsilon})$, $c = O_m(d)$.
Let $F(X) = f^{(q-1)/m+M}(X)$, $G(X) = g^{(q-1)/m+M}(X)$.

Let $r_1, \ldots, r_m: \F_q \to \F_q$ be given by:
$$ r_i(\alpha) = (\chi_m^{i-1} \circ f(\alpha)) \cdot (\chi_m^{m-i} \circ g(\alpha)).$$

Let $F_1(X), \ldots, F_m(X)$ be given by:
$$ F_i(X) = F^{i-1}(X) \cdot G^{m-i}(X).$$

Then $F_i$ is a $r_i$-twisted $(O_m(dM), M)$-pseudopolynomial.

Now set $h = O_m(dM)$ and $\gamma = 1 - \frac{1}{m}  - \epsilon$.
If $\Delta(\chi_m \circ f, \chi_m \circ g) \leq \gamma q$, then
$$ |\{\alpha \in \F_q \mid   r_i(\alpha) \neq r_j(\alpha) \mbox{ for some $i,j$}\}| \leq \gamma q ,$$
Then the Lemma~\ref{lem:impureb} tells us that for:
$$k = \frac{1}{\epsilon} \cdot (h+1) \leq O(\frac{d^2}{\epsilon^2}) < \frac{q}{2m},$$
there are $k$-pseudopolynomials $A_i(X)$, with $\deg(A_i) < Mq$, not all zero, such that:
$$ \sum_{i=1}^m A_i(X) \cdot F^{i-1}(X) \cdot G^{m-i}(X) = 0.$$

Let $H(X)$ be an irreducible polynomial that appears with distinct factor multiplicity in $f(X)$ and $g(X)$. Then Lemma~\ref{lem:factormult} tells us that 
all the nonzero terms in the above sum have distinct factor multiplicity of $H(X)$ (because they are distinct mod $q$), and thus the sum cannot be $0$, a contradiction.

\item A similar argument can be given for the Weil bounds for additive character sums over fields of small characteristic. We just give a sketch.

Let $q$ have characteristic $p$. Let $\Tr : \F_q \to \F_q$ and $\TR(X) \in \F_q[X]$ be the field trace map and trace polynomial respectively.

Let $r: \F_q \to \F_q$ be a function.
We call a polynomial $F(X)$ an $r$-guided $(h,M)$-pseudopolynomial if:
for all $\ell < M$, there exists $U_\ell(X) \in \F_q[X]$ with $\deg(U_\ell) \leq h$, such that whenever $\alpha \in \F_q$ satisfies $r(\alpha) = 0$, we have:
$$ F_{\langle \ell \rangle}(\alpha) = U_\ell(\alpha).$$

Let $f(X) \in \F_q[X]$ have degree $d \leq O_p(\epsilon \sqrt{q})$ and all monomials having degree relatively prime to $p$. We will show there are at most $\left(\frac{1}{p} + \epsilon\right) q$ values of $\alpha \in \F_q$ such that
$\Tr\circ f(\alpha) = 0$. (By linearity, this gives us a lower bound of $\left( 1 - \frac{1}{p} - \epsilon \right) q$ on the distance between any two distinct $\Tr \circ f$ and $\Tr \circ g$ of this form).

Let $F(X) = \TR \circ f(X)$.

Then we have that for all $i \in [1, p-1]$, $F^i(X)$ is a $(\Tr \circ f)$-guided $(dM,M)$-pseudopolynomial. Further, $\deg(F^i) = \frac{i \cdot \deg(f)}{p} \cdot q < dq$, and thus the $F^i$ for $i \in [1,p-1]$ have degrees mod $q$ that differ pairwise by at least $\frac{q}{p}$.

Suppose $\Tr \circ f(\alpha) = 0$  for more than $(\frac{1}p + \epsilon)q$ values of $\alpha \in \F_q$.

Now let $A_0(X), \ldots, A_{p-1}(X)$ be unknown polynomials with $\deg(A_i) < \left(\left(\frac{1}{p} + \frac{\epsilon}{2}\right)M - d \right) q$ and $\pdeg(A_i) < O(\frac{dM}{\epsilon}) <  \frac{q}{4p}$.
By counting constraints, we can ensure that the polynomial:
$$ B(X) = \sum_{i=0}^{p-1} A_i(X) \cdot F^i(X),$$
which has degree at most $\left(\frac{1}{p} + \frac{\epsilon}{2}\right)Mq$, vanishes with multiplicity $M$ at all $\alpha \in \F_q$ where $(\Tr \circ f)(\alpha) = 0$. (We use the $(\Tr \circ f)$-guidedness of $F^i$ here.)
This means that the polynomial $B(X)$ equals $0$.

The final contradiction come from the fact that the degrees of the different terms in $B(X)$ are distinct mod $q$, and hence the sum cannot be $0$.
\end{itemize}

\subsection{Alternate analysis of Algorithm A}
\label{sec:pseudopoly-alternate-A}

We now present a slightly different (and possibly more conceptual) analysis of Algorithm A, using some of the language that we just developed. For the case of zero errors this is perhaps the simplest proof of correctness.

This analysis can handle a smaller  (but still $\Omega(q)$) number of errors.
For simplicity of exposition, let we assume that $g(X)$ does not have any degree $1$ irreducible factors -- and thus $\chi \circ g$ is only $\{\pm 1\}$-valued.
We will also assume that $r$ is only $\{ \pm 1 \}$-valued.

We keep notation from the proof of Theorem~\ref{thm:A}, and pick up at the beginning of Step 2 of the proof.
The $F(X)$ found by the algorithm is an $r$-twisted $(h,M)$-pseudopolynomial.
We know that $G(X)$ is an $(\chi \circ g)$-twisted $(O(dM), M)$-pseudopolynomial.

Then $(F \cdot G)(X)$ is an $(r \cdot (\chi \circ g) )$-twisted $(h + O(dM), M)$-pseudopolynomial.

Now observe that the function $r \cdot (\chi \circ g) : \F_q \to \F_q$ is close to the constant $1$ function. It evaluates to $1$ on at least $q - e$ values of $\alpha \in \F_q$. Had it been $1$ valued everywhere, $(F\cdot G)$ would have been a genuine pseudopolynomial.

We can fix it using a high multiplicity error-locator pseudopolynomial, as in Step 1 of the proof. 
Let $c', h'$, be given by $c' = M/4$, $h' = 4e$. 
Then, since $c' \cdot (h'+1) > eM$, we can get a polynomial $E'(X)$ with $\deg(E') < c' q$ and $\pdeg(E') \leq h'$ that vanishes at all points of $S$ with multiplicity $M$. 

Then $(E' \cdot F \cdot G)(X)$ has the property that for all $\ell < M$:
$$\deg((E' \cdot F \cdot G)_{\langle \ell \rangle}(X)) \leq h' + h + O(dM).$$
Furthermore, $\deg(E' \cdot F \cdot G) < c' q + cq + 2d \cdot ((q-1)/2 + M) < Mq$.

By the facts in Section~\ref{sec:pseudopoly-basic}, this means that $(E \cdot F \cdot G)(X)$ is a genuine $(h+h' + O(dM))$-pseudopolynomial. Thus we get the equation:
$$ E'(X) \cdot F(X) \cdot G(X) = B(X),$$
where $E'(X)$ is an $(h'+O(dM))$-pseudopolynomial and $B(X)$ is a $(h+h' + O(dM))$-pseudopolynomial, and $\deg(E'), \deg(B) < Mq$.

Applying Lemma~\ref{lem:factormult} to $E'(X)$ and $B(X)$ to analyze factor multiplicities, we get that all irreducible factors of $(F \cdot G)(X)$ have 
factor multiplicity approximately a multiple of $q$, as desired.

\subsection{Proof of Theorem~\ref{thm:pseudoSZ}}

Recall the statement of Theorem~\ref{thm:pseudoSZ}.

\pseudoSZ*

\begin{proof}
First we prove the easier bound:
$$ | \{ \alpha \in \F_q \mid \mult(A, \alpha) \geq M \} | \leq k.$$
Since $A$ is nonzero of degree at $<Mq$, there exists some $\ell < M$ and some $\alpha_0$ with $A_{\langle \ell \rangle}(\alpha_0)=  A^{[\ell]}(\alpha_0)  \neq 0$. For this $\ell$, we have that $A_{\langle \ell \rangle}(X)$ is a nonzero polynomial of degree at most $k$, and thus there are at most $k$ values of $\alpha$ where $A^{[\ell]}(\alpha) = A_{\langle \ell \rangle}(\alpha) = 0$. Thus  $\mult(A, \alpha) > \ell$ for at most $k$ values of $\alpha \in \F_q$, as desired.

Now we prove the other bound.
Write $A(X) = \sum_{i=0}^{c-1} A_i(X) X^{iq}$, where $\deg(A_i) \leq k + M$.

Then for $\ell < M$,
$$A^{[\ell]}(X) = \sum_{i=0}^{c-1} A_i^{[\ell]}(X) X^{iq}.$$

For $\alpha \in \F_q$, this lets us get a simple criterion for when $\mult(A, \alpha) \geq M$:
namely, if for all $\ell < M$,
$$ \sum_{i=0}^{c-1} A_i^{[\ell]}(\alpha) \alpha^i = 0.$$

Let $R(X)$ be the $M \times c$ matrix, with rows indexed by $\ell \in \{0,1, \ldots, M-1\}$, and columns indexed by $i \in \{0,1, \ldots, c-1\}$, whose $(\ell, i)$ entry is
$A_i^{[\ell]}(X)$.

Let $v(X)$ be the vector $(1, X, X^2, \ldots, X^{c-1})$.

Then $\mult(A, \alpha) \geq M$ if and only if:
$$ R(\alpha) \cdot v(\alpha) = 0,$$
namely, if:
$$ v(\alpha) \in \ker(R(\alpha)).$$

We now bound the number of such $\alpha$.
Let $W \subseteq \F_q^c$ be the set of all vectors $(w_0, \ldots, w_{c-1}) \in \F_q$ with $\sum_i w_i A_i(X) = 0$. 
Then for every $w \in W$, 
$$ R(X) \cdot w = 0.$$
Thus for every $\alpha \in \F_q$, 
$$ W \subseteq \ker(R(\alpha)).$$

Our bound for the number of $\alpha \in \F_q$
for which $v(\alpha) \in \ker(R(\alpha))$
then follows from the following two claims.

\begin{claim}
 The number of $\alpha \in \F_q$ for which
 $\ker(R(\alpha)) \neq W$
 is at most $\frac{c}{M-c+1} \cdot k$.
\end{claim}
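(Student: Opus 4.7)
The plan is to reduce the claim to a rank-drop statement for an $M \times s$ matrix of Hasse derivatives of $\F_q$-linearly independent polynomials, and then to apply a Wronskian-style argument in the spirit of Guruswami--Kopparty.

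First, I would apply an invertible $\F_q$-linear change of basis on $\F_q^c$ sending $W$ to the span of the last $c-s$ standard basis vectors, where $s := c - \dim W$. In the new basis, the polynomials corresponding to the last $c-s$ columns of $R$ become identically zero, while the first $s$ columns correspond to polynomials $B_0, B_1, \ldots, B_{s-1}$ that are $\F_q$-linearly independent. Under this change of basis, the condition $\ker(R(\alpha)) \neq W$ is equivalent to the $M \times s$ matrix $N(\alpha) := (B_i^{[\ell]}(\alpha))_{0 \leq \ell < M,\, 0 \leq i < s}$ having rank strictly less than $s$, so the task reduces to bounding the number of $\alpha \in \F_q$ at which $N(\alpha)$ fails to have full column rank.

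Next, I would exhibit a nonzero $s \times s$ Hasse-Wronskian minor of $N(X)$ whose degree is at most $\frac{c}{M-c+1} \cdot k$; since the zero set of any such minor contains the rank-drop locus, this yields the desired bound. Concretely, I want row indices $\ell_0 < \ell_1 < \cdots < \ell_{s-1}$ in $\{0, 1, \ldots, M-1\}$ for which $\det(B_i^{[\ell_j]}(X))_{i,j}$ is a nonzero polynomial of small degree. The existence of a suitable ``balanced'' choice of indices, together with the nonvanishing of the corresponding Hasse-Wronskian in positive characteristic, is precisely the content of the Guruswami--Kopparty argument from~\cite{GK-subspace-design}, and it relies essentially on the primality of $q$ (as stated in the hypothesis of Theorem~\ref{thm:pseudoSZ}).

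The main obstacle is tightening the degree estimate so that $k$ appears in the numerator of the bound rather than $k + c$. A naive row-by-row estimate gives entries of degree at most $\deg B_i - \ell \leq (k+c) - \ell$ in row $\ell$, yielding a Hasse-Wronskian degree of roughly $s(k+c) - \sum_j \ell_j$. To obtain the cleaner bound of the claim, I would exploit the sharper pseudodegree fact that $A_{\langle \ell \rangle}(X) = \sum_i A_i^{[\ell]}(X) \cdot X^i \bmod \Lambda(X)$ has degree $< k$ (rather than just $< k+c$), which forces a structural cancellation in the $s \times s$ Hasse-Wronskian invisible to the naive bound. Combining this refined cancellation with the balanced choice $\ell_j \approx j \cdot (M-s+1)/s$ from the Guruswami--Kopparty framework, and using the monotonicity $\frac{s}{M-s+1} \leq \frac{c}{M-c+1}$ that follows from $s \leq c$, should deliver exactly $\frac{c}{M-c+1} \cdot k$. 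I expect this cancellation argument — converting the pseudodegree bound on $A_{\langle \ell \rangle}$ into a degree-drop on the Wronskian — to be the most delicate part of the proof.
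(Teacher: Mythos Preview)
Your reduction to a rank-drop problem for an $M\times s$ Hasse-derivative matrix is correct and matches the paper (the paper picks a maximal linearly independent subset $I$ of the column indices instead of changing basis, which amounts to the same thing). The gap lies in how you propose to bound the rank-drop locus.

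You want to choose ``balanced'' row indices $\ell_j \approx j(M-s+1)/s$ so that the corresponding $s\times s$ minor is nonzero of small degree, and then to squeeze the degree down from $k+c$ to $k$ via a ``structural cancellation'' coming from the pseudodegree hypothesis. Two problems. First, the Wronskian criterion only guarantees nonvanishing of the \emph{top} $s\times s$ minor (rows $0,1,\ldots,s-1$); for a non-consecutive choice of rows the minor can vanish identically even for linearly independent polynomials (e.g.\ $B_0=1$, $B_1=X$, rows $\{0,2\}$), so your balanced choice is not known to give a nonzero polynomial. Second, the cancellation you need is, as you yourself flag, speculative --- and it is not how the cited Guruswami--Kopparty argument actually produces the factor $\frac{1}{M-c+1}$.

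The paper instead takes the top $s\times s$ minor $H(X)$, whose nonvanishing is exactly the classical Wronskian criterion over a prime field, and shows that whenever the full $M\times s$ matrix $S(\alpha)$ has dependent columns, $\alpha$ is a root of $H$ with multiplicity at least $M-c+1$. The reason: for each $\ell < M-c+1$, the Hasse derivative $H^{[\ell]}(\alpha)$ expands, via the product rule and the identity $(B_i^{[j]})^{[\ell']}=\binom{j+\ell'}{\ell'}B_i^{[j+\ell']}$, as an $\F_q$-linear combination of $s\times s$ minors of $S(\alpha)$ (the row indices shift by at most $\ell$, so they stay within $\{0,\ldots,M-1\}$), and all such minors vanish. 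Since $\deg H$ is at most about $ck$, dividing by the guaranteed multiplicity $M-c+1$ gives the claimed bound. So the saving comes from \emph{high multiplicity at each bad point}, not from spreading out the row indices, and no cancellation argument is needed.
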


\begin{claim}
 The number of $\alpha \in \F_q$ for which $v(\alpha) \in W$ is at most $c$.
\end{claim}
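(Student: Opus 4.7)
The plan is to unpack the condition $v(\alpha)\in W$ into a single nonzero univariate polynomial equation in $\alpha$ of degree at most $c-1$, and then apply the standard bound on the number of roots. First I would translate the definitions: by definition of $W$, the condition $v(\alpha) = (1,\alpha,\alpha^{2},\ldots,\alpha^{c-1}) \in W$ means $\sum_{i=0}^{c-1} \alpha^{i}\, A_i(X) = 0$ as an element of $\F_q[X]$; equivalently, each coefficient (with respect to $X$) on the left vanishes. Writing $A_i(X) = \sum_{j} a_{i,j} X^{j}$ with $a_{i,j}\in\F_q$, this is the simultaneous system $\sum_{i=0}^{c-1} a_{i,j}\, \alpha^{i} = 0$, ranging over all $j \geq 0$.

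Next I would pick off just one of these equations to witness nonvanishing. Since $A(X) = \sum_{i=0}^{c-1} A_i(X) X^{iq}$ is nonzero, some $A_{i_0}$ is nonzero, and hence there is some index $j^{*}$ with $a_{i_0, j^{*}} \neq 0$. Then the polynomial $P(Y) := \sum_{i=0}^{c-1} a_{i, j^{*}} Y^{i} \in \F_q[Y]$ is a nonzero polynomial of degree at most $c-1$, and every $\alpha\in\F_q$ with $v(\alpha)\in W$ must in particular satisfy $P(\alpha) = 0$. Since a nonzero polynomial of degree at most $c-1$ has at most $c-1$ roots in $\F_q$, the total count is at most $c-1\leq c$, proving the claim.

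I do not anticipate any real obstacle: the argument is essentially a one-line reduction to univariate root-counting. The only thing to be careful about is the choice of the witnessing index $j^{*}$, and that is immediate from $A\neq 0$; in particular, one should resist the temptation to work with all of the scalar polynomials $P_j(Y) := \sum_i a_{i,j} Y^{i}$ simultaneously, since a single nonzero $P_{j^{*}}$ already suffices to kill all but at most $c-1$ values of $\alpha$. Note also that the argument uses nothing about the pseudodegree hypothesis on $A$, only that $A$ is nonzero with $X^q$-adic expansion supported on $\{0,1,\ldots,c-1\}$, which is consistent with the fact that Claim 2 is a purely linear-algebraic companion to Claim 1 in the proof of Theorem~\ref{thm:pseudoSZ}.
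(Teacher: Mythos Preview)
Your argument is correct and is essentially the paper's proof spelled out explicitly: the paper simply invokes the fact that $\dim(W) < c$ (since $A \neq 0$) and that the moment curve $\alpha \mapsto v(\alpha)$ in $\F_q^c$ meets any proper subspace in at most $c$ points, which is exactly what your choice of a nonzero linear functional (the $j^*$-th coefficient) and subsequent root-counting establish. Your bound of $c-1$ is even slightly sharper than the $c$ stated in the claim, but the claim only asks for $c$.
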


The latter claim simply follows from the fact that $\dim(W) < c$, and that the moment curve in $\F_q^c$ does not intersect any proper subspace in more than $c$ points.

It remains to prove the first claim.

Let $I \subseteq \{0,1,\ldots, c-1\}$ be the maximal subset with $\left( A_i(X) \right)_{i \in I}$ linearly independent over $\F_q$. Then $|I| + \dim(W) = c$, and $|I| \geq 1$.

Let $S(X)$ denote the submatrix of $R(X)$ consisting
of those columns indexed by $I$. 

Now suppose $\alpha$ is such that the columns of $S(\alpha)$ are linearly independent. We get that $ \rank(R(\alpha)) = \rank(S(\alpha)) = |I|$, and thus $\dim(\ker(R(\alpha))) =  c - |I| =  \dim(W)$.
This means that $\ker(R(\alpha)) = W$.

To finish the proof, we show that there can be at most $\frac{c}{M-c+1} \cdot k$ values of $\alpha$ for which the columns of $S(\alpha)$ are linearly dependent.
This is simple case of the argument from Theorem 17 of~\cite{GK-subspace-design} (see also~\cite{BCDZ25}). 

Now we use the primality of $q$.
By the Wronskian criterion for linear independence (see~\cite{BD10-wronskian}), the top $|I| \times |I|$ submatrix of $S(X)$ has nonzero determinant $H(X)$.
If $\alpha$ is such that the columns of $S(\alpha)$ are linearly dependent, then
$\alpha$ is a root of multiplicity at least $M-c+1$ of $H(X)$. This is because for each $\ell < M-c+1$, $H^{[\ell]}(\alpha)$ is a linear combination of determinants\footnote{Here we use the product rule for Hasse derivatives, the multilinearity of the determinant, and the formula for iterated Hasse derivatives: $(F^{[i]})^{[j]}(X) = {i + j \choose j} F^{[i+j]}(X)$.} of $|I| \times |I|$ 
submatrices of $S(\alpha)$, which are all $0$.
But $H(X)$ is a nonzero polynomial of degree at most $ck$. 
Thus there can be at most $\frac{c}{M-c+1}\cdot k$ such $\alpha$.

This completes the proof of the first claim, and with that, the theorem.
\end{proof}

\newpage

\bibliographystyle{alpha}  
\bibliography{references}

\appendix
\section{Algebraic characterization of the pseudodegree}
\label{sec:alg-char}

We now prove the algebraic characterization of the pseudodegree.

\algchar*

\begin{proof}
Using the product rule for Hasse derivatives:
 \begin{align*}
 A^{[\ell]}(X) &= \sum_{i \geq 0} \left( A_i \cdot \Lambda^i \right)^{[\ell]}(X)\\
 &= \sum_{i\geq 0}  \left( \sum_{\substack{\ell_0, \ell_1, \ldots, \ell_i\\\sum \ell_j = \ell}} A_i^{[\ell_0]}(X) \cdot \prod_{j=1}^{i}\Lambda^{[\ell_j]} (X) \right)\\
 \end{align*}
 
Reducing these terms mod $\Lambda(X)$,
the only terms that survive have all
$\ell_1, \ldots, \ell_i \in \{1,q\}$,
and thus $\ell \geq i$ and $\ell_0 \in [0, \ell - i]$.

So
\begin{align}
 \nonumber
A^{[\ell]}(X) \mod \Lambda(X) &=  \sum_{i=0}^{\ell}  \left( \sum_{\substack{\ell_1, \ldots, \ell_i \in \{1,q\}\\
\nonumber
\ell_0 = \ell - (\ell_1 + \ldots + \ell_i)}} A_i^{[\ell_0]}(X) \cdot \prod_{j=1}^{i}\Lambda^{[\ell_j]} (X) \right)\\
&= \sum_{i = 0}^{\ell} \sum_{\ell_0 =0}^{\ell-i}  A_i^{[\ell_0]}(X)  \cdot \left( \sum_{\substack{\ell_1, \ldots, \ell_i \in \{1,q\}\\
\nonumber
(\ell_1 + \ldots + \ell_i) = \ell - \ell_0}} \prod_{j=1}^{i}\Lambda^{[\ell_j]} (X) \right)\\
\nonumber
&= \sum_{i =0}^{\ell} \sum_{\ell_0 =0}^{\ell-i}  A_i^{[\ell_0]}(X)  \cdot  C_{i, \ell, \ell_0,q} \quad \mbox{for some integer $C_{i, \ell_0, \ell, q}$} \\
\label{eq:pder-formula}
&= (-1)^\ell A_\ell(X)  + \sum_{i=0}^{\ell-1} \sum_{\ell_0 =0}^{\ell-i} A_i^{[\ell_0]}(X) \cdot C_{i, \ell, \ell_0, q} \quad \mbox{isolating the $i = \ell, \ell_0 = 0$ term}
\end{align}

Thus if all the $A_i$ have degree at most $k$, 
we get that $A^{[\ell]}\mod \Lambda(X) = A_{\langle \ell \rangle}(X)$ has degree at most $k$.

In the reverse direction, the above formula lets us express $A_\ell(X)$ as $\pm A_{\langle \ell \rangle}(X) + $ a linear combination of $A_{i}^{[j]}$,
with $i < \ell$. By induction on $\ell$, we get that $\deg(A_\ell)$ has degree at most
$\max_{i \leq \ell} \deg(A_{\langle i \rangle})$.
 
\end{proof}

\section{$m$-th power residue character}

Let $q$ be a prime power.
Let $m$ be a prime dividing $(q-1)$.
Let $\Gamma_m$ be the set of $m$'th roots of $1$ in $\F_q$.
Let $\chi_m : \F_q \to \{0\} \cup \Gamma_m$ be the ($\F_q$-valued) $m$-th power residue character
given by: $\chi_m(\alpha) = \alpha^{(q-1)/m}$.

Let $g(X)$ be monic of degree at most $d \leq O_m(\epsilon \sqrt{q})$, with each irreducible factor appearing with multiplicity $\in \{1,2, \ldots, m-1\}$.

The following algorithm can efficiently find $g(X)$ given an $r : \F_q \to \{0\} \cup \Gamma_m$ with $\Delta(\chi \circ g, r)$ being at most a small enough constant fraction of $q$. As in the case of $m=2$, this decoding radius falls short of the unique decoding radius, which is about $\frac{1}{2}\left( 1 - \frac{1}{m} \right)q$.

{\noindent \bf Algorithm A$_m$:}\\
{\noindent Parameters: degree $d \leq O(\frac{\epsilon}{m}\sqrt{q})$, error-bound $e \leq \left( \frac{1}{12} - \epsilon \right) q$.}\\
{\noindent Input $r: \F_q \to \{0\} \cup \Gamma_m$ }
\begin{enumerate}
 \item Set
 \begin{itemize}
  \item $M = \frac{16}{\epsilon}dm$
  \item $c = \frac{M}{2}$
  \item $h = 2 \cdot e$.
  \item $D = d \cdot  \left( (q-1) \cdot \frac{m-1}{m} + M \right) + cq = (1 + O(\epsilon)) \cdot \frac{1}{2} \cdot M \cdot q$,
  \item $u = h + (m-1)dM = 2e + O(\frac{m d^2}{\epsilon} )$.
 \end{itemize}
 
 \item For each $\nu \in \{1, 2, \ldots, m-1\}$:
 \begin{itemize}
 \item
 Solve an $\F_q$ system of linear equations to find polynomials
 $F_{\nu}(X), U_{\nu,0}(X), \ldots, U_{\nu,M-1}(X) \in \F_q[X]$, not all zero,
 with:
 \begin{itemize}
 \item $ \deg(F_\nu) \leq D$,
 \item for each $\ell$, $\deg(U_{\nu,\ell}) \leq u$,
 \item For all $\alpha \in \F_q$, $0 \leq \ell < M$:
 \begin{align}
 \label{eq:linsyst-m}
  F_{\nu}^{[\ell]}(\alpha)  = r^{\nu}(\alpha) \cdot U_{\nu,\ell}(\alpha).
 \end{align}

 \end{itemize}
 \end{itemize}
 \item Factor $F_{\nu}(X)$ into irreducible factors:
 $$ F_{\nu}(X) = \lambda_{\nu} \prod_j  H_j^{\mu_{\nu,j}}(X),$$
 where the $H_j(X) \in \F_q[X]$ are distinct and monic irreducible polynomials, and $\lambda \in \F_q^*$.
 \item For each $j$, define $\mu_j \in \{0,1,\ldots, m-1\}$ to be the unique number such that for all $\nu \in \{1, \ldots, m-1\}$, we have:
 $$ \mu_{\nu, j} - \frac{\nu \cdot \mu_j }{m} \cdot (q-1) \in   \left(-\left(\frac{1}{12} - \epsilon\right)q, \quad \left(\frac{3}{12} - \epsilon \right) q  \right) \mod q.$$
 \label{getJstep-m}
 \item Set
 $$f(X) = \prod_{j} H_j^{\mu_j}(X).$$
 \item Return $f(X)$.
\end{enumerate}

The analysis is very similar to the analysis in Section~\ref{sec:chi}.

Let $G(X) = g^{(q-1)/m + M}(X)$, which is a $(\chi_m \circ g)$-twisted $(O(dM), M)$-pseudopolynomial.

To show that the linear equations have a solution, we first interpolate a high multiplicity error-locator pseudopolynomial $E(X)$ with $\deg(E) < cq$ and $\pdeg(E) < h$. Then we show that taking $F_\nu = E \cdot G^{\nu}$ yields a solution to Equations~\eqref{eq:linsyst-m}.

Next we show that whatever $F_\nu$ the algorithm finds is related to $G$.
Exactly as in Section~\ref{sec:chi}, we find polynomials $A_{\nu}(X), B_\nu(X)$
with $\deg(A_\nu) < \frac{3}{8} Mq$, $\deg(B_\nu) < \frac{7}{8} Mq$,
and $\pdeg(A_\nu) \leq e$, $\pdeg(B_\nu) \leq 3e + O(dM)$ such that:
$$A_\nu(X) \cdot F_\nu(X) = B_\nu(X) \cdot G^\nu(X).$$

From these equations, we will be able to extract all the irreducible factors of $g(X)$.
Let $H(X)$ be an irreducible polynomial, and let $\mu_{A_\nu}, \mu_{F_\nu}, \mu_{B_\nu}, \mu_{G}, \mu_g$ be its factor multiplicity in $A_\nu, F_\nu, B_\nu, G, g$ respectively.
Then the above identity gives us:
$$\mu_{A_\nu} + \mu_{F_\nu} = \mu_{B_{\nu}} + \nu \cdot \mu_{g}  \cdot \frac{q-1}{m}.$$
Thus for all $\nu$:
$$\mu_{F_\nu} - \nu \cdot \mu_g \cdot \frac{q-1}{m}  = \mu_{B_{\nu}} - \mu_{A_{\nu}} \in q \cdot \Z + \left[-e ,  3e + O(dM)  \right] \subseteq q\cdot \Z + \left( -\left(\frac{1}{12} - \epsilon\right) q, \left(\frac{3}{12} - \epsilon\right)q\right).$$
Thus $\mu_g$ is a solution found in Step~\ref{getJstep-m} of the algorithm.
Why is there no other solution? Suppose $\mu' \in \{0,1,\ldots, m-1\}$ is distinct from $\mu_g$ and satisfies:
$$\mu_{F_\nu} - \nu \cdot \mu' \cdot \frac{q-1}{m} \in  q\cdot \Z +  \left( -\left(\frac{1}{12} - \epsilon\right) q, \;\left(\frac{3}{12} - \epsilon\right)q\right)$$
for all $\nu$.
Then subtracting the two equations and setting $\mu'' = \mu' - \mu_g$, we get that for all $\nu \in \{1, \ldots, m-1\}$:
$$ \nu \cdot \mu'' \cdot \frac{q-1}{m} \in q \cdot \Z + \left(-\left(\frac{1}{3} - 2\epsilon\right) q, \; \left(\frac{1}{3} - 2\epsilon\right) q\right).$$
This is impossible: for any $\mu'' \in \{1,\ldots, m-1\}$, there is a $\nu \in \{1, \ldots, m-1\}$ such that $\frac{\mu'' \cdot \nu}{m} \in \Z + \left[\frac{1}{3}, \frac{2}{3}  \right] $.

\section{Additive characters over fields of small characteristic $p$}

We only give the analogue of the main subroutine, Algorithm B, from Section~\ref{sec:tr}.

Let $q = p^b$ with $p$ prime.
Let $\Tr: \F_q \to \F_p$ be the field trace map.

Let $g(X)$ be a polynomial of degree at most $d \leq O_m(\epsilon \sqrt{q})$, with only monomials of degree relatively prime to $m$.

The following algorithm can efficiently find the leading coefficient of $g(X)$ given an $r : \F_q \to \F_p$ with $\Delta(\Tr \circ g, r)$ being at most a small enough constant fraction of $\frac{1}{p} q$. For $p = O(1)$ this is a constant fraction of $q$, but for growing $p$ it is far from satisfactory. The distance of the underlying code is about $(1- \frac{1}{p}) q$ (it actually grows with $p$), and there ought to be an efficient unique decoding algorithm to radius half of that.

{\noindent \bf Algorithm B$_p$:}\\
{\noindent Parameters: degree $d \leq O_p(\epsilon \sqrt{q})$ (an integer not divisible by $p$), error-bound $e \leq \left( \frac{1}{4 p}  - \epsilon \right) q$.}\\
{\noindent Input $r: \F_q \to \F_p$ }
\begin{enumerate}
  
 \item Set
 \begin{itemize}
  \item $M = \frac{16}{\epsilon}d$
  \item $c = \frac{1}{2} \cdot M$
  \item $h = 2 \cdot e$
  \item $u = h + dM = 2e + O(\frac{ d^2}{\epsilon} )$.
 \end{itemize}

 \item\label{Blinsyst1-p} For each $h^* \in [0,h]$
 \begin{itemize}
 \item Try to solve a system of $\F_q$-linear equations to find polynomials
 $E(X)$, $F(X)$,\\
 $(U_{\ell}(X))_{\ell \in [0,M-1]}$,
 where:
 \begin{itemize}
  \item $F(X)$ has degree exactly $(d/p + c)q + h^*$.
  \item $E(X)$ has degree exactly $cq + h^*$, and is of the form:
  $$ \sum_{i=0}^{c} E_i(X) \Lambda^{i}(X),$$
  where $\deg(E_i) \leq h$.
  \item each $U_{\ell}(X)$ has degree at most $u$.
 
 \item For all $\alpha \in \F_q$, $0 \leq \ell < M$:
 \begin{align}
 \label{eq:Blinsyst1-p}
  F^{[\ell]}(\alpha)  = r(\alpha) \cdot E^{[\ell]}(\alpha) +  U_{\ell}(\alpha),
 \end{align}

 \end{itemize}
 \end{itemize}
 
 \item If no such $h^*$ exists, return $0 \cdot X^d$.
 \item Otherwise take one solution $E(X), F(X)$,
 with degrees $D_E, D_{F}$ respectively.
 \item Let $a\cdot X^{D_{F}}$, $b \cdot X^{D_E}$ be the leading monomials of $F(X), E(X)$ respectively.
 
 Set $a_d = (a/b)^p$.
 
 \item Return $a_d X^d$.
 \end{enumerate}

The proof of correctness is essentially identical to the proof of Theorem~\ref{thm:B} in Section~\ref{sec:tr}, and is omitted.
 
\end{document}